\newtheorem{theorem}{Theorem}
\newtheorem{lemma}[theorem]{Lemma}
\newtheorem{claim}[theorem]{Claim}
\newtheorem{prop}[theorem]{Proposition}
\newtheorem{definition}[theorem]{Definition}
\newtheorem{observation}[theorem]{Observation}
\newcommand{\poly}{\mathtt{poly}}
\newcommand{\cF}{\ensuremath{\mathcal{F}}}
\newcommand{\cA}{\ensuremath{\mathcal{A}}}
\newcommand{\cK}{\ensuremath{\mathcal{K}}}
\newcommand{\cc}{\ensuremath{\mathrm{cc}}}
\newcommand{\tw}{\ensuremath{\mathrm{tw}}}
\newcommand{\best}{\ensuremath{\mathsf{Best}}}
\newcommand{\Ints}{\mathbb{Z}}
\renewcommand{\to}{\rightarrow}
\newcommand{\lb}{\llbracket}
\newcommand{\rb}{\rrbracket}
\newcommand{\ohs}{\ensuremath{O^*}}
\DeclareMathOperator{\dist}{dist}
\newcommand{\pGT}{{\sc Grid Tiling}}
\newcommand{\pST}{{\sc Steiner Tree}}
\newcommand{\pPST}{{\sc Planar Steiner Tree}}
\newcommand{\pSAT}{{\sc 3-SAT}}
\newcommand{\pPBSF}{{\sc Planar Block Steiner Forest}}
\newcommand{\algcomment}[1]{\colorbox{black!10}{#1}}
\newcommand{\knip}[1]{#1}
\newlength{\problemoffset}
\newcommand{\optimization}[3]{
		\begin{list}{}{
				\setlength{\leftmargin}{\problemoffset}
				\setlength{\rightmargin}{\problemoffset}
				\setlength{\parsep}{0pt}
				\setlength{\itemsep}{2pt}
				\setlength{\topsep}{\itemsep}
				\setlength{\partopsep}{\itemsep}
			}
			\item
			{\textbf{#1}}
			\item
			{\textbf{Instance}: #2}
			\item
			{\textbf{Asked}: #3}
		\end{list}
}
\title{Nearly ETH-Tight Algorithms for {\pPST} with Terminals on Few Faces}
\author{
	S\'andor Kisfaludi-Bak\thanks{Eindhoven University of Technology, The Netherlands; \url{s.kisfaludi.bak@tue.nl}. Supported by the Netherlands Organization for Scientific Research NWO under project no. 024.002.003.}\quad
	Jesper Nederlof\thanks{Eindhoven University of Technology, The Netherlands; \url{j.nederlof@tue.nl}. Supported by the Netherlands Organization for Scientific Research NWO under project no. 024.002.003 and no. 639.021.438.}\quad
	Erik Jan van Leeuwen\thanks{Utrecht University, The Netherlands; \url{e.j.vanleeuwen@uu.nl}.}
}
\begin{document}
	\maketitle
\thispagestyle{empty}
\begin{abstract}
	The {\pST} problem is one of the most fundamental NP-complete problems as it models many network design problems.
	Recall that an instance of this problem consists of a graph with edge weights, and a subset of vertices (often called terminals); the goal is to find a subtree of the graph of minimum total weight that connects all terminals.
	A seminal paper by Erickson et al. [Math. Oper. Res., 1987] considers instances where the underlying graph is planar and all terminals can be covered by the boundary of $k$ faces. Erickson et al. show that the problem can be solved by an algorithm using $n^{O(k)}$ time and $n^{O(k)}$ space, where $n$ denotes the number of vertices of the input graph. In the past 30 years there has been no significant improvement of this algorithm, despite several efforts.
	
	In this work, we give an algorithm for {\pPST} with running time $2^{O(k)} n^{O(\sqrt{k})}$ using only polynomial space. Furthermore, we show that the running time of our algorithm is almost tight: we prove that there is no $f(k)n^{o(\sqrt{k})}$ algorithm for {\pPST} for any computable function $f$, unless the Exponential Time Hypothesis fails.
\end{abstract}
\clearpage
\setcounter{page}{1}
\section{Introduction}
\newcommand{\maxweight}{W}
In the {\pST} problem, we are given an undirected $n$-vertex graph $G$ with edge weights\footnote{For convenience, we assume all comparisons and additions of weights take $O(1)$ time.} $\omega: E(G) \rightarrow \{0,\ldots,\maxweight\}$ and a set of \emph{terminals} $T \subseteq V(G)$.
We are asked to find an edge set $S$ (called a \emph{Steiner tree}) minimizing $\sum_{e \in S}\omega(e)$ such that every two vertices $u,v \in T$ are connected in the graph $(V,S)$. 
The problem is one of the most important NP-complete problems as it elegantly models network design problems. Several textbooks are entirely devoted to Steiner trees~\cite{Du:2008:STP:1628718, promel2012steiner}.
\paragraph{Parameterization by Number of Terminals}
A very popular research direction that aims to understand the computational complexity of {\pST} is to consider its parameterization by the number of terminals of the instance. Dreyfus and Wagner~\cite{DBLP:journals/networks/DreyfusW71} and independently by Levin~\cite{levin} initiated this line of research and showed that the problem can be solved in $3^{|T|}\poly(n)$ time.\footnote{In this paper we use the $\ohs(\cdot)$ which omits factors polynomial in the input size.}
Thus, in the language of \emph{parameterized complexity}, Dreyfus and Wagner show the problem is Fixed Parameter Tractable when parameterized by $|T|$. 
Fuchs et al.~\cite{DBLP:journals/mst/FuchsKMRRW07} improved this result to $\ohs(c^{|T|})$ for any $c>2$. In the case of small weights, Bj\"orklund et al.~\cite{Bjorklund:2007:FMM:1250790.1250801} provide a faster $\ohs(2^{|T|}\maxweight)$ time algorithm.
All aforementioned algorithms require almost as much working memory as time. However, the setting in which one is given only working memory that is polynomial in the input size has also been well-studied~\cite{DBLP:journals/algorithmica/FominGKLS13, DBLP:conf/icalp/FominKLPS15,DBLP:conf/stoc/LokshtanovN10, DBLP:journals/algorithmica/Nederlof13}. The currently fastest polynomial-space algorithms run in $\ohs(2^{|T|}W)$ time~\cite{DBLP:conf/stoc/LokshtanovN10} and $\ohs(7.97^{|T|})$ time \cite{DBLP:conf/icalp/FominKLPS15}.
\paragraph{Planar Steiner Tree}
Another very popular direction is to study {\pST} restricted to \emph{planar graphs}. The study of approximation schemes for {\pPST} (and many variations and generalizations of it) has been a well-established subject for a long time~\cite{DBLP:conf/soda/BorradaileKK07,DBLP:conf/wads/BorradaileKM07,Bateni:2011:ASS:2027216.2027219,DBLP:conf/soda/BateniCEHKM11}. More recently, our understanding of the exact exponential-time complexity of {\pPST} has also progressed significantly. Some positive results study the decision variant of the unweighted case of {\pPST}, and its parameterization by $|S|$ (the size of the required Steiner tree). Pilipczuk et al.~\cite{DBLP:conf/focs/PilipczukPSL14} show that one can preprocess the input instance in polynomial time to remove all but $O(|S|^{142})$ edges. Pilipczuk et al.~\cite{DBLP:conf/stacs/PilipczukPSL13} (and later, Fomin et al.~\cite{DBLP:conf/focs/FominLMPPS16}) show the problem can be solved in $\ohs(2^{\sqrt{|S| \log^2 |S|}})$ time. The square-root in the exponent is typical for exact algorithms for problems on planar graphs (intuitively, due to the planar separator theorem), and is often called the `square-root phenomenon'. However, such a running time is not always guaranteed. Very recently, it was shown that when parameterized by the number of terminals $|T|$, planarity probably gives little advantage over the algorithm of Dreyfus and Wagner in the following strong sense: if {\pPST} can be solved in $\ohs(2^{o(|T|)})$ time, then the Exponential Time Hypothesis fails~\cite{MarxPP17}.
\paragraph{Planar Steiner Tree with Terminals on Few Faces}
A broadly studied variant of {\pPST} is obtained by making assumptions on the locations of the terminals. Such natural assumptions are also studied extensively in e.g.\ the classic flow paper by Ford and Fulkerson~\cite{Ford-Fulkerson}. Of particular interest is the case when all terminals lie on $k$ given faces of the plane-embedded input graph $G$.
This parameter has a long history in the study of cuts and (multicommodity) flows (e.g.~\cite{DBLP:journals/siamcomp/MatsumotoNS85,DBLP:journals/algorithmica/ChenW03,DBLP:journals/corr/KrauthgamerR17,KrauthgamerLR19}) and shortest paths (e.g.~\cite{Frederickson91,Chen00}).
Krauthgamer et al.~\cite{KrauthgamerLR19} (in this SODA) dubbed it the \emph{terminal face cover number} $\gamma(G)$. The case $\gamma(G) = 1$ is known as an Okamura-Seymour graph~\cite{OkamuraS1981}.
For {\pPST}, the parameterization by $k = \gamma(G)$ generalizes the parameterization by $|T|$, as we can always ensure that $k \leq |T|$. Hence, this parameterization generalizes both previous research directions.


An important result by Erickson et al.~\cite{DBLP:journals/mor/EricksonMV87} shows that the problem can be solved in $n^{O(k)}$ time. 
Their algorithm for $k=1$ arises in both the aforementioned approximation algorithms (i.e.\ in spanner constructions~\cite{DBLP:conf/soda/BorradaileKK07}) and fixed-parameter algorithms (i.e.\ in  preprocessing algorithms~\cite{DBLP:conf/focs/PilipczukPSL14}). Hence, the algorithm plays a central role in the literature on {\pPST}.


The quest to improve, refine, and generalize the result by Erickson et al.~\cite{DBLP:journals/mor/EricksonMV87} received significant attention. Bern~\cite{thesisbern,doi:10.1002/net.3230200110} improved the constant in the exponent of the running time of~\cite{DBLP:journals/mor/EricksonMV87} to $2$, and gave a better running time if many terminals do not share any face with other terminals. Bern and Bienstock~\cite{DBLP:journals/anor/BernB91} studied a generalization in which the terminals can be removed by removing $k$ consecutive outerplanar layers. Provan~\cite{DBLP:journals/siamcomp/Provan88,DBLP:journals/networks/Provan88} studied generalizations in which covering by faces is replaced with covering by `path-convex regions', motivated by some problems in geometry. For an excellent survey of previous work on {\pPST} with terminals on a few faces we refer to~\cite[Chapter 5]{annalsofDMsteinertree}, or to~\cite{Cheng2000}.

Despite these previous studies going back over 30 years, all previously known algorithms use $n^{\Omega(k)}$ time, and the algorithms matching this time bound use $n^{\Omega(k)}$ space.
From a lower bound perspective, the result by Marx et al.~\cite{MarxPP17} implies that no $\ohs(2^{o(k)})$-time algorithm exists assuming the Exponential Time Hypothesis (as we can always ensure that $k \leq |T|$). However, this still leaves a large gap between the lower and the upper bound. This leads to the natural question what the true computational complexity is of {\pPST} with terminals on $k$ faces.

\subsection{Our Results}
In this work we almost settle the exact complexity of {\pPST} parameterized by the number of faces needed to cover the terminals, modulo the Exponential Time Hypothesis. First, we show that the algorithm of Erickson et al.~\cite{DBLP:journals/mor/EricksonMV87} can be significantly improved:

\begin{theorem}\label{thm:upperbound}
	Given a plane $n$-vertex graph $G$ with terminals $T$, edge weights $\omega:E(G) \rightarrow \{0,\ldots,W\}$ and a set $\cK \subseteq 2^{E(G)}$ of $k$ faces of $G$ such that each vertex from $T$ is on a face in $\cK$, a minimum weight Steiner tree can be found using $2^{O(k)}n^{O(\sqrt{k})}$ time, and polynomial space.
\end{theorem}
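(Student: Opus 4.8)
The plan is to combine a treewidth-style dynamic program over a carefully chosen set of separating curves with the ``square-root phenomenon'' that governs planar problems. First I would reduce to the case where the $k$ faces are mutually disjoint simple faces and where $G$ is triangulated outside of them, so that each face $F \in \cK$ is bounded by a cycle $\sigma_F$ carrying the terminals on $F$; standard contraction and edge-subdivision arguments make this reduction weight-preserving and blow up $n$ only polynomially. The key structural idea is that an optimal Steiner tree $S$, after cutting the plane along the $k$ face boundaries, decomposes the sphere into regions, and the way $S$ crosses a system of $O(k)$ ``portal curves'' connecting the faces is what the DP must track. I would set up a \emph{Steiner-tree-compatible sphere cut decomposition} (in the spirit of the branch decompositions used in~\cite{DBLP:conf/stacs/PilipczukPSL13}) whose middle sets are noose curves meeting $G$ in few vertices; the crucial point is to build this decomposition so that each face of $\cK$ is ``swallowed'' by a single leaf bag, which is possible because contracting each $\sigma_F$ to a single vertex yields a planar graph of branchwidth $O(\sqrt n)$, and then uncontracting replaces that vertex by a cycle.

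Next I would run a branching-and-DP over this decomposition. At a noose of size $w$, a partial solution is described by how the candidate tree intersects the $\le w$ vertices on the noose, together with a partition recording which of those vertices are required to be connected on the outside; this is the standard rank-based / cut-and-count bookkeeping, and the number of relevant states on a noose of size $w$ is $2^{O(w)} \cdot w^{O(1)}$, or $w^{O(1)}$ if one uses the representative-sets machinery to keep the table polynomial. The subtlety introduced by the faces is the \emph{connectivity pattern among the $k$ terminal groups}: a terminal on face $F_i$ may be connected through the interior of another face $F_j$ or through Steiner vertices in the bulk. To handle this I would, before the DP, guess the ``global shape'' of $S$ as it relates to the faces: partition the $k$ faces into groups and, for each face, guess the cyclic pattern in which $S$ touches $\sigma_F$ (a non-crossing partition of the $O(|T|)$ boundary positions, but it suffices to record a non-crossing partition of $O(1)$ representative arcs per face, since between two consecutive ``anchor'' points $S$ behaves like a shortest path that can be computed independently). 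The number of such global shapes is $2^{O(k)}$ — this is exactly where the $2^{O(k)}$ factor in the running time comes from — and for each fixed global shape the remaining problem is an ordinary planar Steiner-forest-type instance solvable by the $n^{O(\sqrt k)}$-time DP above because the ``demand'' structure now lives on $O(k)$ canonical points distributed over the sphere.

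To get polynomial space I would not store the DP table explicitly; instead I would use the now-standard technique of expressing the optimum as a sum/min over a polynomial-size ``counting'' object (cut-and-count over a random weight function, or a Möbius/inclusion–exclusion formulation as in~\cite{DBLP:conf/stoc/LokshtanovN10,DBLP:conf/icalp/FominKLPS15}) and recompute subproblems recursively, trading the exponential space for a polynomial blow-up in time that is absorbed into the $2^{O(k)}$ and $n^{O(\sqrt k)}$ factors. Concretely, the branch decomposition of width $O(\sqrt n)$ has depth $O(\log n)$, so a recursive evaluation of the DP visits $n^{O(1)}$ nodes per path, each node carrying a state of size $O(\sqrt n \log W)$, and the connectivity requirement is enforced via cut-and-count so that each node's value is a simple arithmetic combination of its children's values — no set of partial solutions needs to be materialized.

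The main obstacle I expect is controlling the interaction between the face boundaries and the branch decomposition simultaneously: I need a single decomposition whose width is $O(\sqrt n)$ (for the square-root exponent), in which \emph{each} terminal face is confined to one leaf (so the per-face guessing is local), \emph{and} whose middle sets are nooses compatible with planar Steiner-tree DP. Proving such a decomposition exists — essentially a ``sphere-cut decomposition of the graph obtained by contracting each $\sigma_F$, lifted back'' — and bounding the number of face-interaction patterns by $2^{O(k)}$ rather than $k^{O(k)}$ or $n^{O(k)}$ is the technical heart of the argument; everything downstream is an adaptation of known planar-DP and polynomial-space techniques.
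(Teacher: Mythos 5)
There is a genuine gap at the heart of your proposal: it never actually achieves separators (or nooses) of size $O(\sqrt{k})$. You propose to contract each face boundary $\sigma_F$ to a point and build a sphere-cut decomposition of the resulting planar graph, observing that it has branchwidth $O(\sqrt{n})$. But \emph{every} planar graph has branchwidth $O(\sqrt{n})$; this bound is not tied to $k$ in any way, and a DP with $2^{O(w)}$ states per noose of width $w = O(\sqrt{n})$ runs in $2^{O(\sqrt{n})}$ time, which is completely different from (and for small $k$ much worse than) the claimed $n^{O(\sqrt{k})}$. When you later assert that ``the remaining problem is an ordinary planar Steiner-forest-type instance solvable by the $n^{O(\sqrt k)}$-time DP above because the demand structure now lives on $O(k)$ canonical points,'' there is no mechanism in your construction that shrinks the nooses from $O(\sqrt{n})$ down to $O(\sqrt{k})$. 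Having $O(k)$ anchor points on the sphere does not, by itself, bound the width of the decomposition of $G$ in terms of $k$.

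The idea that makes this work in the paper is to look for a small balanced separator \emph{not in the input graph $G$}, and not even in a contraction of $G$, but in the unknown auxiliary graph $H = S \cup \cK^\flat$ consisting of the (hypothetical) optimal solution $S$ together with the union of the $k$ face boundaries. Since $S$ is a forest and $\cK^\flat$ is a union of $k$ cycles, the planar dual $H^*$ has a dominating set of size $k$ (the face-vertices corresponding to $\cK$), so $\tw(H^*) = O(\sqrt{k})$ and hence $\tw(H) = O(\sqrt{k})$ — a bound that depends on $k$, not $n$. The algorithm then \emph{guesses} the $O(\sqrt{k})$-vertex separator of $H$ as a subset of $V(G)$ (this is what costs $n^{O(\sqrt{k})}$ per level of a recursion of depth $O(\log k)$), guesses the split of boundary vertices and terminal faces across the two sides (costing $2^{O(k)}$), and recurses. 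No branch decomposition of $G$, no cut-and-count, and no sphere-cut nooses appear; polynomial space is immediate from the recursion depth and the fact that nothing is tabulated. Your proposal would need to be restructured around a $k$-dependent separator bound for the right graph — without that, the running time you can justify is $2^{O(\sqrt{n})} \cdot 2^{O(k)}$, not $2^{O(k)} n^{O(\sqrt{k})}$.

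A secondary issue: the $2^{O(k)}$ ``global shape'' guessing you sketch (non-crossing patterns per face, grouping faces) is never shown to be only $2^{O(k)}$ many options, and the paper's base case shows that already for $O(1)$ faces the number of relevant non-crossing assignments of terminals to blocks is $n^{O(1)}$, not $O(1)$, so the per-face structure cannot be compressed to $O(1)$ bits independent of $n$ in the way you suggest. In the paper this is handled inside the recursion and the base case (via Lemma~\ref{lem:basecase} and Erickson et al.'s $n^{O(k)}$-time algorithm for constant $k$), not by a one-shot global guess.
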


Observe that our algorithm uses only polynomial space, in contrast to all previous algorithms with a running time of the type $n^{O(k)}$.

We remark that we may assume that the planar embedding and faces are not a priori given, as already observed in previous work. Explicitly motivated by our setting, Bienstock and Monma~\cite{DBLP:journals/siamcomp/BienstockM88} showed that, given only the graph, one can find $k$ faces covering all terminals in some embedding in $2^{O(k)}\poly(n)$ time, if possible. Hence, we can simply run their algorithm on the input graph before applying Theorem~\ref{thm:upperbound} without affecting the bound on the running time.

We also remark that Marx et al.~\cite{MarxPP17} recently gave an algorithm for {\pPST} with running time $n^{O(\sqrt{|T|})}W$. Note that $k \leq |T|$ and $|T|$ can be arbitrary large when $k=1$, but nevertheless our result is incomparable to theirs because of the $2^{O(k)}$ factor in our running time.

\medskip
We complement our algorithm with a conditional lower bound that almost (that is, modulo the $2^{O(k)}$ factor) matches the running time of our algorithm:

\begin{theorem}\label{thm:lowerbound}
	There is no $f(k)n^{o(\sqrt{k})}$ algorithm for {\pPST} for any computable function $f$, unless the Exponential Time Hypothesis fails.
\end{theorem}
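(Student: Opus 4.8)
The plan is to reduce from {\pGT} (Grid Tiling), the canonical source of ``square-root phenomenon'' lower bounds for planar problems. Recall that a {\pGT} instance is an integer $N$ together with nonempty sets $S_{i,j}\subseteq[N]\times[N]$ for $(i,j)\in[k]\times[k]$, and one asks for a choice $s_{i,j}=(a_{i,j},b_{i,j})\in S_{i,j}$ with $a_{i,j}=a_{i,j+1}$ for all $i$ and all $j<k$ (row consistency) and $b_{i,j}=b_{i+1,j}$ for all $j$ and all $i<k$ (column consistency). It is a well-known consequence of ETH --- via the {\pSAT}-based $k$-Clique lower bound of Chen et al.\ composed with Marx's reduction to Grid Tiling --- that, unless ETH fails, there is no $f(k)\cdot N^{o(k)}$ algorithm for {\pGT} on a $k\times k$ grid, for any computable $f$. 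I would build, in time $\poly(N,k)$, an equivalent {\pPST} instance whose terminals can be covered by $\kappa=\Theta(k^2)$ faces; since then $o(\sqrt\kappa)=o(k)$, an $f(\kappa)\,n^{o(\sqrt\kappa)}$ algorithm for {\pPST} would solve {\pGT} in time $f(\Theta(k^2))\cdot\poly(N,k)^{o(k)}=g(k)\cdot N^{o(k)}$ (assuming $k\le N$, which is WLOG), contradicting ETH and thereby proving the theorem.

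I would carry this out in two steps, using {\pPBSF} (Planar Block Steiner Forest) as an intermediate: there one is given a plane edge-weighted graph together with a partition of the terminals into \emph{blocks}, and must find a minimum-weight edge set connecting each block internally, with distinct blocks needing no connection. \emph{Step 1, {\pGT}$\to${\pPBSF}:} lay the instance out as a $k\times k$ array of \emph{cell gadgets}, placing the gadget $H_{i,j}$ into tile $(i,j)$ of a coarse grid. Each $H_{i,j}$ exposes $N$ ``ports'' on each of its four sides and has an internal weighted structure designed so that every minimum-weight way to interconnect the terminals lying inside $H_{i,j}$ corresponds exactly to choosing one pair $(a,b)\in S_{i,j}$ and ``activating'' port $a$ on the left and right sides and port $b$ on the top and bottom. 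For each value $v\in[N]$, I place the $v$-th right port of $H_{i,j}$ and the $v$-th left port of $H_{i,j+1}$ into a common block, and symmetrically for vertically adjacent gadgets and the top/bottom ports. With the edge weights tuned so that the optimum forest equals the sum of the per-gadget optima plus the fixed cost of the inter-gadget connectors precisely when a globally consistent choice exists, this {\pPBSF} instance is a yes-instance iff the original {\pGT} instance is. It has size $\poly(N,k)$, and --- this is the point --- each gadget and each connector can be drawn with all of its terminals on a single face, so the terminal set is covered by $\Theta(k^2)$ faces.

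\emph{Step 2, {\pPBSF}$\to${\pPST}:} pick one representative terminal $r_B$ from each block $B$, add a planar ``backbone'' tree spanning $\{r_B\}_B$ routed through the structural regions of the layout, with all its edges of a tiny uniform weight $\eps$ (after scaling the original integer weights so that the whole backbone costs less than the smallest possible difference between two distinct forest costs), and then declare every original block-terminal to be a terminal of the {\pST} instance. Any Steiner tree of this instance must contain a connected subgraph spanning each block, so its weight is at least the {\pPBSF} optimum; conversely a minimum forest plus the backbone is a Steiner tree of weight at most the {\pPBSF} optimum plus the fixed, negligible backbone cost. Hence the two optima agree up to a known additive constant and the instances are equivalent, while the backbone introduces no new terminals, keeps the drawing planar, and --- if routed so that it does not separate the terminals within any gadget's face --- leaves the $\Theta(k^2)$-face cover intact. (If an unweighted statement is wanted, subdivide edges according to their weights; this is polynomial and harmless for the bound.)

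The main obstacle, I expect, is realizing the row and column consistency constraints simultaneously in the plane while staying within the ``few faces'' budget: each cell gadget must expose ports on all four sides and interact with up to four neighbours at once, so the internal routing and the cyclic order of the ports have to be arranged so that none of the forced block-connections are geometrically obliged to cross, and so that each gadget's and each connector's terminals genuinely fit on one face rather than being split among several (and the backbone must thread through without violating this). This is exactly where routing through {\pPBSF} pays off --- the freedom to leave distinct blocks disconnected removes the global tree constraint that would otherwise create unavoidable crossings --- and where the bulk of the work lies: fixing the weights, proving that an optimal forest has the asserted structure and cost, and verifying planarity of the combined drawing with terminals on $\Theta(k^2)$ faces. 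The remaining bookkeeping --- the $\poly(N,k)$ size bound and chasing the contradiction with ETH through the substitution $\kappa=\Theta(k^2)$ --- is routine.
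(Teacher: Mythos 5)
Your overall architecture is exactly the right template and matches the paper's: reduce from {\pGT} on a $k\times k$ grid, lay out a $k\times k$ array of gadgets so that the terminals are covered by $\Theta(k^2)$ faces, and run the arithmetic $\kappa=\Theta(k^2)\Rightarrow o(\sqrt\kappa)=o(k)$ against the $f(k)N^{o(k)}$ lower bound for {\pGT}. The ETH bookkeeping, the observation that $n=\poly(N,k)$, and the idea of going through a forest intermediate and then attaching a cheap spanning backbone to recover a single-tree ({\pST}) instance are all sound and in the spirit of what the paper does.

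However, the heart of the proof --- the gadget itself --- is specified only as a wish: you posit ``an internal weighted structure designed so that every minimum-weight way to interconnect the terminals \ldots corresponds exactly to choosing one pair $(a,b)\in S_{i,j}$,'' but do not construct it, and your surrounding block structure actually cuts against it. You propose one block per value $v\in[N]$ per adjacent gadget pair (the $v$-th right port of $H_{i,j}$ and the $v$-th left port of $H_{i,j+1}$). But in {\pPBSF} \emph{every} block must be internally connected, so all $N$ port-pairs on every gadget boundary must be joined, not just the ``activated'' one. Nothing in the sketch then explains what it means to ``activate'' a single port, how the forced connection of the other $N-1$ pairs avoids overwhelming (or outright contradicting) the intended choice, or how row and column consistency are enforced once every pair is joined anyway. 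This is precisely the subtle part the paper's construction is built to solve: its ``flower gadget'' is a graph with all terminals on \emph{one} face whose minimum-weight forests are exactly two trees that may split the terminal set \emph{arbitrarily} into two parts --- i.e.\ the gadget encodes a free binary partition rather than a ``choose one of $N$'' selection, and that flexibility is what makes the combination with planarity and the face budget tractable. So while your outline recognizes the right source problem, the right face count, and the right obstacles (four-sided port routing, planarity, backbone threading), the concrete gadget mechanism --- the part that would actually make the reduction work --- is missing, and the specific block scheme you do give appears inconsistent with the intended single-choice semantics. Until a working gadget with a correctness proof is supplied, the reduction does not go through.
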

In terms of parameterized complexity, this theorem implies that {\pPST} is $W[1]$-hard when parameterized by the number of terminal faces.

\subsection{Our Techniques}\label{sec:techniques}
We describe our techniques along with intuition and relationship to previous works.

\paragraph{Our Algorithm}
Before we sketch our algorithm, we sketch the previous work we build upon.
A simple observation behind the known exact algorithms for {\pST} (all the way back to~\cite{DBLP:journals/networks/DreyfusW71,levin}) is that any edge $e$ of the solution $S$ splits $S$ into two subtrees $S_1$ and $S_2$; if we know $e$ and which terminal is connected in which subtree, we can simply recursively solve the associated subproblems (or look up their solutions in a Dynamic Programming table). The number of candidates for $e$ and the split of the terminal set is $|E|\cdot 2^{|T|}$, which is (roughly) the running time of~\cite{DBLP:journals/networks/DreyfusW71,levin} and their refinements.

The algorithm by~\cite{DBLP:journals/mor/EricksonMV87} builds upon this scheme and additionally uses the following observation.
Suppose $T' = \{t_1,\ldots,t_p\} \subseteq T$ are terminals that lie on a single face numbered in cyclic order along the face, and $i<j<k<l$ such that $t_i$ and $t_k$ are connected in $S_1$ and $t_j$ and $t_l$ are connected in $S_2$, then $S_1$ and $S_2$ necessarily intersect by planarity and thus we can remove an edge and obtain another solution $S'$ with $\omega(S') \leq \omega(S)$.
Hence, we can restrict our attention to subproblems in which terminal sets form an interval on each face. The number of candidates for $e$ and such terminal sets is $|E|\cdot n^{2k}$, which is (roughly) the running time of~\cite{DBLP:journals/mor/EricksonMV87} and their refinements.

Our approach fits into the general scheme of guessing how separators based on a solution map into an input (see also, for example,~\cite{MarxPP17, DBLP:conf/esa/MarxP15}).
For the aimed running time $2^{O(k)}n^{O(\sqrt{k})}$, we cannot afford the above decomposition as $S_1$ and $S_2$ may interact in $k$ faces from $\cK$. Instead, we use a larger separator on $S$ to decompose $S$ in two forests $S_1$ and $S_2$ such that only a few faces from $\cK$ intersect both $S_1$ and $S_2$. To this end, our \emph{crucial idea} is to consider a separator in the graph $H= S\cup \cK^\flat$, where $\cK^\flat$ denotes the flattening $\cup_{K \in \cK}K$ of $\cK$. We show that $H$ has a (balanced) separator $X$ of size $O(\sqrt{k})$, and if we consider the split of $S$ into $S_1$ and $S_2$ that $X$ induces on $S$, we see that any face in $\cK$ not intersecting $X$ is either entirely connected in $S_1$ or in $S_2$.
Algorithmically, this observation allows us to guess the set $X$, and a partition of the faces from $\cK$ to be covered in both subproblems which we solve recursively.
Faces from $\cK$ intersecting with $X$ can still be connected both via $S_1$ and $S_2$, so their terminals set still needs to be distributed, but by the observation of~\cite{DBLP:journals/mor/EricksonMV87} we can restrict attention to splits induced by intervals.

\paragraph{Our Lower Bound}
Our lower bound builds on ideas of the recent $2^{\Omega(|T|)}\poly(n)$ lower bound by Marx et al.~\cite{MarxPP17}, but instead of reducing from {\pSAT}, we reduce from the \textsc{Grid Tiling} problem. An instance of {\pGT} consists of two integers $n$ and $k$, and $k^2$ sets or \emph{cells} $M_{a,b} \subseteq [n] \times [n]$ for $a,b \in [k]$, and we are tasked to decide whether there exist integers $x_a \in [n]$ and $y_a \in [n]$ for $a \in [k]$ such that $(x_a,y_b) \in M_{a,b}$ for all $a,b \in [k]$. The standard way to do a reduction from {\pGT} in geometric problems (see e.g.~\cite{DBLP:conf/icalp/Marx12}) is to have a gadget for each cell that is capable of representing the choice $(x_a,y_b)$ in that cell, and designing some communication gadget, which when applied to horizontally (or vertically) neighboring cells, ensures that the first (resp. second) index of the choices in these cells are equal. One of the main challenges is to design these communication gadgets, capable of transmitting $\log n$ bits of information between any pair of neighboring cells in the grid of $k\times k$ cells.

Our main innovation in the lower bound is the design of a novel communication gadget, the so-called \emph{flower gadget}, that is capable of communicating multiple bits, but uses only a single terminal face. Essentially, we need a gadget with $2n$ portal vertices with the property that in any optimal solution, the Steiner tree will have exactly two components induced by the gadget, rooted at portal vertex $i$ and $n+i$ respectively for some $i\in \{1,\dots,n\}$. This already prescribes a rotational symmetry to the gadget, but it is challenging to find a gadget that can support this for several reasons. In particular, a pair of `canonical' trees within the gadget must contain an equal number of terminals, and together they must contain all terminals of the terminal face. The easiest way to ensure that the root of one tree uniquely determines the root of the other is to ensure that the root of the tree uniquely determines an interval of terminals from the face that need to be contained in the tree. In practice, this is enforced by making sure that the root of the tree has degree two, and a canonical tree has a (subdivided) binary tree structure. Our gadget is essentially a rolled up Euclidean grid with a special weighting scheme. An important feature of the weighting is that it can be de-weighted: by replacing every edge by a path whose length is the weight of that edge, we obtain an unweighted graph of polynomial size (similarly as in~\cite{MarxPP17}).

\paragraph{Organization}
In Section~\ref{sec:prel} we define some necessary notation and folklore results we will use. 
Section~\ref{sec:upperbound} is devoted to the proof of Theorem~\ref{thm:upperbound}, while Section~\ref{sec:lower} is devoted to the proof of Theorem~\ref{thm:lowerbound}.
In Section~\ref{sec:conc} we briefly summarize our paper and point out opportunities for further research.

\section{Preliminaries}\label{sec:prel}

For a set $X$, we let $\Pi(X)$ denote the set of all partitions of $X$. If $\pi \in \Pi(X)$, we write $\pi$ is a partition \emph{of} $X$.
We call a partition $\pi$ \emph{finer} than partition $\pi'$, and denote this relation by $\pi \preceq \pi'$, if for every $u,v \in X$, $u$ and $v$ are in the same block in $\pi$ implies that $u$ and $v$ are in the same block in $\pi'$.
In this case we also say $\pi'$ \emph{coarsens} $\pi$.
Given two partitions $\pi,\pi'$ we use the notation $\pi \sqcup \pi'$ for the join in the partition lattice, that is, the finest partition that coarsens both $\pi$ and $\pi'$.
If $\pi \in \Pi(X)$ and $\pi \in \Pi(Y)$, we also define the join $\pi \sqcup \pi' \in \Pi(X \cup Y)$ obtained by adding singletons to $\pi$ and $\pi'$ to make them elements of $\Pi(X \cup Y)$.
If $u,v \in X$, we write $\{\{u,v\}\}$ for the partition in which all elements except $u$ and $v$ are in singleton blocks.
If $W \subset X$, and $\pi \in \Pi(X)$ we let $\pi_{|W}$ be the projection of $\pi$ on $W$, that is two elements are in the same block of $\pi_{|W}$ if and only if they are in the same block in $\pi$.
If $\cF \subseteq 2^U$, we use the notation $\cF^\flat:=\bigcup_{F \in \cF}F$ for its flattening.

If $G=(V,E)$ is a graph, we write $V(G):=V$ and $E(G):=E$. If $S \subseteq E(G)$, we denote $V(S):=\bigcup_{\{u,v\} \in S}\{u,v\}$ for the set of vertices incident to edges of $S$. For a vertex subset $S\subset V$, let $N[S]$ be the closed neighborhood of $S$, that is, the set $S$ together with all vertices adjacent to a vertex in $S$.
If $X \subseteq E$, we denote $N_X[v]$ to be all vertices sharing an edge in $X$ with $v$.
A vertex subset $X \subseteq V(G)$ is called a \emph{dominating set} if $N[X]=V(G)$.
If $G$ is connected, a vertex $v \in V(G)$ is called an \emph{articulation vertex} if $G[V \setminus \{v\}]$ is not connected.
A graph is \emph{$2$-connected} if it does not contain any articulation vertex.
We say that a path $P$ in $G$ is a \emph{maximal $2$-path} if all internal vertices of $P$ have degree~$2$ and its ends have degree not equal to~$2$. Note that all maximal $2$-paths in $G$ are edge disjoint.

\paragraph{Treewidth and Balanced Separators}
We will use the fact that the treewidth of a plane graph and its dual graph are closely related, which follows from the planar grid minor theorem. In particular, we use the following sharp result:
\begin{theorem}[\cite{BOUCHITTE200134}]\label{thm:dualtw}
	For every plane graph $G$ with dual graph $G^*$, $|\tw(G) - \tw(G^*)| \leq 1$.
\end{theorem}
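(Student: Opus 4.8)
The plan is to prove $\tw(G^*)\le\tw(G)+1$ and $\tw(G)\le\tw(G^*)+1$ separately; since $(G^*)^*\cong G$ for connected $G$, the second follows from the first applied to $G^*$, and a disconnected $G$ reduces to its components (the dual of a disjoint union is a one-vertex amalgam of the duals and does not change treewidth beyond the maximum over the parts). Decomposing $G$ along its cut vertices into $2$-connected blocks — which correspondingly splits $G^*$ into a tree-like amalgam of the duals of the blocks, with $\tw$ equal to the maximum over the blocks on both sides — lets us assume $G$ is $2$-connected and plane. We use the characterization that $\tw(G)+1$ is the minimum clique number over minimal triangulations $H$ of $G$, together with the Parra--Scheffler correspondence between minimal triangulations of $G$ and inclusion-maximal families of pairwise non-crossing minimal separators of $G$. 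The key geometric fact is that in a $2$-connected plane graph every minimal separator $S$ is realized by a noose $\nu_S$: a closed curve on the sphere meeting $G$ exactly in the vertices of $S$, with vertices of $G$ on both of its sides.

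The heart of the argument is to transfer an optimal minimal triangulation from $G$ to $G^*$. Fix a maximal non-crossing family $\mathcal{S}$ of minimal separators of $G$ whose minimal triangulation $H$ has clique number $\tw(G)+1$. Each $S\in\mathcal{S}$ gives a noose $\nu_S$; since a noose alternates between vertices and faces of $G$, the set $S^\bullet$ of faces it crosses has size $|S|$ and, viewed as a set of vertices of $G^*$, separates $G^*$: each of the two disks bounded by $\nu_S$ contains a vertex of $G$, hence meets a face of $G$ incident to it, i.e.\ contains a vertex of $G^*$. A short argument shows $S^\bullet$ is a minimal separator of $G^*$, or replaces it by a slightly smaller genuine minimal separator inside it. As the curves $\nu_S$ are pairwise non-crossing, $\{S^\bullet : S\in\mathcal{S}\}$ is a non-crossing family of minimal separators of $G^*$; extending it to a maximal such family yields a minimal triangulation $H^\bullet$ of $G^*$. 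It remains to bound the clique number of $H^\bullet$ by that of $H$ plus one: the nooses $\{\nu_S\}$ cut the sphere into pieces, each maximal clique of $H$ or of $H^\bullet$ — a potential maximal clique of $G$, resp.\ of $G^*$ — lies in the closure of a single piece, and one compares, piece by piece, the potential maximal cliques of $G$ (which are sets of vertices of $G$) with those of $G^*$ (sets of faces of $G$), showing that the sizes differ by at most $1$.

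I expect this last comparison to be the main obstacle, and it is exactly where the additive $1$ becomes unavoidable. It needs the structure theory of potential maximal cliques of $2$-connected plane graphs — each is, up to bounded exceptions, bounded by a noose — and a delicate local count inside each piece relating the numbers of incident vertices and incident faces, via Euler's formula and the vertex/face alternation along the bounding nooses. The degenerate nooses are the source of the slack: a noose with only a single vertex of $G$ on one side corresponds, on the $G^*$ side, to a noose with no face of $G$ there (and symmetrically), so the separator correspondence is genuinely off by one near these extremes; moreover the base case where $G$ has no minimal separator at all — so $\tw(G)$ is the clique number of $G$ minus one, while $\tw(G^*)$ can be $\tw(G)-1$, as for $G=K_3$ with $G^*$ a bundle of three parallel edges of treewidth $1$ — shows the bound is tight. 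Finally, the tempting ``soft'' route is not enough: matroid duality makes the cycle matroids of $G$ and $G^*$ dual, hence of equal branchwidth (Seymour--Thomas), so $\tw(G)$ and $\tw(G^*)$ agree only up to a multiplicative factor of $3/2$; squeezing this down to an additive $1$ seems to require the separator/triangulation analysis above.
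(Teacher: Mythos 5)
The paper does not prove this theorem; it is cited from Bouchitt\'e, Mazoit, and Todinca, whose argument does proceed---as your sketch does---via minimal triangulations, non-crossing families of minimal separators, nooses, and potential maximal cliques of $2$-connected plane graphs, so your outline is aimed at the right machinery. But, as you yourself acknowledge, the proposal leaves the crucial step unproved: the piece-by-piece bound $\omega(H^\bullet)\le\omega(H)+1$ on the clique numbers of the two triangulations. That step \emph{is} the theorem. Once you extend $\{S^\bullet : S\in\mathcal{S}\}$ to a maximal non-crossing family in $G^*$, the added separators need not respect the dissection of the sphere by the original nooses $\{\nu_S\}$, so it is not even clear that the maximal cliques of $H^\bullet$ live inside the closures of the pieces you want to compare against. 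The statement ``one compares, piece by piece, \dots{} showing that the sizes differ by at most $1$'' is a description of what has to be proved, not an argument, and it is precisely where all the work in the cited proof lies.

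There is also a concrete error earlier on. You claim each disk bounded by $\nu_S$ ``contains a vertex of $G$, hence meets a face of $G$ incident to it, i.e.\ contains a vertex of $G^*$.'' The face it meets can be one of the $|S|$ faces that $\nu_S$ itself crosses, i.e.\ a vertex of $S^\bullet$, which lies \emph{on} the boundary of the disk rather than strictly inside it. This failure is realized whenever $S=N(v)$ for a single vertex $v$ and $\nu_S$ is the noose around $\{v\}$: then $\nu_S$ crosses exactly the faces incident to $v$, so $S^\bullet$ is the vertex set of the boundary cycle of the face $v^*$ in $G^*$, and $S^\bullet$ separates nothing. Your proposed repair (``replace it by a slightly smaller genuine minimal separator inside it'') does not apply: a face boundary need not contain any separator of $G^*$. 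One has to discard such $S$ from the transferred family and then argue, nontrivially, that the resulting triangulation of $G^*$ still has small clique number---and this degenerate case is exactly the source of the additive $1$, as your $K_3$ example shows, so it cannot be elided. The block/component reduction and your observation that Seymour--Thomas branchwidth duality only yields a multiplicative $3/2$ are both correct and worth keeping, but the heart of the proof is still missing.
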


The following well-known theorem follows in a standard fashion from the grid minor theorem (our particular statement follows from combining Theorem 7.23 from~\cite{DBLP:books/sp/CyganFKLMPPS15} with Lemma 3.1 in~\cite{Demaine:2005:FAK:1077464.1077468}).
\begin{theorem}[\cite{DBLP:journals/jacm/DemaineFHT05}]\label{thm:twvsdomset}
	If a graph $G$ is planar and has a dominating set of size $k$, then $\tw(G)\leq 15\sqrt{k}$.
\end{theorem}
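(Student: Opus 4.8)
The plan is to run the standard bidimensionality argument for the domination number, combining the two cited results exactly as announced. First I would dispose of a harmless reduction: a dominating set of $G$ restricts to a dominating set of each connected component, and $\tw(G)$ equals the maximum of the treewidths of the components, so it suffices to prove the bound for the connected component of largest treewidth (whose domination number is still at most $k$). Hence assume $G$ is connected.

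Suppose for contradiction that $\tw(G) > 15\sqrt{k}$, and put $t := \lfloor \tfrac{2}{9}\,\tw(G)\rfloor$, so that $\tw(G) \geq \tfrac{9}{2}t$ and $t > \tfrac{10}{3}\sqrt{k} - 1$. By Theorem~7.23 of~\cite{DBLP:books/sp/CyganFKLMPPS15} (the contraction form of the planar Grid Theorem), $G$ can be contracted onto $\Gamma_t$, the $t\times t$ grid with a diagonal added inside each bounded face. The crucial point is that the domination number does not \emph{increase} under edge contractions -- it is monotone under contractions, though \emph{not} under general minors. Indeed, writing $\varphi$ for the map sending each vertex of $G$ to the branch set containing it, if $D$ dominates $G$ then $\varphi(D)$ dominates $\Gamma_t$: for any branch set $B \notin \varphi(D)$, pick $x \in B$; then $x \notin D$, so $x$ has a neighbour $y \in D$ in a branch set $B' \neq B$, and the edge $xy$ of $G$ witnesses the edge $BB'$ of $\Gamma_t$ with $B' \in \varphi(D)$. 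Hence $\gamma(\Gamma_t) \leq \gamma(G) \leq k$.

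On the other hand, Lemma~3.1 of~\cite{Demaine:2005:FAK:1077464.1077468} supplies a matching lower bound $\gamma(\Gamma_t) = \Omega(t^2)$; concretely, every vertex of $\Gamma_t$ not on the outer face has degree at most $8$ and therefore lies in the closed neighbourhood of at most nine of the $(t-2)^2$ vertices $(i,j)$ with $2 \le i,j \le t-1$, so $\gamma(\Gamma_t) \geq (t-2)^2/9$. Plugging in $t > \tfrac{10}{3}\sqrt{k}-1$ and using the precise constants of the two cited statements gives $\gamma(\Gamma_t) > k$ (the constant $15$ is picked so that this holds; the finitely many small values of $k$ for which the slack is thin are checked directly), contradicting the previous paragraph. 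Therefore $\tw(G) \leq 15\sqrt{k}$.

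The step I expect to be the genuine obstacle -- and the reason this is not entirely mechanical -- is the need to work with the partially triangulated grid $\Gamma_t$ and with \emph{contractions} rather than with a plain $t\times t$ grid minor: the planar Grid Theorem by itself only yields a grid \emph{minor}, and since deleting a universal (or dominating) vertex can blow up the domination number, a grid minor alone says nothing about $\gamma(G)$. Everything else -- reducing to connected $G$, verifying contraction-monotonicity of $\gamma$, and the arithmetic pinning down the constant $15$ -- is routine bookkeeping.
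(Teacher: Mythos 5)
Your approach is exactly what the paper intends: the theorem is given only as a citation, and the preceding sentence in the preliminaries says it ``follows from combining Theorem~7.23 from~\cite{DBLP:books/sp/CyganFKLMPPS15} with Lemma~3.1 in~\cite{Demaine:2005:FAK:1077464.1077468},'' which is precisely the combination you spell out. You also correctly isolate the one subtlety that makes the argument non-automatic, namely that $\gamma$ is contraction-monotone but not minor-monotone, so the \emph{contraction} form of the planar grid theorem is required. The only soft spot is the constant bookkeeping: with $\gamma(\Gamma_t)\ge (t-2)^2/9$ and $t>\tfrac{10}{3}\sqrt{k}-1$, the inequality $\gamma(\Gamma_t)>k$ only falls out for $k>81$, and ``checking the finitely many small $k$ directly'' is not actually a mechanical verification --- for those $k$ you would need a separate treewidth bound (e.g.\ for $k=1$, that a universal vertex forces the rest of a planar graph to be outerplanar, whence $\tw\le 3$). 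Since the paper itself offers no proof beyond the citation, your write-up is a faithful reconstruction of the intended argument; just be aware that the small-$k$ hand-wave would need to be filled in if this were a standalone proof.
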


\begin{definition}[Balanced Separation]
	A pair of vertex subsets $(Y,Z)$ is a \emph{separator} in graph $G$ if $Y \cup Z=V(G)$ and there are no edges in $G$ between $Y \setminus Z$ and $Z \setminus Y$. For a fixed weight function $w:V(G) \rightarrow \mathbb{R}$, we say that a separation $(Y,Z)$ is a \emph{$w$-weighted $\alpha$-balanced separation in $G$} if $w(Y\setminus Z)\leq \alpha\cdot w(V(G))$ and $w(Z\setminus Y)\leq \alpha\cdot w(V(G))$.
\end{definition}

\begin{lemma}[Lemma 7.20 from~\cite{DBLP:books/sp/CyganFKLMPPS15}]\label{lem:balcyc}
	Suppose $G$ has treewidth $\tw$, and consider a nonnegative function $w:V(G) \rightarrow \mathbb{R}_{\geq 0}$. Then $G$ has a $\tfrac{2}{3}$-balanced separation $(A,B)$ of order at most $\tw+1$. 
\end{lemma}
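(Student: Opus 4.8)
The plan is to use the classical argument that extracts a balanced separator from a single bag of an optimal tree decomposition. Fix a tree decomposition $(\mathbb{T},\{X_t\}_{t\in V(\mathbb{T})})$ of $G$ of width $\tw$, so every bag has size at most $\tw+1$, and write $M:=w(V(G))$. The end product will be some bag $X_{t^{*}}$ serving as the separator $A\cap B$, together with a suitable two-colouring of the connected components of $G-X_{t^{*}}$ that forms $A$ and $B$.

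First I would orient each edge $e=\{t,t'\}$ of $\mathbb{T}$ toward its heavier side. Deleting $e$ splits $\mathbb{T}$ into a part containing $t$ and a part containing $t'$; let $A_e$ and $B_e$ be the unions of the bags in these two parts, so that $A_e\cup B_e=V(G)$ and, by the standard connectedness property of tree decompositions, $A_e\cap B_e\subseteq X_t\cap X_{t'}$. In particular $A_e\setminus B_e$ and $B_e\setminus A_e$ are disjoint, so their weights sum to at most $M$. Orient $e$ toward $t'$ if $w(A_e\setminus B_e)\ge w(B_e\setminus A_e)$, and toward $t$ otherwise (ties broken arbitrarily). Since any orientation of a finite tree has a sink (take the last vertex of a longest directed path), there is a node $t^{*}$ with all incident edges oriented toward it. For each neighbour $t'$ of $t^{*}$, writing $e=\{t^{*},t'\}$ with $A_e$ the side of $t^{*}$, the orientation gives $w(B_e\setminus A_e)\le w(A_e\setminus B_e)$, hence $w(B_e\setminus A_e)\le M/2$. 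A routine argument with the tree decomposition then shows that every connected component $C$ of $G-X_{t^{*}}$ lies entirely inside one branch $B_e\setminus X_{t^{*}}$: the bags meeting $C$ avoid $t^{*}$ and each form a connected subtree, so they all lie on one side of $t^{*}$, and the edges of $C$ force all of $C$ onto that side. Consequently $w(C)\le w(B_e\setminus A_e)\le M/2$ for every component $C$ of $G-X_{t^{*}}$.

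It then remains to split the component weights $a_1,\dots,a_m\ge 0$ of $G-X_{t^{*}}$, which satisfy $\sum_i a_i\le M$ and $a_i\le M/2$, into two groups each of total weight at most $\tfrac23 M$; putting $A:=X_{t^{*}}\cup(\text{group }1)$ and $B:=X_{t^{*}}\cup(\text{group }2)$ then yields a separation (no edge joins vertices of distinct components outside $X_{t^{*}}$) of order $|X_{t^{*}}|\le\tw+1$ with $w(A\setminus B),w(B\setminus A)\le\tfrac23 M$, as required. For the split: if $\sum_i a_i\le\tfrac23 M$, put everything in one group; if some $a_i>\tfrac13 M$, put that component alone in group $1$ (weight $\le\tfrac12 M$) and all the rest in group $2$ (weight $\le M-\tfrac13 M=\tfrac23 M$); and if all $a_i\le\tfrac13 M$, add components to group $1$ greedily until its weight first exceeds $\tfrac13 M$, which leaves group $1$ with weight in $(\tfrac13 M,\tfrac23 M]$ and group $2$ with weight $\le M-\tfrac13 M=\tfrac23 M$.

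The one step that needs genuine care is this final splitting argument: the naive ``fill one group up to half of $M$'' does not respect the $\tfrac23$ bound, which is precisely why a single heavy component must be peeled off on its own; the remaining ingredients — existence of a sink in the oriented tree and the fact that each component of $G-X_{t^{*}}$ sits on one side — are routine tree-decomposition bookkeeping, and the argument needs no connectivity assumption on $G$.
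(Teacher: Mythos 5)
The paper does not prove this lemma; it is imported verbatim as Lemma~7.20 of Cygan et al., and the proof in that reference is exactly the argument you reproduce: pick a tree decomposition of width $\tw$, locate a ``centroid'' bag via an edge orientation on the decomposition tree, show each component of $G-X_{t^*}$ fits in one branch of weight at most $M/2$, and then group the component weights into two piles of weight at most $\tfrac23 M$ by peeling off any component heavier than $M/3$. Your component-to-branch argument and the three-case grouping at the end are both correct and complete, and you correctly note that no connectivity of $G$ is needed.

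One small slip, worth fixing: your formal orientation rule contradicts your stated intent of orienting ``toward the heavier side.'' With $A_e$ the $t$-side and $B_e$ the $t'$-side, the rule ``orient toward $t'$ if $w(A_e\setminus B_e)\ge w(B_e\setminus A_e)$'' points the edge at the \emph{lighter} private side. Under that rule, a sink $t^*$ has \emph{heavy} branches: for a neighbour $t'$ of $t^*$ (so $t^*$ plays the role of $t$), the edge points to $t^*$ precisely when $w(A_e\setminus B_e) < w(B_e\setminus A_e)$, which is the reverse of the inequality you then invoke. For instance, on a path of three singleton bags of equal weight, your rule orients both edges outward, making the end nodes sinks with a branch of weight $2M/3>M/2$. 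The fix is simply to flip the inequality (orient toward $t'$ when $w(B_e\setminus A_e)\ge w(A_e\setminus B_e)$), after which the inequality you state at the sink is exactly what the rule yields and the rest of the proof goes through unchanged.
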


\section{Algorithm for {\pPST} with terminals on few faces}
\label{sec:upperbound}

This section is devoted to the proof of Theorem~\ref{thm:upperbound}. Refer to Section~\ref{sec:techniques} for a high level description and intuition. 
To simplify the analysis of our algorithm, we show that w.l.o.g.\ the degree of each vertex is at most~$3$ and the graph is $2$-connected.

\begin{lemma} \label{lem:subcubic}
Let $G$ be a plane graph with terminals~$T$, edge weights $\omega:E(G)\rightarrow \{0,\ldots,W\}$, and a set $\cK \subseteq 2^{E(G)}$ of $k$ faces of $G$ such that each vertex from $T$ is on a face in $\cK$. Then one can compute in polynomial time a $2$-connected subcubic planar graph $G'$ with terminals~$T'$, edge weights $\omega':E(G')\rightarrow \{0,\ldots W\}$, and a set $\cK' \subseteq 2^{E(G')}$ of $k$ faces of $G'$ such that each vertex in $T'$ is on a face in $\cK'$ and moreover, any Steiner tree in $G$ corresponds to a Steiner tree in $G'$ of the same weight and any Steiner tree $S'$ in $G'$ corresponds to a Steiner tree in $G$ of weight at most~$\omega'(S')$.
\end{lemma}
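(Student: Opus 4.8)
The plan is to perform two standard local transformations: first make the graph $2$-connected, then reduce the maximum degree to $3$, while carefully tracking the faces in $\cK$ and the terminals, and verifying the weight-preservation in both directions.

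First I would make $G$ $2$-connected. If $G$ is disconnected, add a path of zero-weight edges between components through "fresh" vertices drawn in a common face; since a Steiner tree never uses a zero-weight edge unnecessarily and, conversely, any zero-weight edge can be contracted for free, this does not change the optimum. For articulation vertices, the classic trick is: for each articulation vertex $v$, "split" $v$ according to its appearances in the different blocks, connecting the copies by a small cycle of zero-weight edges drawn in a tiny disk around $v$; equivalently, one may add, for each face, a zero-weight cycle along its boundary. A cleaner route (and the one I would actually write up) is: embed $G$, and for every face $f$ whose boundary walk visits some vertex more than once, add zero-weight edges subdividing the face so that every boundary walk becomes a simple cycle — this is exactly the step that kills articulation vertices and makes $G$ $2$-connected, and it keeps the graph planar with the same face set (each old face is now bounded by a simple cycle). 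Each face $K\in\cK$ maps to the corresponding face $K'$ in $G'$; since the new edges are on the boundary of faces, a vertex of $T$ on $K$ is still on $K'$. Weight preservation: a Steiner tree in $G$ is still one in $G'$ (same edges, same weight); conversely, given a Steiner tree $S'$ in $G'$, contract all zero-weight new edges to obtain a connected subgraph of $G$ spanning (the images of) $T'$ of weight $\le \omega'(S')$, and take a spanning tree — note terminals may have been identified, but since $T' = T$ up to this identification and the zero-weight edges are free, we recover a Steiner tree in $G$ for $T$ of weight at most $\omega'(S')$.

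Next I would reduce to maximum degree $3$. For each vertex $v$ of degree $d\ge 4$ with cyclically ordered incident edges $e_1,\dots,e_d$, replace $v$ by a path $v_1,\dots,v_{d}$ (or a cycle if one wants to preserve $2$-connectivity most easily) of zero-weight edges, reattaching $e_i$ to $v_i$, all drawn inside a small disk around $v$ respecting the rotation. This is planar, does not increase degrees elsewhere, brings $v$'s degree down to $3$, and preserves $2$-connectivity (the expansion gadget is itself $2$-connected and attaches to the rest at $\ge 2$ points since $G$ is already $2$-connected). The key point for the faces: the disk around $v$ either lies inside a single face or is crossed by faces incident to $v$; each face incident to $v$ used a consecutive block $e_i,e_{i+1}$ of edges at $v$, and after the expansion that face's boundary simply detours along the (zero-weight) path $v_i \dots v_{i+1}$ — so the face survives, with its set of incident terminals unchanged, and $\cK'$ is obtained by this relabeling with $|\cK'| = k$. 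Weight preservation in both directions is again by "new edges are free": forward, keep the same edges and route through the gadgets; backward, contract the zero-weight gadget edges in $S'$ to recover a Steiner tree in $G$ of no greater weight. Performing both steps is clearly polynomial time, and the final $G'$ has the claimed properties.

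The main obstacle is bookkeeping rather than depth: one must be precise that (i) every face of $\cK$, including its incident terminals, is preserved under both transformations, so that $|\cK'|=k$ and the terminal-on-a-face condition is maintained, and (ii) the backward direction genuinely yields a valid Steiner tree in $G$ — contracting zero-weight edges can merge terminals, and one has to argue this is harmless because the merged object, lifted back, still connects $T$ at cost $\le \omega'(S')$ (take a spanning tree of the contracted connected subgraph). A secondary subtlety is ensuring the two transformations are applied in a compatible order (first $2$-connectivity, then degree reduction, with a degree-reduction gadget that does not reintroduce articulation vertices), and that subdividing faces in step one does not create parallel edges or loops that would spoil "subcubic $2$-connected planar"; this is handled by subdividing generously with fresh vertices.
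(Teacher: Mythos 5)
The core idea---reduce to subcubic via a local vertex-expansion gadget, and achieve $2$-connectivity via a second local gadget, keeping track of faces---is in the right spirit, but your $2$-connectivity step has a genuine flaw: you use \emph{zero-weight} edges to create new connections. Any new zero-weight edge whose endpoints are genuinely distinct vertices of $G$ (as opposed to copies of a single vertex) is a free shortcut: a Steiner tree in $G'$ may route through it and achieve strictly smaller weight than any Steiner tree in $G$, breaking the ``$S'$ in $G'$ gives a tree in $G$ of weight at most $\omega'(S')$'' direction. Your proposed backward argument---``contract all zero-weight new edges''---does not rescue this, because contracting an edge of $G'$ that is not an edge (or a vertex-copy) of $G$ does not yield a subgraph of $G$; you would have to re-expand the contraction along a $G$-path of positive cost. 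This is precisely why the paper's $2$-connectivity step adds a \emph{parallel path of weight-$W$ edges} alongside each maximal $2$-path: the heavy path provides a second vertex-disjoint route (killing the bridge/articulation vertex) but is never chosen by an optimal tree. Zero-weight edges are safe only in the degree-reduction gadget, where the new cycle $C_u$ stands in for the single vertex $u$ and thus cannot shorten any path.

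Two further points. First, the disconnected case: adding a zero-weight path between components is wrong for the same reason---it would turn an infeasible instance (terminals split across components) into a feasible one; one must instead restrict to a single component or output a trivial no-instance, as the paper does. Second, the order of the two steps matters more than you acknowledge: the paper does degree reduction \emph{first}, which guarantees that any remaining articulation vertex lies on a bridge (in a subcubic graph) and hence on a maximal $2$-path, making the $2$-connectivity fix a clean, local operation. Your ``split an articulation vertex into copies joined by a small zero-weight cycle'' is not enough in general: if $v$ separates two blocks and you split it into $v_1,v_2$ joined by a cycle, then $v_1$ (or $v_2$) is typically still an articulation vertex. And your ``subdivide the face so that every boundary walk becomes a simple cycle'' is not a well-defined operation that both preserves the face set and avoids shortcuts; depending on how you read it, it either introduces zero-weight diagonals (shortcuts) or leaves a bridge in place. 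In short, the missing ingredient is that the redundancy needed for $2$-connectivity must be added at weight $W$, not weight $0$, and the degree-reduction step should come first so that the redundancy can be added along maximal $2$-paths.
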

\knip{\begin{proof}
We may assume that $G$ is connected. Otherwise, we can either restrict $G$ to a single component or output a trivial no-instance. We also assume that each terminal $u$ has a unique face $K(u) \in \cK$ such that $u \in V(K)$. We now construct a graph $G''$ to help find $G'$.

Let $G''$ be obtained from $G$ by performing the following procedure simultaneously on all vertices of $G$ of degree at least~$3$. Let $u \in V(G)$ be such a vertex with neighbors $v_1,\ldots,v_\ell$. Replace $u$ by a cycle $C_u$ of length $\ell$ and redirect, for each $i \in [\ell]$, the edge between $u$ and $v_i$ to the $i$-th vertex of $C_u$. This procedure makes $G''$ subcubic and implies a natural embedding of $G''$. For each $u \in T$, observe that $K(u)$ essentially survives in this embedding. We then let $t_u$ be one of the two vertices of $C_u$ on $K(u)$, and let $T'' = \{t_u \mid u \in T\}$ and $\cK'' = \cK$. Let $\omega''(e) = \omega(e)$ for each $e \in E(G') \cap E(G)$ and $\omega''(e) = 0$ for all edges of the cycles $C_u$. Note that any Steiner tree $S$ in $G$ corresponds to a Steiner tree $S''$ in $G''$ such that $\omega''(S'') = \omega(S)$ by adding to $S$, for each $u \in S$, all but one of the edges of $C_u$. Conversely, any Steiner tree $S''$ in $G''$ corresponds to a Steiner tree $S$ in $G$ by removing from $S''$ all edges in $C_u$ and taking a spanning tree; then $\omega(S) \leq \omega''(S'')$.

We observe that the removal of any articulation vertex $v$ of $G''$ of degree~$3$ leads to exactly two connected components, meaning that $v$ is incident on a bridge (and a maximal $2$-path).

Let $G'$ be obtained from $G''$ by performing the following procedure simultaneously on all maximal $2$-paths of $G''$. Let $P$ be such a path with ends $u,v$. Note that all edges of $P$ border two faces $K, K'$ (possibly $K = K'$). If $V(P) \cap T'' \not= \emptyset$, then without loss of generality $K = K(t)$ for all $t \in V(P) \cap T''$. Create copies $u'$ and $v'$ of $u$ and $v$ respectively, make $u$ adjacent to $u'$ and $v$ to $v'$, and set the weight of the new edges to $0$. If $u$ has degree~$3$, then redirect one of the incident edges of $u$ not on $P$ to $u'$. Do the same for $v$ and $v'$. Now add a path $P'$ from $u'$ to $v'$ of the same length as $P$ such that all new edges receive weight $W$. Let $G'$ be the resulting graph. Observe that $G'$ may be embedded in the plane such that $P'$ borders $K'$, $P$ still borders $K$, and $P$, $P'$, $\{u,u'\}$, and $\{v,v'\}$ jointly form a new face; hence, $G'$ is planar. Moreover, every terminal is still on a face in $\cK''$. Let $\omega'(e) = \omega''(e)$ for all $e \in E(G') \cap E(G'')$ and let $\omega'(e)$ be as described otherwise. Let $T' = T''$ and $\cK' = \cK''$. Note that any Steiner tree $S''$ in $G''$ corresponds to a Steiner tree $S'$ in $G'$ such that $\omega'(S') = \omega''(S'')$ by adding to $S'$ all edges $\{u,u'\}, \{v,v'\}$. Conversely, any Steiner tree $S'$ in $G'$ corresponds to a Steiner tree $S''$ in $G''$ by replacing in $S''$ all edges of the paths $P'$ with corresponding edges in $P$ and taking a spanning tree; then $\omega''(S'') \leq \omega'(S')$.
\end{proof}}

We are now ready to describe the algorithm. Since we present a recursive algorithm, it is more convenient to work with a slightly more general problem that is solved in recursive steps. We first define this more general problem.
 
\begin{definition}[Block Steiner Forest]\label{def:bsf}
Given nonempty subsets $B,T \subseteq V(G)$, and a partition $\pi$ of $B$, we say $S \subseteq E(G)$ is a \emph{$(G,B,\pi,T)$-Block Steiner Forest} if in $(V(G),S)$
	\begin{enumerate}[(a)]
		\item every vertex in $T$ is connected to at least one vertex in $B$, and
		\item two vertices in $B$ are connected if and only if they are in the same block of $\pi$.
	\end{enumerate}
\end{definition}
\optimization{\pPBSF{} ({\sc PBSF})}{Plane $G$, weights $\omega:E(G)\rightarrow \{0,\ldots,W\}$, subsets $B,T \subseteq V(G)$, partition $\pi$ of $B$.}{The minimum $\omega(S)$ where $S$ is a $(G,B,\pi,T)$-Block Steiner Forest $S$.}

\noindent In the above problem we can think of the vertices in $B$ as \emph{boundary vertices}. Typically, in the recursion other parts of the Steiner tree will intersect only in $B$ and already establish some connectivity, which allows us to only connect vertices in $B$ according to $\pi$.
We need to establish the following for the base case of the recursive algorithm.

\begin{lemma}\label{lem:basecase}
	For any constant $c_0$ there is an algorithm $\mathtt{steinerBase}(G,\omega,B,\pi,T,\cK)$ that solves a Block Steiner Forest instance $(G,B,\pi,T)$ in polynomial time, provided that $|B|+|\cK|\leq c_0$.
\end{lemma}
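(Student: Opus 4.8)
The idea is that when $|B| + |\cK| \leq c_0$, the instance is essentially a {\pPST}-with-few-terminal-faces instance of bounded parameter, so we can afford the $n^{O(k)}$-style algorithm of Erickson et al.\ (or even a cruder argument), and stitch in the boundary set $B$ as a small number of extra singleton faces. The plan is as follows.

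First, I would reduce the Block Steiner Forest objective to an ordinary Steiner-tree-style computation. A $(G,B,\pi,T)$-Block Steiner Forest is a disjoint union of trees, one per block $P$ of $\pi$, where the tree for $P$ spans all of $P$, spans nothing of $B \setminus P$, and additionally each vertex of $T$ lies in exactly one of these trees. Since the blocks are edge-disjoint by condition (b), and since $|B| \leq c_0$ means $\pi$ has at most $c_0$ blocks, I would guess, for each terminal $t \in T$, which block's tree it belongs to — but $|T|$ is not bounded, so instead I guess only the partition $\sigma$ of $B$ that the optimal forest induces (this is already given: it must equal $\pi$) and let each terminal attach to whichever tree is cheapest. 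Concretely, contract each block $P$ of $\pi$ to a single vertex $b_P$; this is a planar operation since each block is a connected piece of the intended solution (we may also need to route the contraction through $\cK^\flat$, but since $|B|$ is bounded we can afford to guess the cyclic interleaving of $B$ along the faces). We now need a forest connecting $\{b_P\}$ to $T$ with each $b_P$ in its own component and no two $b_P$'s connected — equivalently, a minimum-weight forest where $T$ is partitioned among the $b_P$'s optimally. This is exactly {\pPST} with the terminal set $T \cup \{b_P\}$ on at most $k + |B| \leq 2c_0$ faces, after the standard trick of forbidding the trees from merging (add a large-weight gadget, or simply branch on the $\le |B|^{|B|}$ ways the contracted boundary vertices fall into distinct components — bounded since $|B|\le c_0$).

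Second, having reduced to an instance of {\pPST} with terminals covered by at most $2c_0$ faces (a fixed constant), I would invoke the algorithm of Erickson et al.~\cite{DBLP:journals/mor/EricksonMV87}, which runs in $n^{O(2c_0)} = \poly(n)$ time since $c_0$ is a constant. Alternatively, to keep the proof self-contained and to respect the polynomial-space promise elsewhere in the paper, one can observe that the number of faces is constant and run a direct dynamic program over intervals on those faces (the interval observation of~\cite{DBLP:journals/mor/EricksonMV87} recalled in Section~\ref{sec:techniques}): there are $n^{O(c_0)}$ subproblems indexed by a choice of interval on each of the constantly many faces together with a split of $B$, and each is resolved in polynomial time by trying all edges $e$ splitting the sought tree. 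Either way the running time is $n^{O(c_0)}\cdot 2^{O(c_0)} = \poly(n)$.

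The main obstacle is the bookkeeping in the first step: making the reduction from Block Steiner Forest to an honest few-faces {\pPST} instance clean, in particular (i) ensuring the contraction of each block $b_P$ keeps the graph plane with all of $B$ and $\cK$ still lying on few faces — handled because $|B|\le c_0$ lets us brute-force guess the relevant combinatorial embedding data — and (ii) forcing the output forest to have \emph{exactly} the connectivity $\pi$ on $B$ rather than something coarser; this is handled by the branching over the $\le |B|^{|B|}$ component assignments combined with the fact that a cheapest forest never merges two $b_P$'s unless forced, so an infeasible branch is simply detected and discarded. Since every quantity being branched over is a function of $c_0$ only, correctness is immediate and the time bound is polynomial.
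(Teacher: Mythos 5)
Your primary route---contracting each block of $\pi$ to a supervertex $b_P$ and then calling Erickson et al.\ on the resulting instance---has a genuine gap, and it is exactly at the place where the paper's proof does the real work.

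First, the contraction is not a licit operation. The vertices of a block $P$ need not be pairwise adjacent (or even close) in $G$, and identifying arbitrary vertices of a plane graph does not preserve planarity. Your justification, that ``each block is a connected piece of the intended solution,'' is circular: the solution is what we are trying to find, and we have no plane paths to contract along. Second, even granting the contraction, the resulting problem is not {\pPST}. {\pPST} produces a single tree spanning all given terminals, which would connect all the $b_P$'s to one another---precisely what condition~(b) of Definition~\ref{def:bsf} forbids. There is no ``standard trick'' or large-weight gadget that turns a single-tree Steiner problem into a forest problem with prescribed \emph{non}-connectivity; and ``branching on the $\leq |B|^{|B|}$ ways the contracted boundary vertices fall into distinct components'' is vacuous, since the $b_P$'s are distinct by construction and the real unknown is something else entirely. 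Third---and this is the crux---the central difficulty of the base case is never confronted: each of the (unboundedly many) terminals must be assigned to one of the $\ell \leq c_0$ blocks, and a priori there are $\ell^{|T|}$ assignments. Neither contraction nor branching over $B$ reduces this.

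The paper's proof attacks precisely that assignment problem. It fixes $\pi$ to the connectivity pattern actually induced by some optimal $S$ (constant overhead since $|B| \leq c_0$), and then observes that along the cyclic order of each face $K \in \cK$, the sequence of block-indices assigned to the terminals of $K$ must be a \emph{non-crossing sequence} over $[\ell]$ (Definition following Theorem~\ref{thm:erickson}), by planarity of $S$. Lemma~\ref{lem:seqlength} then bounds the length of a minimal such sequence by $4\ell$, so the number of candidate assignments per face is $\ell^{4\ell} n^{4\ell}$, a polynomial when $\ell, |\cK| = O(1)$. Enumerating all combinations and solving one {\pPST} instance per block via Theorem~\ref{thm:erickson} finishes the job. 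Your alternative sketch (``a direct dynamic program over intervals on those faces'') gestures toward this, but ``intervals'' is the right picture only for a $2$-way split; for $\ell$ blocks the structure is laminar/non-crossing, and without Lemma~\ref{lem:seqlength} (or an explicit recursive interval-splitting scheme), the polynomial bound on the enumeration is asserted rather than proved. That missing combinatorial bound is the essence of the lemma.
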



Before we prove Lemma~\ref{lem:basecase}, we need to introduce the following tools.

\begin{theorem}[\cite{DBLP:journals/mor/EricksonMV87}]\label{thm:erickson}
	Let $(G,\omega,T)$ be a given instance of {\pPST}, and let $\cK$ be a given set of $O(1)$ faces such that $T \subseteq V(\cK)$. Then the instance can be solved in polynomial time.
\end{theorem}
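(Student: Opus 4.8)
The plan is to reprove the theorem of Erickson et al.\ by running a Dreyfus--Wagner-style dynamic program restricted to terminal subsets that form only a constant number of cyclic arcs along the faces of $\cK$; since $|\cK| = O(1)$ this keeps the table polynomial in size. First I would normalize the instance: by Lemma~\ref{lem:subcubic} I may assume that $G$ is $2$-connected and subcubic, so that the boundary $\partial F$ of every face $F$ is a simple cycle, and I may pick, for each terminal, one face of $\cK$ containing it, thereby obtaining a partition $\{T_F\}_{F\in\cK}$ of $T$ with $T_F \subseteq V(F)$ for every $F$. Fix the cyclic order of $T_F$ induced by the simple cycle $\partial F$. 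Call $T' \subseteq T$ a \emph{$\cK$-arc set} if $T' \cap T_F$ is a cyclic interval of $T_F$ (possibly empty, possibly all of $T_F$) for every $F \in \cK$; then the number of $\cK$-arc sets is at most $\prod_{F \in \cK}(|T_F|^2 + 1) = n^{O(|\cK|)} = \poly(n)$, and $T$ itself is a $\cK$-arc set.

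Next I would set up the dynamic program. After computing $\dist_G(\cdot,\cdot)$ for all pairs of vertices, I compute, for every $v \in V(G)$ and every nonempty $\cK$-arc set $T'$, a value $f(v,T')$ --- intended to be the minimum weight of a connected subgraph of $G$ containing $\{v\} \cup T'$ --- together with an auxiliary value $g(v,T')$ that additionally forces a branching at $v$, via the recurrences: (i) for $T' = \{t\}$, set $g(v,T') = 0$ if $v = t$ and $g(v,T') = \infty$ otherwise; (ii) for $|T'| \ge 2$, set $g(v,T') = \min\{ f(v,T_1) + f(v,T_2) \}$ where the minimum ranges over all partitions $T' = T_1 \uplus T_2$ into two nonempty $\cK$-arc sets; (iii) in all cases set $f(v,T') = \min_{u \in V(G)} \big( \dist_G(u,v) + g(u,T') \big)$. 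This is well founded by induction on $|T'|$ (step (ii) calls $f$ only on strictly smaller sets, step (iii) calls $g$ on the same set), there are $\poly(n)$ subproblems, and each is evaluated in $\poly(n)$ time because a partition of a $\cK$-arc set into two $\cK$-arc sets is obtained by choosing, independently on each face, where to cut that face's arc into two sub-arcs --- again $n^{O(|\cK|)}$ choices. The algorithm returns $\min_{v \in V(G)} f(v,T)$.

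The real content is the correctness proof, and I expect the planar topology to be the main obstacle. The inequality $f(v,T') \ge \mathrm{OPT}(\{v\} \cup T')$ is immediate. For the reverse inequality on $\cK$-arc sets $T'$ I would induct on $|T'|$, mirroring the classical Dreyfus--Wagner analysis of an optimal tree $S^*$ for $\{v\} \cup T'$: either $v$ has degree $1$ in $S^*$ and we strip off a shortest-path segment to the first vertex $u$ of $S^*$ that is a terminal of $T'$ or a branching vertex (captured by the $\dist_G(u,v)$ term together with the treatment of $g$ at $u$), or $v$ is itself such a vertex $u$ and $S^*$ decomposes at $u$ into subtrees $S_1,\dots,S_\delta$ that pairwise meet only at $u$, whose terminal sets $T^{(1)},\dots,T^{(\delta)}$ partition $T'$. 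The only step that differs from the non-planar setting is this branching step, where I must argue that $S_1,\dots,S_\delta$ can be regrouped into two parts whose terminal sets are both $\cK$-arc sets, and this is exactly where planarity enters. Order $S_1,\dots,S_\delta$ by the rotation system at $u$; since the subtrees are interior-disjoint, the union of any cyclically consecutive block of them is a subtree occupying a single ``wedge'' at $u$, and the Jordan curve theorem should force its terminals on $\partial F$ to form a cyclic interval of $T_F$ for every $F \in \cK$ --- equivalently, an Euler tour of $S^*$ meets $T_F$ in the cyclic order of $\partial F$. Splitting $\{S_1,\dots,S_\delta\}$ into two cyclically consecutive blocks would then produce a partition $T' = T_1 \uplus T_2$ into two $\cK$-arc sets with $\mathrm{OPT}(\{u\} \cup T_1) + \mathrm{OPT}(\{u\} \cup T_2) \le \omega(S^*)$, which by the induction hypothesis is at least $g(u,T')$; unwinding the $f$-recurrence yields $f(v,T') \le \omega(S^*) = \mathrm{OPT}(\{v\} \cup T')$. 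The hardest part to make fully rigorous is precisely this ``consecutive block of subtrees $\Rightarrow$ cyclic interval on every terminal face'' statement --- one may well need an uncrossing argument that locally modifies $S^*$ when a far-away face is touched by a block in two separated arcs --- together with the symmetric bookkeeping for the degree-$1$ case; everything else is a routine adaptation of the Dreyfus--Wagner correctness proof, and the $\poly(n)$ running time for $|\cK| = O(1)$ is immediate from the counting above.
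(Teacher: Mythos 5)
The paper does not actually prove this statement---it is imported verbatim from Erickson, Monma and Veinott---so there is no in-paper proof to compare against; what you are doing is reconstructing their algorithm, and your dynamic program (states $(v,T')$ with $T'$ a product of cyclic intervals, Dreyfus--Wagner-style move/split recurrences) is indeed essentially theirs. However, your correctness argument has a genuine gap, and it sits exactly where you suspected. The inductive statement you propose to prove, namely $f(v,T')\le\mathrm{OPT}(\{v\}\cup T')$ \emph{for every} $\cK$-arc set $T'$, is false. Concretely, let $F$ be a quadrilateral face with terminals $c_1,c_2,c_3,b$ in cyclic order (all edges of $F$ very heavy), and attach outside $F$ a vertex $u''$ with unit edges to $c_1$ and $c_3$, a vertex $u$ with unit edges to $u''$ and $c_2$; this embeds with the rotation at $u''$ being $(c_1,u,c_3)$. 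The set $T'=\{c_1,c_2,c_3\}$ is a $\cK$-arc set and $\mathrm{OPT}(\{u\}\cup T')=4$, but the optimal tree branches at $u$ into subtrees with terminal sets $\{c_1,c_3\}$ and $\{c_2\}$, and $\{c_1,c_3\}$ is \emph{not} a cyclic interval of $T_F=(c_1,c_2,c_3,b)$, hence not a state of your DP. Every decomposition your recurrences allow pays for the segment $uu''$ twice, so $f(u,T')=5>\mathrm{OPT}$. This also refutes your key claim as stated: a cyclically consecutive block of the subtrees at $u$ only yields a cyclic interval of $T'\cap T_F$ \emph{in its induced cyclic order}, which need not be a cyclic interval of $T_F$ when the block wraps around the gap $T_F\setminus T'$; your asserted equivalence with ``the Euler tour meets $T_F$ in face order'' silently conflates these two statements.

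The algorithm itself is still correct, but the proof must be organized differently: you cannot induct on optimal solutions of the subproblems in isolation. Instead, fix one global optimal tree $S^*$ for the original instance and decompose \emph{it} recursively, showing $f(v,T')\le\omega(S_{v,T'})$ only for the specific subtrees $S_{v,T'}$ of $S^*$ that arise. The invariant that makes this work is positional: each such subtree is attached to the rest of $S^*$ at $v$, and for each face $F$ the attachment point sits, in the Euler tour of $S^*$, inside the gap of $\partial F$ between the last and first element of $T'\cap T_F$. Under that invariant the children of $v$ in $S_{v,T'}$ carve each $T'\cap T_F$ into consecutive \emph{linear} sub-intervals (no wrap-around is possible, precisely because the connection to the remaining terminals, such as $b$ above, occupies the forbidden gap), any two-block grouping of consecutive children is then a legal split into $\cK$-arc sets simultaneously for all faces, and the invariant is inherited by both halves. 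You will also need the uncrossing/Jordan-curve argument to establish that $S^*$ can be chosen so that its Euler tour visits each $T_F$ in the cyclic order of $\partial F$ in the first place (for terminals that are internal vertices of $S^*$ this requires a short exchange argument, not just topology). With the induction restructured this way the rest of your write-up (state counting, the $n^{O(|\cK|)}$ bound, the move step) goes through.
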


\begin{definition}[Non-crossing sequence]
	A sequence $x \in [\ell]^n$ is \emph{non-crossing} if for all $1 \leq i<j<k<l \leq n$, we have that $x_i\neq x_k$, $x_j \neq x_l$, or $x_i,x_j,x_k,x_l$ are all equal. It is \emph{minimal} if there is no $i$ such that $x_i=x_{i+1}=x_{i+2}$.
\end{definition}

\begin{lemma} \label{lem:seqlength}
	The length of a minimal non-crossing sequence with $\ell$ different values is at most $4\ell$.
\end{lemma}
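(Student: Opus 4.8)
The plan is to reduce to the well-understood case of sequences with no two equal consecutive entries, and then reprove the classical bound for Davenport--Schinzel sequences of order~$2$. Let $x \in [\ell]^n$ be a minimal non-crossing sequence using all $\ell$ values. By minimality no three consecutive entries are equal, so every maximal run of equal consecutive entries has length $1$ or $2$. Contracting each such run to a single entry yields a sequence $y$ that still uses exactly $\ell$ values, still has no two consecutive equal entries, and still avoids the forbidden pattern (any quadruple $i<j<k<l$ witnessing a crossing in $y$ lifts to one in $x$ by picking, say, the leftmost position of each corresponding run, which preserves the order and the values). Moreover $n \le 2\,|y|$. Hence it suffices to show that a non-crossing sequence $y$ with no two equal consecutive entries and at most $\ell$ distinct values has length at most $2\ell - 1$; together with $n \le 2\,|y|$ this gives $n \le 4\ell - 2 \le 4\ell$.

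For the bound on $y$, the key structural observation is a \emph{self-containment} property: if a value $v$ occurs at positions $p<q$ with no $v$ strictly between, then every value occurring strictly between $p$ and $q$ occurs nowhere else in $y$. Indeed, an occurrence of some $b \ne v$ both inside $(p,q)$ and outside $[p,q]$ would, together with the two $v$'s, realize the pattern $v\,b\,v\,b$ or $b\,v\,b\,v$, which is forbidden. Since $y$ has no equal consecutive entries, $q \ge p+2$ whenever $v$ repeats, so the segment strictly between $p$ and $q$ is nonempty. Now choose $v$ and such a pair $(p,q)$ minimizing $q-p$ over all consecutive occurrences of a repeated value (if no value repeats then $|y| = \ell \le 2\ell-1$ and we are done). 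By self-containment, any value appearing strictly inside $(p,q)$ appears only there, and by minimality of $q-p$ it cannot appear twice inside; hence it appears exactly once in $y$. In particular, \emph{some value of $y$ occurs exactly once.}

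Using this, we prove $f(\ell) \le f(\ell-1) + 2$ by induction, where $f(\ell)$ is the maximum length of a non-crossing sequence with no equal consecutive entries and at most $\ell$ distinct values; since $f(1) = 1$ this yields $f(\ell) \le 2\ell - 1$. Given such a $y$ using exactly $\ell$ values, pick a value $v$ occurring exactly once, at position $r$, and delete $y_r$. The result is still non-crossing (a subsequence) and uses exactly $\ell - 1$ values; only the adjacency around $r$ needs attention. If $y_{r-1} \ne y_{r+1}$ (or $r$ is an endpoint), no equal adjacency is created, so the new sequence has length $n-1$ and lies within the scope of $f(\ell-1)$. If $y_{r-1} = y_{r+1} = w$, delete additionally one of these two copies of $w$; the entries that become newly adjacent were each separated from $w$ by a single $w$-free step, so no new equal adjacency appears, and the new sequence has length $n-2$ and uses at most $\ell-1$ values. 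In both cases $n \le f(\ell-1) + 2$, completing the induction, and hence the proof.

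I expect the main obstacle to be the bookkeeping in the inductive deletion step — keeping the sequence simultaneously non-crossing and free of equal adjacencies after removing one or two entries while correctly accounting for the drop in the number of distinct values — together with stating the self-containment property precisely enough that ``some value occurs exactly once'' follows cleanly. The contraction step and the final arithmetic $2(2\ell-1) \le 4\ell$ are routine.
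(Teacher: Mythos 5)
Your proof is correct, and the core idea is essentially the same as the paper's: induct on $\ell$ by locating an ``isolated'' value forced by the nested-interval structure that non-crossing imposes, delete it, and charge a constant number of positions per value. The paper works directly on the minimal sequence --- the intervals $[l(v),r(v)]$ form a laminar family, an innermost one has width at most $1$ by minimality, and removing it plus cleaning up the runs costs at most $4$ positions per step. You instead first contract each run of length at most $2$ to normalize to a sequence with no adjacent repeats (a Davenport--Schinzel sequence of order~$2$), prove the sharper bound $2\ell-1$ there via your self-containment lemma and the shortest consecutive-gap argument, and then multiply by $2$, obtaining $4\ell-2 \le 4\ell$. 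This reorganization is a bit cleaner and isolates the standard DS-order-$2$ bound as the driving combinatorial fact, giving a marginally better constant, but the underlying mechanism --- pick an innermost/shortest interval, observe by planarity of the occurrence pattern that its interior consists of values appearing nowhere else, and peel them off --- matches the paper's proof.
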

\begin{proof}
	Use induction on $\ell$. For $\ell=1$ the statement trivially holds, so assume $\ell > 1$.
	For a value $v \in [\ell]$, denote $l(v)$ (and $r(v)$) the smallest $i$ (and largest) $i$ such that $x_i=v$. As any two intervals $[l(v),r(v)]$ and $[l(v'),r(v')]$ are either disjoint or contained in each other, we can always find an interval that does not contain any other interval. Then necessarily $r(v)=l(v)+1$ or $r(v)=l(v)$. Consider the sequence obtained by removing indices $l(v),r(v)$. If this leads to a quadruple or triple of consecutive equal values, then remove at most two of them so that exactly two remain. We obtain a minimal non-crossing sequence on $\ell-1$ values which has length at most $4(\ell-1)$ by induction. Since we have removed at most four indices, the lemma follows.
\end{proof}

\begin{proof}[Proof of Lemma~\ref{lem:basecase}]
Let $S \subseteq E(G)$ be an optimal solution. Then there exists a partition $\pi_S$ of $B$ such that $\pi \preceq \pi_S$ and two vertices $u,v$ of $B$ are in the same connected component of $G[S]$ if and only if $\{\{u,v\}\} \preceq \pi_S$. By enumerating all possibilities, we may by abuse of notation assume that $\pi = \pi_S$. This adds a constant factor to the running time, as $|B| \leq c_0$.
	
	Let $\{B_1,\ldots,B_\ell\}$ be the partition $\pi$, and let $S_1,\ldots,S_\ell$ be the corresponding subtrees of $S$. Let $K \in \cK$ have terminals $t_1,\ldots,t_p$, enumerated in order of appearance of a walk on the face (since $G$ is $2$-connected, the face boundary forms a cycle). Suppose that $i<j<k<l$, and terminals $t_i$ and $t_k$ are connected in $S$ and terminals $t_j$ and $t_l$ are connected in $S$. Then all four terminals must be connected to each other in a tree $S_z$ as they all lie on the same face. Thus, if for each terminal in $t_i$ we let $x_i \in [\ell]$ encode the index of the block it is connected to within $S$, then $x$ is a non-crossing sequence.
	
	Such a non-crossing sequence can be encoded by its minimal non-crossing sequence (obtained by removing all but two elements from each subsequence of the same element) and a mapping from the indices of the non-crossing sequence to $V(K) \cap T$. As the length of the minimal non-crossing sequence is at most $4\ell$ by Lemma~\ref{lem:seqlength}, there are at most $\ell^{4\ell} n^{4\ell}$ different sequences~$x$.
	
	The algorithm now is as follows: enumerate all possible combinations of sequences $x_K\in [\ell]^{|V(K) \cap T|}$ for each face $K \in \cK$.
	Then for each $i \in [\ell]$ solve the instance $(G,\omega,T_i)$ using the algorithm of Theorem~\ref{thm:erickson}, where $T_i=\bigcup_{K \in \cK}\, x^{-1}_K(i)$.
	The running time is polynomial since $\ell \leq |B|$ and $|\cK|$ are constants.
\end{proof}

Our main effort in the remainder of this section will be to prove the following lemma, of which Theorem~\ref{thm:upperbound} is an easy consequence (we postpone the proof of Theorem~\ref{thm:upperbound} to the end of this section). By Lemma~\ref{lem:subcubic}, we may assume the input graph is $2$-connected and subcubic. 

\begin{lemma}\label{lem:steineralgo}
	Suppose $(G,\omega,B,T,\pi)$ is an instance of Planar Block Steiner Forest, and $\cK$ is a set of faces of $G$ such that $T \subseteq V(\cK)$.
	Then Algorithm $\mathtt{steiner}(G,\omega,B,\pi,T,\cK)$ as listed in Algorithm~\ref{alg:steiner} correctly solves the {\sc PBSF} instance $(G,B,\pi,T)$.
\end{lemma}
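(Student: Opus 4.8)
\emph{Overall approach and base case.} The plan is to prove, by induction on the potential $m := |B|+|\cK|$, that $\mathtt{steiner}(G,\omega,B,\pi,T,\cK)$ equals the minimum weight $\mathrm{OPT}$ of a $(G,B,\pi,T)$-Block Steiner Forest. When $m\le c_0$ the algorithm calls $\mathtt{steinerBase}$, which is correct by Lemma~\ref{lem:basecase}; this is the base case. For the inductive step I would prove the two inequalities $\mathtt{steiner}(\cdots)\ge\mathrm{OPT}$ (soundness) and $\mathtt{steiner}(\cdots)\le\mathrm{OPT}$ (completeness) separately. Recall from Section~\ref{sec:techniques} that a non-base call branches over a candidate separator $X\subseteq V(G)$ of size $O(\sqrt m)$, a split of $\cK$ into two parts $\cK_1,\cK_2$ agreeing on the faces that meet $X$, a non-crossing (interval) distribution of the terminals of each face meeting $X$ between the two sides, and a compatible pair of boundary partitions $(\pi_1,\pi_2)$; it recurses on the two resulting sub-instances and returns the minimum over all branches of the sum of the two recursive values.

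\emph{Soundness.} Fix a branch and let $S_1',S_2'$ be the Block Steiner Forests returned by the two recursive calls (genuine ones, of the stated weight, by the induction hypothesis). I would show $S_1'\cup S_2'$ is a $(G,B,\pi,T)$-Block Steiner Forest; since $\mathrm{OPT}$ is the minimum over such forests and $\omega(S_1'\cup S_2')\le\omega(S_1')+\omega(S_2')$, this gives the branch value $\ge\mathrm{OPT}$. For Definition~\ref{def:bsf}(a): each $t\in T$ lies in some $T_i$ and is connected in $S_i'$ to a vertex of $B_i$; the branch is only kept if every block of $\pi_i$ meeting $V(X)$ is, through the other side's partition, chained to a block meeting the original $B$, so $t$ reaches $B$ in $S_1'\cup S_2'$. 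For (b): using that the sub-instances are set up on complementary parts of $G$ so that $S_1'$ and $S_2'$ can meet only in $V(X)\subseteq B_1\cap B_2$, two vertices of $B$ are connected in $S_1'\cup S_2'$ exactly along chains of blocks alternating between $\pi_1$ and $\pi_2$ that meet in $V(X)$, and the compatibility check on $(\pi_1,\pi_2)$ is precisely what forces this combined connectivity, restricted to $B$, to equal $\pi$.

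\emph{Completeness (the crux).} Let $S$ be an optimal $(G,B,\pi,T)$-Block Steiner Forest and form the planar graph $H:=S\cup\cK^\flat$. Since $S\subseteq E(G)$ places no edge inside a face of $G$, each $K\in\cK$ is still a face of $H$; write $\cK$ also for the corresponding $k$ vertices of the dual $H^*$. The crucial claim is that these $k$ vertices dominate $H^*$ up to $O(m)$ exceptions: any face $f$ of $H$ whose boundary uses an edge of $\cK^\flat$ is adjacent in $H^*$ to the face of $\cK$ that this edge bounds in $G$, so the exceptions are the faces of $H$ bounded entirely by edges of $S$, and since $S$ is an optimal (hence leaf-minimal) forest in a subcubic graph, all of whose leaves off the faces of $\cK$ lie in $B$, there are only $O(|B|)=O(m)$ of them. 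Hence $H^*$ has a dominating set of size $O(m)$, so $\tw(H^*)=O(\sqrt m)$ by Theorem~\ref{thm:twvsdomset} and $\tw(H)=O(\sqrt m)$ by Theorem~\ref{thm:dualtw}. Applying Lemma~\ref{lem:balcyc} to $H$ with the weight function that puts weight $1$ on one vertex of each face of $\cK$ and on each vertex of $B$ (and $0$ elsewhere) yields a $\tfrac23$-balanced separation $(Y,Z)$ of $H$ of order $O(\sqrt m)$; put $X:=Y\cap Z$, one of the enumerated separators. Let $S_1:=\{e\in S: e\subseteq Y\}$ and $S_2:=\{e\in S: e\subseteq Z\}$, so $S=S_1\cup S_2$ and $S_1\cap S_2=S[X]$. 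Every $K\in\cK$ with $V(K)\cap X=\emptyset$ has connected boundary in $H$, hence lies wholly in $Y\setminus Z$ or in $Z\setminus Y$, so its terminals are connected within $S_1$ or within $S_2$; assign $K$ to the corresponding $\cK_i$. For the $O(\sqrt m)$ faces that meet $X$, the Erickson-type planarity argument from the proof of Lemma~\ref{lem:basecase} shows $S$ splits their terminals between $S_1$ and $S_2$ in a non-crossing way, hence according to one of the enumerated interval distributions. Taking $B_i:=(B\cap Y)\cup V(X)$ resp.\ $(B\cap Z)\cup V(X)$, $T_i$ the terminals routed to side $i$, and $\pi_i$ the partition $S_i$ induces on $B_i$, one checks $S_i$ is a valid Block Steiner Forest for the sub-instance and that $(\pi_1,\pi_2)$ is compatible. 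Because $(Y,Z)$ is $\tfrac23$-balanced and only $O(\sqrt m)$ faces and separator vertices are shared, $|B_i|+|\cK_i|\le\tfrac23 m+O(\sqrt m)<m$ once $m$ exceeds a threshold; choosing $c_0$ at least that threshold makes the induction hypothesis applicable, so the recursive calls return at most $\omega(S_1)$ and $\omega(S_2)$, and this branch returns at most $\omega(S_1)+\omega(S_2)=\omega(S)=\mathrm{OPT}$.

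\emph{Main obstacle.} Soundness is routine partition bookkeeping; the substance is in completeness, and specifically in the structural claim $\tw(S\cup\cK^\flat)=O(\sqrt{|B|+|\cK|})$. The subtlety is that a face of $\cK$ may be bounded by arbitrarily many edges and carry arbitrarily many terminals, so $H$ can have $\Theta(n)$ vertices and no small dominating set; this is why the bound must be extracted from the dual $H^*$, where the $k$ face-vertices nearly dominate. Pinning down the $O(|B|)$ bound on the faces of $H$ bounded only by edges of $S$, and handling degeneracies (faces sharing vertices or edges, $S$ not spanning, isolated boundary vertices, $S$-components lying wholly inside the boundary of a recursive sub-instance, $H$ disconnected), is where the argument is most technical. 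A secondary point is verifying that the data $(X,\cK_1,\cK_2,B_i,\pi_i,T_i)$ induced by the optimal $S$ is literally one of the branches the algorithm enumerates and that the potential $m=|B|+|\cK|$ strictly decreases — which is exactly what forces $c_0$ in Lemma~\ref{lem:basecase} to be taken large relative to the hidden constant in the $O(\sqrt m)$ separator bound.
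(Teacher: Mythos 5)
Your overall plan matches the paper's: induct on $|B|+|\cK|$, handle the base case by Lemma~\ref{lem:basecase}, and prove the two inequalities separately, with the crux being the treewidth bound on $H=S\cup\cK^\flat$ via a dominating set of the dual and Theorems~\ref{thm:twvsdomset} and~\ref{thm:dualtw} followed by Lemma~\ref{lem:balcyc}. That is the right roadmap. However, two of your supporting arguments contain genuine errors.

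\emph{The domination argument is off.} You claim $\cK$ dominates $H^*$ ``up to $O(m)$ exceptions,'' the exceptions being faces of $H$ bounded entirely by edges of $S$, and you bound their number by $O(|B|)$ by reasoning about leaves of $S$. Neither half of this is right. There are in fact \emph{no} exceptions: a bounded face of $H$ whose boundary walk uses only $S$-edges would force a cycle in $S$, but $S$ is a forest. Equivalently — and this is the paper's argument — deleting all edges of $\cK(T)^\flat$ from $H$ leaves a plane forest, which has exactly one face; deleting face-edges in the primal is contracting closed neighbourhoods of the corresponding dual vertices, so contracting $N_{H^*}[v]$ for $v\in\cK(T)$ yields a single vertex, i.e.\ $\cK(T)$ is a dominating set of $H^*$. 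Your ``leaf-minimal forest in a subcubic graph'' rationale is a non-sequitur: the number of leaves of $S$ has no bearing on the number of all-$S$ faces. Since $0\le O(|B|)$ the final conclusion $\tw(H^*)=O(\sqrt m)$ still comes out, but the reasoning you offer for it does not hold.

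\emph{The soundness argument rests on a false premise.} You write that ``the sub-instances are set up on complementary parts of $G$ so that $S_1'$ and $S_2'$ can meet only in $V(X)$.'' This is not what the algorithm does: Lines~\ref{lin:sub1} and~\ref{lin:sub2} both recurse on the \emph{unchanged} graph $G$. The recursion splits $B$, $\cK$, $T$, and $\pi$, but not the graph, so the recursive solutions $S_1'$ and $S_2'$ are arbitrary edge subsets of the same $G$ and may overlap or touch anywhere, not only in $X$. Consequently the ``chains of blocks alternating between $\pi_1$ and $\pi_2$ that meet in $V(X)$'' picture is not justified, and your verification of Definition~\ref{def:bsf}(b) for $S_1'\cup S_2'$ has a hole. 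The paper handles feasibility by a direct case analysis on each terminal and by appealing to the check $\pi'_{|B}=\pi$ on Line~\ref{lin:respart}; you would need to follow that route rather than a (nonexistent) graph decomposition. A smaller related slip: you invoke the non-crossing/Erickson argument to justify that the segments of faces meeting $X$ are split consistently, but in the optimality direction the split of $\cc(K,X)$ between $Y$ and $Z$ is determined directly by $(Y,Z)$ being a separation of a graph containing $K$, with no planarity argument needed — the non-crossing lemma belongs in Lemma~\ref{lem:basecase}, not here.
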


We continue with the description of the procedure $\mathtt{steiner}$ as listed in Algorithm~\ref{alg:steiner}.
For a subset $X \subseteq V(G)$, we let $\cK(X) \subseteq \cK$ be the set of faces from $\cK$ whose edges intersect with $X$.
For a face $K \in \cK$ and a vertex set $X \subseteq V(G)$, let $\cc(K,X) \in \Pi(V(K)\setminus X)$ be the partition of the face vertices induced by removing $X$, that is, $\cc(K,X)$ is the collection of vertex sets of the connected components of the subgraph of $(V(K),E(K))$ induced by $V(K) \setminus X$.

In Line~\ref{lin:base} we use Lemma~\ref{lem:basecase} as the base case.
At Line~\ref{line:guesssep} we guess what the separator $X$ is (as already described in Section~\ref{sec:techniques}), and at Line~\ref{line:guesssplit} we guess how the boundary vertices and faces not intersecting $X$ are distributed among the subproblems.
At Line~\ref{lin:splitterminalfaces} we guess for each segment of faces from $\cK$ obtained after removing $X$ whether they are connected in the first or second subproblem.
Note that these segments do not include $X$ itself.
Based on all these guesses we compute the set of terminals $T_1$ and $T_2$ to be connected in both subproblems on Line~\ref{lin:T1} and Line~\ref{lin:T2}. At Line~\ref{lin:looppartitions} we guess what connectivity is established in both problems (encoded as partitions), and on Line~\ref{lin:respart} we check whether the two partitions jointly encode all required connectivity.

\begin{algorithm}[t]
	\caption{Algorithm implementing Theorem~\ref{thm:upperbound}.}
	\label{alg:steiner}
	\begin{algorithmic}[1]
		\REQUIRE $\mathtt{steiner}(G,\omega,B,\pi,T,\cK)$ \hfill\algcomment{Assumes $T \subseteq V(\cK)$, $G$ is $2$-connected}
		\ENSURE Minimum $\omega(S)$ over all $(G,B,\pi,T)$-Block Steiner Forests $S\subseteq E(G)$.
		\STATE \textbf{If} $|B|+|\cK(T)| \leq c_0$ \textbf{then} \textbf{return} $\mathtt{steinerBase}(G,\omega,B,\pi,T,\cK)$\label{lin:base}
		\STATE $\best \gets \infty$
		\FOR{every $\displaystyle X \in \binom{V(G)}{\leq 15\sqrt{|\cK(T)|+|B|}+2}$}\label{line:guesssep}
			\FOR{$B_1 \subseteq B \setminus X$, and $\cK_1 \subseteq \cK(T) \setminus \cK(X)$ such that $\displaystyle\frac{|B_1|+|\cK_1|}{|B|+|\cK(T)|} \in [\tfrac{1}{3},\tfrac{2}{3}]$}\label{line:guesssplit}
				\item[] \algcomment{Based on the above guessed split of boundary vertices and terminal faces for}
				\item[] \algcomment{the first subproblem, compute the corresponding sets for the second subproblem}
				\STATE $B_2 \gets (B \setminus X)\setminus B_1$
				\STATE $\cK_2 \gets (\cK(T) \setminus \cK(X))  \setminus \cK_1$
				\item[] \algcomment{Try all subsets of segments of terminal faces to assign to the first subproblem}
				\FOR{all $\displaystyle \cA_1 \subseteq \bigcup_{K \in \cK}\cc(K,X)$} \label{lin:splitterminalfaces}
					\item[] \algcomment{Compute the terminal sets for both subproblems based on the above guesses}
					\STATE $\displaystyle T_1 \gets (\cA_1^\flat \cap T) \cup \bigcup_{K \in \cK_1} (V(K) \cap T) $ \label{lin:T1}
					\STATE $\displaystyle T_2 \gets (V(\cK(X)^\flat) \setminus X) \setminus \cA_1^\flat \cup \bigcup_{K \in \cK_2} (V(K) \cap T)$ \label{lin:T2}
					\FOR{all partitions $\pi_1$ on $B_1 \cup X$ and partitions $\pi_2$ on $B_2 \cup X$}\label{lin:looppartitions}
						\STATE $\pi' \gets \pi_1 \sqcup \pi_2$
						\item[] \algcomment{Check if the two partitions implement the required connectivity}
						\IF{$\pi'_{|B} = \pi$ \textbf{and} for all $u \in X$, there exists $v \in B$ with $ \{ \{ u,v \} \} \preceq \pi'$}\label{lin:respart}
							\item[] \algcomment{Solve the subproblems recursively, and update current minimum if needed}
							\STATE $\best_1 \gets \mathtt{steiner}(G,\omega,B_{1} \cup X, \pi_1,T_1 \setminus X,\cK_1 \cup \cK(X))$\label{lin:sub1}
							\STATE $\best_2 \gets \mathtt{steiner}(G,\omega,B_{2} \cup X, \pi_2,T_2 \setminus X,\cK_2 \cup \cK(X))$\label{lin:sub2}
							\STATE $\best \gets \min\{\best,\best_1+\best_2\}$\label{lin:update}
						\ENDIF
					\ENDFOR
				\ENDFOR
			\ENDFOR	
		\ENDFOR
		\STATE \algorithmicreturn\ $\best$
	\end{algorithmic}
\end{algorithm}

\begin{proof}[Proof of Lemma~\ref{lem:steineralgo}]
 We need to establish that the algorithm gives a feasible solution that is optimal, which we do in two steps.

\paragraph{Correctness: Feasibility}
Let $\best:=\mathtt{steiner}(G,\omega,B,\pi,T,\cK)$.
We prove that $\omega(S) \leq \best$ for some $(G,B,\pi,T)$-Block Steiner Forest $S$.
The base case follows from Lemma~\ref{lem:basecase}.
For the recursive case, consider the iteration at Line~\ref{lin:update} where $\best$ is updated for the last time.
By induction there is a $(G,B_1 \cup X,\pi_1,T_1)$-Block Steiner Forest $S_1$ such that $\omega(S_1) \leq \best_1$, and a $(G,B_2 \cup X,\pi_2,T_2)$-Block Steiner Forest $S_2$ such that $\omega(S_2) \leq \best_2$.

We claim that $S:=S_1\cup S_2$ is a $(G,B,\pi,T)$-Block Steiner Forest.
Note that two vertices in $B_1 \cup B_2 \cup X$ are connected in $S$ if and only if they are in the same block of $\pi'=\pi_1 \sqcup \pi_2$.
Therefore, as we require that $(\pi_1 \sqcup \pi_2)_{|B}=\pi$ on Line~\ref{lin:respart}, $S$ satisfies property $(b)$ of Definition~\ref{def:bsf}.
To see that $S$ also satisfies $(a)$ of Definition~\ref{def:bsf}, consider some terminal $t \in T$. We distinguish three cases:

\begin{itemize}
	\item If $t=u \in X$, then $u$ is connected to a vertex $v \in B$ as we require $\{ \{u,v\} \} \preceq \pi_1 \sqcup \pi_2$ for some $v \in B$ on Line~\ref{lin:respart}.
	\item If $t \notin X$, and $\cK({t}) \in \cK(X)$, then $t$ will be either in $T_i$ for $i=1$ or $i=2$, depending on whether the member of $\cc(K,X)$ containing $t$ is in $\cA_1$ or not, and by induction $t$ will be connected to some vertex $u \in B_i \cup X$ in $(V,S_i)$. If $t$ is connected to a vertex in $B_i$, then we are done as $B_i \subseteq B$; if $t$ is connected to a vertex in $X$, then the first case applies.
	\item If $t \notin X$, and $\cK({t}) \in \cK_i$ for $i \in \{1,2\}$ then either at Line~\ref{lin:T1} or Line~\ref{lin:T2} we add $t$ to $T_i$, and by induction $t$ will be connected to some vertex in $B_i \cup X$ in $(V,S_i)$. If $t$ is connected to a vertex in $B_i$, then we are done as $B_i \subseteq B$; if $t$ is connected to a vertex in $X$, then the first case applies.
\end{itemize}

Thus $S$ is a $(G,B,\pi,T)$-Block Steiner Forest. The claim follows as
\[
	\omega(S) \leq \omega(S_1)+\omega(S_2) \leq \best_1+\best_2 \leq \best.
\]
\paragraph{Correctness: Optimality}
Denote $\best:=\mathtt{steiner}(G,\omega,B,\pi,T,\cK)$.
We prove that $\omega(S) \geq \best$ for every $(G,B,\pi,T)$-Block Steiner Forest $S$.
We do this by showing that there exists some partition $S_1$, $S_2$ of $S$ such that in some iteration $\omega(S_1) \geq \best_1$ and $\omega(S_2) \geq \best_2$.
Note that since $\omega(e)\geq 0$ for every $e \in E(G)$, we may assume that $S$ is a forest: if $S$ would have a cycle, then we could remove any edge on the cycle to obtain a new $S$ with less or equal weight.

Consider the subgraph $H=(V(S)\cup V(\cK(T)) \cup B,S \cup \cK(T)^\flat)$ of $G$, with the embedding inherited from the embedding of $G$. Let $H^*$ be the planar dual of $H$; that is, for every face of $H$ we create a vertex $H^*$, and two vertices in $H^*$ are connected with an edge in $H^*$ if and only if the corresponding faces in $H$ share an edge.

\begin{claim}\label{clm:twbound}
	$\tw(H) \leq 15\sqrt{|\cK(T)|}+1$.
\end{claim}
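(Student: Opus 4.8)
$\tw(H) \leq 15\sqrt{|\cK(T)|}+1$, where $H=(V(S)\cup V(\cK(T)) \cup B,\ S \cup \cK(T)^\flat)$.

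The plan is to bound $\tw(H)$ through its planar dual $H^*$, using the two treewidth tools already recorded: I will exhibit a dominating set of $H^*$ of size at most $|\cK(T)|$, apply Theorem~\ref{thm:twvsdomset} to get $\tw(H^*)\le 15\sqrt{|\cK(T)|}$, and then apply Theorem~\ref{thm:dualtw} to conclude $\tw(H)\le \tw(H^*)+1\le 15\sqrt{|\cK(T)|}+1$. Recall that we may assume $S$ is a forest, so $H=S\cup\cK(T)^\flat$ is a forest together with the boundary cycles of the faces in $\cK(T)$ (each such boundary is a genuine cycle since $G$ is $2$-connected). For every $K\in\cK(T)$, all edges of $K$ lie in $\cK(T)^\flat\subseteq E(H)$ and the region bounded by $K$ contains no edge of $G$, hence none of $H$; therefore $K$ is still a face of $H$, and I write $f_K^*$ for the corresponding vertex of $H^*$. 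I then set $D:=\{f_K^*: K\in\cK(T)\}$, so $|D|\le|\cK(T)|$.

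The crux is to show that $D$ dominates $H^*$, i.e.\ every face $g$ of $H$ either equals some $f_K$ or shares an edge with one. I would first argue that the boundary of $g$ must contain an edge of $\cK(T)^\flat$. Suppose not; then deleting all edges of $\cK(T)^\flat$ from $H$ does not touch the boundary of $g$, so $g$ survives as a face of the resulting graph, which is a sub-forest of $S$ (with possibly some isolated vertices). Such a forest, drawn in the plane (equivalently on the sphere), has exactly one face, so $g$ would have to be that face; but re-inserting the edges of $\cK(T)^\flat$ draws a whole cycle $\partial K$ inside this region (here $\cK(T)\neq\emptyset$, the opposite case making $H$ a forest and the bound trivial), subdividing it into at least two faces of $H$ — contradicting that $g$ is one of them. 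Hence some $e\in E(\partial g)$ lies on a boundary cycle $\partial K$ with $K\in\cK(T)$. Since $e$ lies on the cycle $\partial K\subseteq H$ it is not a bridge of $H$, so it borders exactly two faces of $H$, one of which is the region $f_K$. Thus $g=f_K$ or $g$ is adjacent to $f_K^*$ in $H^*$, which proves the domination.

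Finally I would assemble the pieces: $H^*$ is planar and has a dominating set of size at most $|\cK(T)|$, so Theorem~\ref{thm:twvsdomset} gives $\tw(H^*)\le 15\sqrt{|\cK(T)|}$, and Theorem~\ref{thm:dualtw} upgrades this to $\tw(H)\le 15\sqrt{|\cK(T)|}+1$. The step I expect to be most delicate is the topological claim that a face of $H$ whose boundary avoids $\cK(T)^\flat$ is forced to be the unique face of a forest: this needs careful bookkeeping of the embedding inherited from $G$ (in particular, since $H$ need not be connected, one should read ``face'' and ``planar dual'' in the sense fixed just before the claim, and apply Euler's formula on the sphere). Everything else — that deleting $\cK(T)^\flat$ leaves a forest, that each $K\in\cK(T)$ persists as a face of $H$, that an edge on a cycle is not a bridge — is routine once the embedding is set up.
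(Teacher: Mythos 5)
Your proof is correct and takes essentially the same approach as the paper: both exhibit the faces in $\cK(T)$ (viewed as vertices of $H^*$) as a dominating set of $H^*$, using the key observation that $H$ minus the edges $\cK(T)^\flat$ is a forest and hence has a unique face, and then apply Theorems~\ref{thm:twvsdomset} and~\ref{thm:dualtw}. The paper phrases the domination argument dually, via the correspondence between deleting face-boundary edges in $H$ and contracting neighborhoods in $H^*$, while you argue primally that every face of $H$ must border an edge of some $\partial K$; these are two readings of the same idea.
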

\begin{proof}
If $\cK(T)=\emptyset$, then $H$ is a forest and $\tw(H)=1$.
Otherwise, we claim that $\cK(T) \subseteq V(H^*)$ is a dominating set of $H^*$.
To see this, note that removing the edges of a face in the primal $H$ amounts to contracting the neighborhood of the corresponding vertex in the dual to a single vertex.
Therefore we know that if we contract the sets $N_{H^*}[v]$ for all $v \in \cK(T)$, we are left with a single vertex (being the planar dual of a forest).
This implies that $\cK(T)$ is a dominating set of $H^*$: if there is a vertex in $V(H^*)\setminus N_{H^*}[\cK]$, it would still be a vertex in the graph after contracting and there would be at least two vertices in the planar dual of the forest, which is a contradiction.

Now we use Lemma~\ref{thm:twvsdomset} to obtain $\tw(H^*) = 15\sqrt{|\cK(T)|}$. As $H^*$ is the dual of $H$, by Theorem~\ref{thm:dualtw} we have the bound $\tw(H)\leq 15\sqrt{|\cK(T)|}+1$. 
\end{proof}
Now we consider the following weight function $w: V(H) \rightarrow \mathbb{N}$: 
\begin{itemize}
	\item For every $v \in V(G)$, set $w(v)=0$
	\item For every face $K$ in $\cK(T)$, arbitrarily pick a vertex $v \in V(K)$ (which could be in $B$) and set $w(v)=1$,
	\item For every $v \in B$, set $w(v)= w(v)+1$.
\end{itemize}

By Claim~\ref{clm:twbound} and Lemma~\ref{lem:balcyc} there is a $w$-weighted $\tfrac{2}{3}$-balanced separation $(Y,Z)$ in $H$ such that $|Y\cap Z| \leq 15\sqrt{k}+2$. 
In some iteration of the loop at Line~\ref{line:guesssep}, the algorithm will set $X=Y \cap Z$.
Since the separation $(Y,Z)$ of $H$ is balanced with respect to $w$, we have that $w(Y),w(Z)\leq \tfrac{2}{3}w(V(G))$. This implies that $\frac{|B \cap (Y \setminus X)|+|\cK(Y) \setminus \cK(X)|}{|\cK(T)|+|B|} \in [\tfrac{1}{3},\tfrac{2}{3}]$. Therefore, the algorithm will set $B_1=(B \cap Y) \setminus X$ and $\cK_1=\cK(Y)\setminus\cK(X)$ in some iteration of the loop at Line~\ref{line:guesssplit}.

Note that in this iteration we set $B_2 = (B \setminus X)\setminus B_1$ which equals $(B \cap Z) \setminus X$, since $B \subseteq Y \cup Z$. 
Similarly, we set $\cK_2=(\cK(T) \setminus \cK(X))\setminus \cK_1$.

Note that $\cK_2=\cK(Z)\setminus \cK(X)$, because if a face has vertices from both $Y$ and $Z$, then it must also have vertices from $X$ as $(X,Y)$ is a separation and every face is a cycle (this follows as $G$ is $2$-connected, see i.e.~\cite[Theorem 4.2.5]{DBLP:books/daglib/0030488}). Moreover, if two vertices $a,b \notin X$ are in the same connected component $C \in \cc(K,X)$, then either both are in $Y$ or both are in $Z$, as $(Y,Z)$ is a separation of a graph containing the edge set $K$. 

Thus, at some iteration of the loop at Line~\ref{lin:splitterminalfaces} the algorithm will set $\cA_1$ such that for every face $K\in\cK(X)$, any connected component $C$ of $\cc(K,X)$ is contained in $Y$ if it is in $\cA_1^\flat \cap V(K)$, and it is contained in $Z$ otherwise. It follows that there is some iteration in which $T_1 \setminus X = (Y \setminus X) \cap T$ and $T_2 \setminus X= (Z \setminus X) \cap T$.

The separation $(Y,Z)$ of $H$ induces a partition of $S$ into two subforests $S_1,S_2$, where $S_1= \{ \{u,v\} \in S: \{u,v\} \in Y \}$ and $S_2:= S \setminus S_1$ (note that we add edges of $S$ contained in $E(X)$ to $S_1$ and not to $S_2$).

Let $\pi_1$ be the partition on $B_1 \cup X$ where $\{u,v\} \in B_1 \cup X$ are in the same block of $\pi_1$ if and only if $u$ and $v$ are connected in the graph $(V,S_1)$.
Similarly, let $\pi_2$ be the partition on $B_2 \cup X$ where $\{u,v\} \in B_2 \cup X$ are in the same block of $\pi_2$ if and only if $u$ and $v$ are connected in the graph $(V,S_2)$.

Since $S=S_1 \cup S_2$ is a $(G,B,\pi,T)$-Block Steiner Forest, we see that $\pi_1 \sqcup \pi_2$ equals $\pi$, and that it satisfies the conditions checked on Line~\ref{lin:respart}.
As the algorithm loops over all options of $\pi_1,\pi_2$ on Line~\ref{lin:looppartitions}, eventually it will try the pair $\pi_1,\pi_2$.

We can conclude that $S_1$ is a $(G,B_1 \cup X,\pi_1,T_1)$-Block Steiner Tree, and that $S_2$ is a $(G,B_2 \cup X,\pi_2,T_2)$-Block Steiner Tree.
As $\cK_1 \subseteq \cK(T)$ and $\cK_2 \subseteq \cK(T)$ we have by induction that $\omega(S_1) \geq \best_1$ and, $\omega(S_2) \geq \best_2$. Now $\omega(S)=\omega(S_1)+\omega(S_2)\geq \best_1 + \best_2 \geq \best$ follows. This concludes the proof.
\end{proof}

\begin{proof}[Proof of Theorem~\ref{thm:upperbound}]
	Arbitrarily pick a terminal $t_0 \in T$. Then $\mathtt{steiner}(G,\omega,\{t_0\},\{\{t_0\}\},T_0\setminus t_0,\cK)$ will be the minimum weight of a tree connecting $T$ by Lemma~\ref{lem:steineralgo}, which is exactly what needs to be computed in the {\pPST} instance.
	Since $G$ is subcubic, every vertex is in at most three faces of $\cK$ and thus, $|\cK(X)|+|X|\leq 4|X|$. 
	Therefore, if $|\cK|+|B|$ is larger than some constant $c_0$, then in a recursive call with parameters $\cK',B'$ we have
	\[
		|\cK'|+|B'|\leq \tfrac{2}{3}(|\cK|+|B|)+4(15\sqrt{|\cK|+|B|}+2)\leq \tfrac{3}{4}(|\cK|+|B|).
	\]
	Thus the recursion depth of $\mathtt{steiner}$ is at most $O(\log |\cK|)$, and $|B| = O(\sqrt{k}\log k)$ for any recursive call.
	
	If we let $T(n,p)$ denote the running time of $\mathtt{steiner}$ when $|B|+|\cK|=p$ we see that
	\[ T(n,p) =
	\begin{cases}
		n^{O(1)} 									& \text{ if $p$ is constant and $\mathtt{steinerBase}$ is called} \\
		n^{O(\sqrt{p})}2^{O(p)} T(n,\tfrac{3}{4}p)  & \text{otherwise.}
	\end{cases}
	\]
	To see this, note the loop at Line~\ref{line:guesssep} has at most $n^{O(|X|)}$ iterations; the loop at Line~\ref{line:guesssplit} has at most $2^{|B|+|\cK|}$ iterations, the loop on Line~\ref{lin:splitterminalfaces} has at most $2^{|X|}$ iterations (as $|\cc(K,X)|\leq |X|$), and the loop on Line~\ref{lin:looppartitions} has at most $(|B|+|X|)^{O(|B|+|X|)}$ iterations (as $|\pi(B)| =|B|^{O(|B|)}$). All other operations (apart from the recursion) require polynomial time.
	This recurrence solves to $2^{O(p)}n^{O(\sqrt{p})}$, thus the theorem follows.	
\end{proof}

\section{Lower Bound}\label{sec:lower}
In this section, we aim to prove Theorem~\ref{thm:lowerbound}. Throughout, for any integer $n$, let $[n] = \{1,\ldots,n\}$ (where $[0] = \emptyset$) and let $[n] \times [n] = \{(1,1),\ldots,(1,n), (2,1),\ldots,(n,n)\}$.

We present a reduction from {\pGT}, which is defined as follows. An instance of {\pGT} consists of two integers $n$ and $k$, and $k^2$ sets $M_{a,b} \subseteq [n] \times [n]$ for $a,b \in [k]$. Let $\mathcal{M} = \{M_{a,b} \mid a,b \in [k]\}$. Since $n$ and $k$ can be derived by inspecting $\mathcal{M}$, we may specify the instance by $\mathcal{M}$ alone. The {\pGT} problem asks to decide whether there exist integers $x_a \in [n]$ and $y_a \in [n]$ for $a \in [k]$ such that $(x_a,y_b) \in M_{a,b}$ for all $a,b \in [k]$. In this case, we call $x_1,\ldots,x_k,y_1,\ldots,y_k$ a solution to the instance. The following statement is known for {\pGT}.

\begin{theorem}[\cite{DBLP:conf/icalp/Marx12,DBLP:books/sp/CyganFKLMPPS15}] \label{thm:g:gt}
There is no $f(k) \cdot n^{o(k)}$-time algorithm for {\pGT} for any computable function $f$, unless the Exponential Time Hypothesis fails.
\end{theorem}
The reduction is implied by the following theorem.

\begin{theorem} \label{thm:g:bound}
Let $\mathcal{M}$ be an instance of {\pGT}, with associated integers $n$ and $k$. Then in time polynomial in $n$ and $k$, one can construct an integer $K_{\mathcal{M}}$ a planar graph $\mathcal{G}_{\mathcal{M}}$ and set $T_{\mathcal{M}}$ of terminals such that
\begin{itemize}
\item $\mathcal{G}_{\mathcal{M}}$ has size $O(k^{16} n^{27})$;
\item $T_{\mathcal{M}}$ can be covered by $k(k-1)+1$ faces of an embedding of $\mathcal{G}_{\mathcal{M}}$;
\item $\mathcal{M}$ admits a solution if and only if $\mathcal{G}_{\mathcal{M}}$ admits a Steiner tree of at most $K_{\mathcal{M}}$ edges.
\end{itemize}
\end{theorem}

The construction draws on ideas from Marx et al.~\cite{MarxPP17}, but differs in crucial points. A sequence of the elaborate verification gadgets by Marx et al.~\cite{MarxPP17} is capable of communicating while relying on only constantly many terminals as long as the Steiner tree connects these gadgets in a chain. In essence, the gadgetry can be used to represent the choices in a single row of the $k\times k$ grid, and ensure the communication. In addition, each copy of the verification gadget is capable of extracting $1$ bit of information vertically. Unfortunately such gadgetry cannot be used for communicating both horizontally and vertically in the $k\times k$ grid, since the chaining property would mean that the Steiner tree would have to induce cycles. To get around this problem, Marx et al. designs a ``connector gadget'' that can transmit one bit of information vertically, without introducing connectivity. The gadget uses four terminals adjacent that are on one face. Since we would need a large number of terminal faces to communicate the required bits vertically, the connector gadget does not yield a parameterized reduction for our parameter $k$. In order to extract multiple bits, we modify the gadget sequence into $n$-tuples of verification gadgets; this only leaves open the issue of communicating $\omega(1)$ bits of information without connectivity, and using only $O(1)$ terminal faces.

Given Theorem~\ref{thm:g:bound}, Theorem~\ref{thm:lowerbound} is quickly proven.

\begin{proof}[Proof of Theorem~\ref{thm:lowerbound}]
Suppose there is an $f(k)n^{o(\sqrt{k})}$-time algorithm $\mathcal{A}$ for {\pPST} for some computable function $f$. We now construct a fast algorithm for {\pGT}. Let $\mathcal{M}$ be an instance of {\pGT}, with associated integers $n$ and $k'$. Apply Theorem~\ref{thm:g:bound} to $\mathcal{M}$, which takes time polynomial in $n$ and $k'$, and let $K_{\mathcal{M}}$ be the resulting integer, $\mathcal{G}_{\mathcal{M}}$ the resulting planar graph, and $T_{\mathcal{M}}$ the corresponding terminal set. Run $\mathcal{A}$ on $\mathcal{G}_{\mathcal{M}}$ and let $S$ denote the resulting Steiner tree. Answer ``yes'' if $S$ has at most $K_{\mathcal{M}}$ edges, and answer ``no'' otherwise. This completes the description of the algorithm.

The correctness of the algorithm is immediate from the third item of Theorem~\ref{thm:g:bound}. Since $T_{\mathcal{M}}$ can be covered by $k'(k'-1)+1$ faces of an embedding of $\mathcal{G}_{\mathcal{M}}$ and $\mathcal{G}_{\mathcal{M}}$ has size $O((k')^{16} n^{27})$, the algorithm runs in $f'(k') n^{o(k')}$ time for some computable function $f'$. According to Theorem~\ref{thm:g:gt}, this implies that the Exponential Time Hypothesis fails.
\end{proof}

We now set out to prove Theorem~\ref{thm:g:bound}. Throughout, let $N, L, t$ be large integers to be chosen later. Let $M$ be a large integer such that $M > 10\cdot NL$ and let $M_i = M^i$. The construction consists of two types of gadgets. The first is the flower gadget (the main novelty of our construction), the second are the verification gadgets. We now present both types in detail and discuss their properties. Then we show how these gadgets can be brought together and prove Theorem~\ref{thm:g:bound}.


\subsection{Flower Gadget}

The gadget is based on a ``rolled'' grid whose edges have a special weighting scheme. It is easier to study the unrolled version of the grid first, which we do in Section~\ref{sec:unrolled}. We establish some properties of the metric space induced by the weighting, and prove the key statement (Lemma~\ref{lem:triangle}) that we need for Steiner trees in this unrolled context. Then, in Section~\ref{sec:rolled}, we present the full rolled gadget and prove the essential properties we need in the final construction.

\subsubsection{The unrolled grid} \label{sec:unrolled}

For $a,b \in \Ints$, the discrete interval with endpoints $a,b$ is the set $\{a,a+1,\dots,b\}$, which we denote by $\lb a,b\rb$. The discrete intervals form a poset with respect to the containment relation: let $\Gamma$ be the (undirected) Hasse diagram of this poset. Equivalently, $\Gamma$ is the subgraph of the square grid restricted to integer points on or above the line $y-x=0$. In this diagram, we can talk about ancestors and descendants; in particular, for any pair of points $p$ and $q$ the lowest common ancestor (the smallest interval that contains both $p$ and $q$) is well-defined.

We define a weight function $w$ on the edges of $\Gamma$ in the following way. For an edge $pq$, let $q=\lb a,b\rb$ be the larger discrete interval, that is, suppose $p\subset q$. Then the weight $w(pq)$ is set to $2^{-\lfloor\log(b-a)\rfloor}$. See Figure~\ref{fig:gamma}.

\begin{figure}
\begin{center}
\includegraphics[scale=0.75]{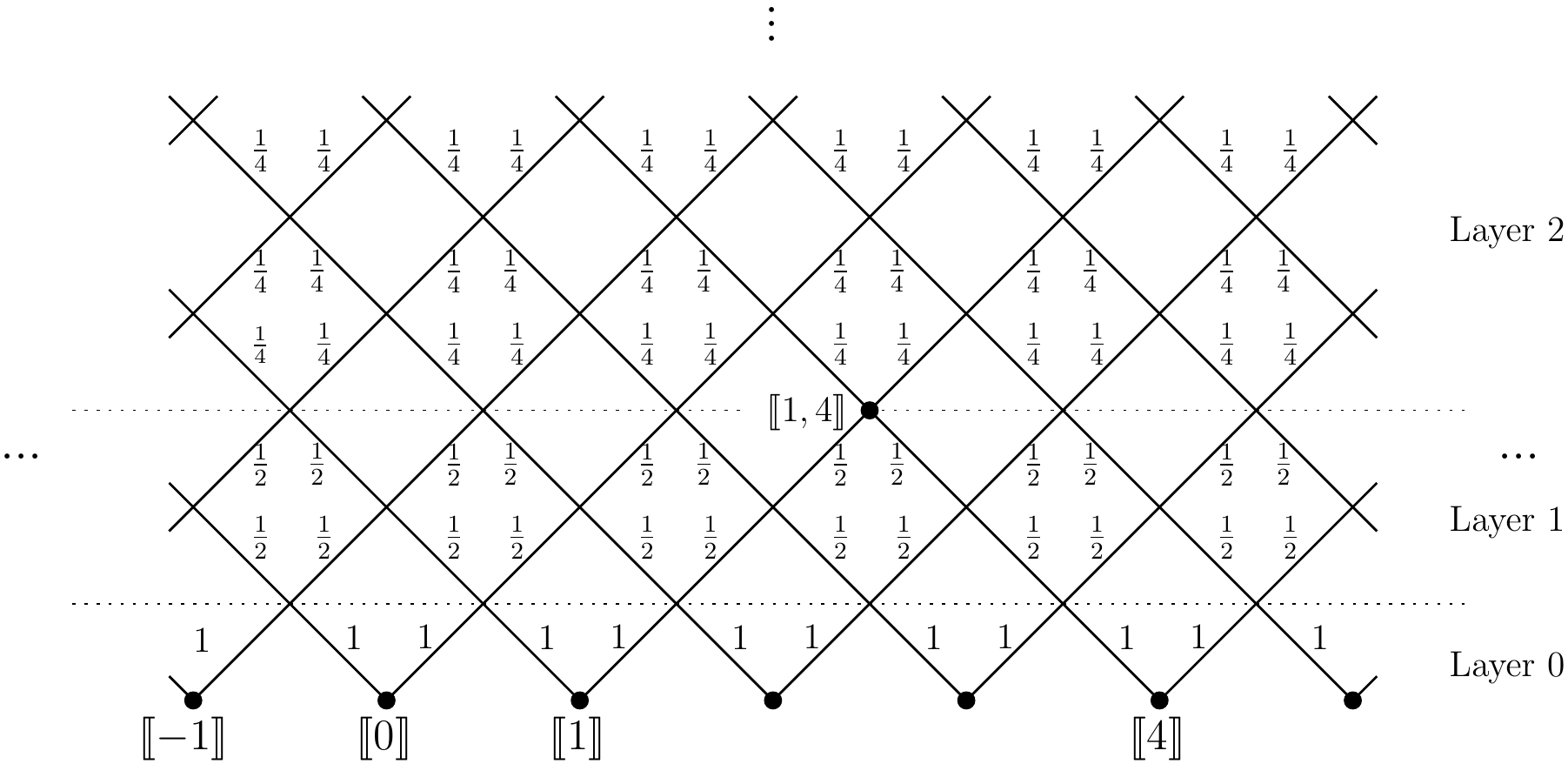}
\end{center}
\caption{The Hasse diagram $\Gamma$ of the poset of discrete intervals. Indicated are the weights assigned by the weight function $w$ as well as the different layers.}\label{fig:gamma}
\end{figure}

We introduce some more terminology and notation. For discrete intervals that are singletons, we use the shorthand $\lb a \rb = \lb a,a \rb$. A \emph{monotone path} in $\Gamma$ is a path $p_1,p_2,\dots,p_k$ where $p_1\subset p_2 \subset \dots \subset p_k$, and if additionally all $p_1,\ldots,p_k$ have either a common left or right endpoint we call the path \emph{straight}.
 For a pair of intervals that share an endpoint, the monotone path between them is unique. The \emph{triangle} of a discrete interval $q=\lb a,b\rb$ is the set of its subintervals; we denote this set by $\Delta(q)$ or $\Delta(\lb a,b\rb)$. The lowest common ancestor of the intervals $p$ and $q$ is denoted by $p \wedge q$. Let $\dist$ denote the shortest path distance in $\Gamma$, i.e., $\dist(a,b):=\inf \{\sum_{pq \in P} w(pq) \;|\; P\text{ is a path from a to b}\}$. For a vertex subset $S\subset V(\Gamma)$, let $\dist(p,S) = \inf_{s\in S} \dist(p,s)$; the notion of distance is extended to sets $A,B$ by letting $\dist(A,B) = \inf_{a\in B, b \in B} \dist(a,b)$.
 Note that if $p$ is a point and $A,B$ are sets we have that $\dist(p,A)+\dist(p,B)\geq \dist(A,B)$ since all distances are non-negative, but the triangle inequality does not hold for sets: It is easy to come up with sets $A,B,C$ such that $\dist(A,B)+\dist(B,C)\leq \dist(A,B)$.
 A \emph{layer} is a maximal subset of edges in $\Gamma$ of the same weight, i.e., layer $i$ has weight $2^{-i}$. See Figure~\ref{fig:gamma} for an illustration of the weights.

For any $x \in \Ints$, the \emph{vertical} at $x/2$ is $V_{x/2} = \{\lb a,b\rb \mid \frac{a+b}{2} = \frac{x}{2}\}$. Note that if $p = \lb a,b \rb$, then $p \in V_{(a+b)/2}$; also, if $x$ is a multiple of $2$, then $\lb x/2 \rb \in V_{x/2}$. A \emph{column} is the set of edges between two consecutive verticals $V_{x/2}$ and $V_{(x+1)/2}$. For any $y \in \mathbb{N}$, the \emph{horizontal} at $y$ is $H_{y} = \{ \lb a,b \rb \mid b-a = y\}$. Note that $H_0 = \{\lb a \rb \mid a \in \Ints\}$. A \emph{row} is the set of edges between two consecutive horizontals $H_y$ and $H_{y+1}$, and the \emph{height} of an edge is the index of the horizontal passing through its lower endpoint. Notice that the weight of edges is weakly decreasing as the height is increasing.

\begin{prop}\label{prop:shortestpaths}
If $p \subseteq q$ are discrete intervals, then any monotone path between them is a shortest path. If $p=\lb a,b\rb$ is a discrete interval and $x \in \Ints$, then the distance from $p$ to the vertical $V_{x/2}$ is realized by the straight monotone path from $p$ to its lowest ancestor in $V_{x/2}$. Finally, if $p$ and $q$ are incomparable, then the union of the straight paths $p\to (p\wedge q)$ and $(p\wedge q) \to q$ is a shortest path from $p$ to $q$.
\end{prop} 

\begin{proof}
For the first claim, if $p\subset q$, then any path between them must contain edges that traverse from a discrete interval of size $|p|$ to a discrete interval of size $|p|+1$, an edge from size $|p|+1$ to $|p|+2$, etc., and an edge from size $|q|-1$ to $|q|$. Any monotone path contains only one of each edge listed. Furthermore, edges where the intervals have identical size have identical weight. Hence, all monotone paths are shortest paths.

\begin{figure}
\begin{center}
\includegraphics[scale=0.65]{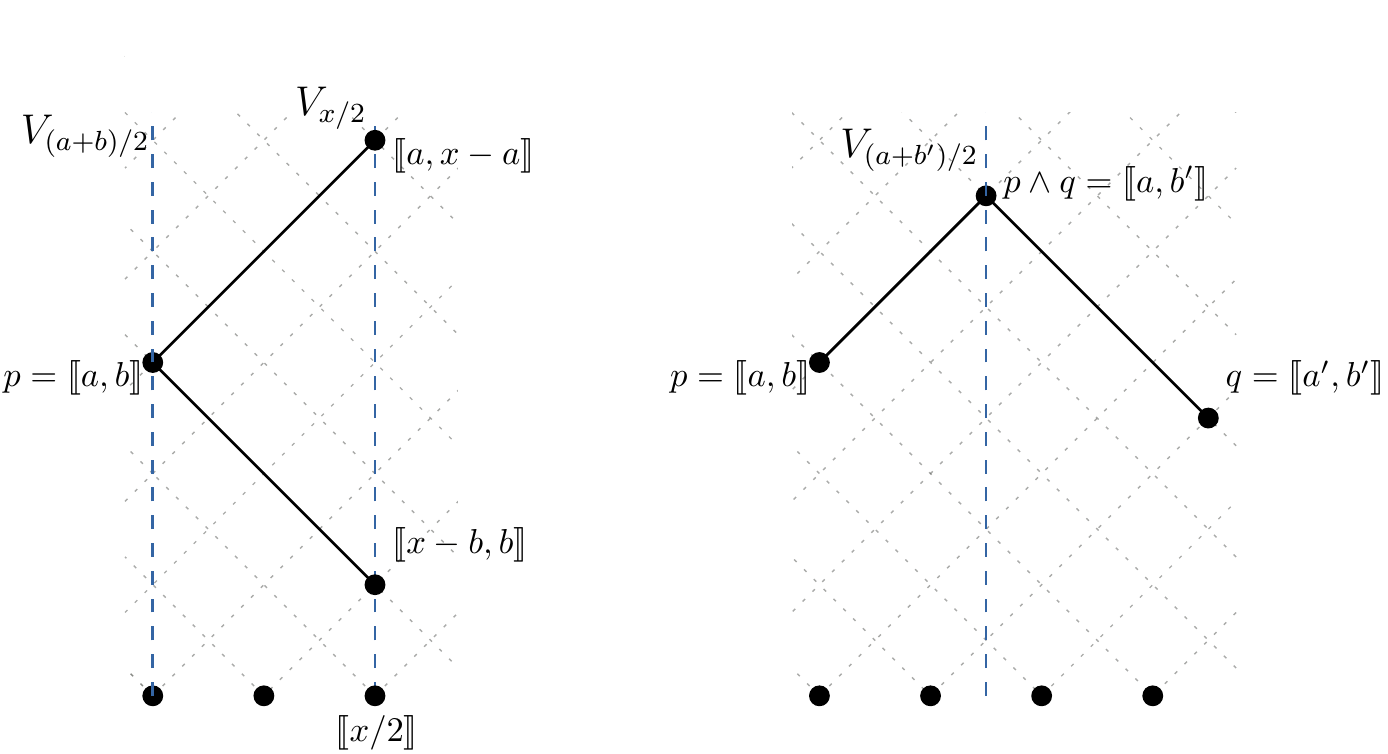}
\end{center}
\caption{Left: Distance from $p$ to a vertical. Right: The distance between a pair of non-comparable points $p$ and $q$.}\label{fig:shortest}
\end{figure}

For the second claim, suppose without loss of generality that $x\geq a+b$. Let $v\in V_{x/2}$ be a vertex where $\dist(p,V_{x/2})=\dist(p,v)$ (see Figure~\ref{fig:shortest}). Since the distance from a point to a set is defined as an infimum, we first show that such a vertex minimizing the distance exists. Note that all but finitely many vertices of $V_{x/2}$ are ancestors of $p$, and among the ancestors the closest one is $\lb a,x-a\rb$ since it minimizes the set of rows that it needs to pass. Therefore, the minimum distance is either realized by $\lb a,x-a\rb$ or a non-ancestor of $p$, of which there are finitely many; consequently, such a vertex $v$ exists, and if it is an ancestor of $p$, then it must be $\lb a,x-a \rb$.

We now use induction on $x-(a+b)$; let $\pi$ denote the straight path from $p$ to $\lb a,x-a\rb$. Clearly if $x-(a+b)\leq 1$ the claim holds. Otherwise suppose (for the sake of contradiction) that there is a vertex $v\in V_{x/2}$ that is not an ancestor of $p$ such that $\dist(p,v)<\dist(p,\lb a,x-a \rb)$; let $\pi'$ be a shortest path from $p$ to $v$. If $\pi'$ intersects $\pi$ at some internal vertex $q=\lb a,b' \rb$, then by induction $\pi'$ is not lengthened if we replace its portion from $q$ to $v$ by the straight path from $q$ to $\lb a,x-a\rb$; but the resulting path $\pi''$ is a path from $p$ to $\lb a,x-a\rb$, which cannot be strictly shorter than $\pi$. If $\pi'$ is internally disjoint from $\pi$, then it must traverse an edge in each column between $V_{(a+b)/2}$ (the vertical containing $p$) and $V_{x/2}$ somewhere below $\pi$. But notice that $\pi$ is not longer than any such path, since it contains exactly one edge from each of these columns, and in each column its edges have larger height, and therefore less or equal weight; this is a contradiction.


For the third claim, if $p$ and $q$ are incomparable, then let $\lb a,b\rb=p$ and $\lb a',b'\rb=q$. Without loss of generality, $a < a'$ and $b < b'$. Let $V$ be the vertical passing through $p \wedge q$, that is, let $V=V_{(a+b')/2}$. Note that $\lb a,b\rb \wedge \lb a',b'\rb = \lb a,b'\rb \in V$. Since $V$ separates $p$ and $q$, we have that $\dist(p,q)\geq \dist(p,V)+\dist(V,q)$. By the previous claim, we have $\dist(p,V)=\dist(p,\lb a,((a+b')-a)\rb) = \dist(p,p \wedge q)$. Then, by symmetry, we have $\dist(q,S)=\dist(q,p \wedge q)$,  and the claim holds.
\end{proof}

The \emph{left diagonal} at $b$ is $LD_b = \{ \lb x,b\rb \;|\; x \leq b \}$, the \emph{right diagonal} at $a$ is $RD_a = \{ \lb a,x\rb \;|\; x \geq a \}$. Figure~\ref{fig:triangle_lemma} shows examples of both.

\begin{prop}\label{prop:diagonaldistance}
The distance of $LD_b$ and $V_{b+1/2}$ is $1$. The distance of $LD_a$ and $RD_{a+1}$ is $2$. 
\end{prop}

\begin{proof}
There is a shortest path from $\lb a,b\rb$ to $V_{b+1/2}$ that is monotone and has $b-a+1$ edges and ends at $\lb a,2b+1-a\rb$ by Proposition~\ref{prop:shortestpaths}. We claim that all of these paths have weight exactly $1$. We use induction on $b-a$. The path from $\lb b\rb$ to $\lb b,b+1\rb$ has a single edge of weight $1$. Consider the shortest path starting at $\lb a,b\rb$.  By induction the shortest path starting at $\lb a+1,b\rb$ has weight $1$, so it is sufficient to show that they have equal weight. Notice that these paths traverse the same horizontal edge rows, except the row from size $|b-a-1|$ to $|b-a|$ that is only traversed by the path of $\lb a+1,b\rb$, and the rows from size $|2b-1-2a|$ to $|2b+1-2a|$ that are only traversed by the path of $\lb a,b\rb$. Notice that the edge row unique to the path of $\lb a+1,b\rb$ is in layer $\lfloor\log(b-a)\rfloor$, while the two edges unique to the path of $\lb a,b\rb$ are both in layer $\lfloor\log(2b-2a)\rfloor = \lfloor\log(2b-2a+1)\rfloor$. Consequently, the edge unique to $\lb a+1,b\rb$ is precisely one layer below the two edges unique to $\lb a,b\rb$, and thus the two paths have equal weight.

To prove the claim about the distance of diagonals, we apply the first claim: $\dist(LD_a,RD_{a+1})\leq \dist(LD_a,V_{a+1/2}) + \dist(V_{a+1/2},RD_{a+1})$ since $V_{a+1/2}$ separates $LD_a$ from $RD_{a+1}$, and both terms on the right hand side are $1$ by the first claim. On the other hand, there is a path of length $2$: the path $\lb a\rb;\lb a,a+1\rb;\lb a+1\rb$.
\end{proof}

\begin{lemma}\label{lem:triangle}
Let $p$ be a discrete interval and let $\ell \geq 0$.
The weight of any Steiner tree for the terminal set $\{p,\lb 0\rb,\dots,\lb \ell\rb\}$ is at least $2\ell+\dist(\Delta(\lb 0,\ell\rb),p)$.
\end{lemma}

\begin{proof}
Let $p=\lb a,b\rb$ and suppose without loss of generality that $a+b \ge \ell$, that is, $p$ is on or to the right of $V_{\ell/2}$. The proof is by double induction, first on $\ell$, and second for a fixed $\ell$ on the distance $\dist(\Delta(\lb 0,\ell\rb),p)$. Clearly for $\ell=0$, the Steiner tree is at least as long as the distance from $\lb 0\rb$ to $p$. Let $\ell\geq 1$, and let $S$ be the Steiner tree.
We distinguish several cases based on the location of $p$, see Figure~\ref{fig:triangle_lemma}.

\paragraph*{Case 1.} $p \in \Delta(\lb 0,\ell \rb)$, (that is, $a \geq 0$ and $b \leq \ell$)\\
If $p$ has degree $1$ (in $S$), then let $q$ be the nearest vertex within $S$ to $p$ that has degree at least $3$. The tree $S$ is also a Steiner tree for the terminal set $\{q,\lb 0\rb,\dots,\lb \ell\rb\}$, and $\dist(\Delta(\lb 0,\ell\rb),p) = 0 \leq \dist(\Delta(\lb 0,\ell\rb),q)$, so it is sufficient to prove the claim for $q$ instead of $p$. Therefore, without loss of generality, assume that $p$ has degree at least $2$. Let $r$ be a neighbor of $p$. The edge $pr$ defines two subtrees rooted at $p$: one where the shortest path to $p$ traverses $pr$ and one where it does not. Each tree must contain some non-empty sub-interval of the terminals $\lb 0\rb,\dots,\lb \ell\rb$; suppose that $S_1$ contains $\lb 0\rb,\dots,\lb x\rb$ and $S_2$ contains $\lb x+1\rb,\dots,\lb \ell\rb$.

By induction on $\ell$, we have that
\begin{align*}
w(S) &\geq w(S_1) + w(S_2)\\
&\geq 2x + \dist(p,\Delta(\lb 0,x \rb)) + 2(\ell-x-1) + \dist(p,\Delta(\lb x+1,\ell \rb))\\
&\geq 2\ell -2 + \dist(\Delta(\lb 0,x \rb),\Delta(\lb x+1,\ell \rb))\\
&\geq 2\ell -2 + \dist(LD_x,RD_{x+1})\\
&\geq 2\ell,
\end{align*}
where the second inequality follows from the induction hypothesis, the fourth inequality follows as the diagonals separate $\Delta(\lb 0,x \rb)$, and $\Delta(\lb x+1,\ell \rb)$, and the last inequality follows from Proposition~\ref{prop:diagonaldistance}.

\paragraph*{Case 2.} $p \not\in \Delta(\lb 0,\ell \rb)$, (that is, $a < 0$ or $b < \ell$)\\
Without loss of generality, assume that $p$ is a vertex of $V(S)$ that maximizes $\dist(p,\Delta(\lb 0,\ell \rb))$. (Note that for a given tree $S$, the lemma gives the strongest lower bound for such a vertex $p$). Furthermore, among vertices maximizing this distance, there must be at least one vertex $p$ with a neighbor $q$ where $\dist(q,\Delta(\lb 0,\ell \rb))<\dist(p,\Delta(\lb 0,\ell \rb))$. Suppose there is no such vertex $p$; then let $p$ be an arbitrary distance-maximizing vertex. There is a path from $p$ to $\lb 0 \rb$ where $p$ is at positive distance from $\Delta(\lb 0,\ell \rb)$, while $\lb 0 \rb$ is at distance $0$. So there is an edge $p'q'$ on the path where $\dist(q',\Delta(\lb 0,\ell \rb))<\dist(p',\Delta(\lb 0,\ell \rb))=\dist(p,\Delta(\lb 0,\ell \rb))$; this is a contradiction.

So we can suppose without loss of generality that $p$ has a neighbor $q$ such that $\dist(q,\Delta(\lb 0,\ell \rb))<\dist(p,\Delta(\lb 0,\ell \rb))$. If $p$ has degree $1$, then by induction on the distance we have that
\[w(S)=w(pq) + w(S\setminus \{pq\}) \geq w(pq)+  2\ell + \dist(q,\Delta(\lb 0,\ell \rb)) \geq 2\ell+\dist(p,\Delta(\lb 0,\ell\rb)).\]

Suppose now that $p$ has degree at least $2$. Similarly to Case 1, we define the trees $S_1$ containing $\lb 0\rb,\dots,\lb x\rb$ and $S_2$ containing $\lb x+1\rb,\dots,\lb \ell\rb$ based on the branching at $p$.
By induction, we have
\begin{eqnarray}
w(S) &\geq & w(S_1) + w(S_2) \nonumber\\
&\geq& 2x + \dist(p,\Delta(\lb 0,x \rb)) + 2(\ell-x-1) + \dist(p,\Delta(\lb x+1,\ell \rb)) \nonumber\\
&=& 2\ell -2 +  \dist(p,\Delta(\lb 0,x \rb)) + \dist(p,\Delta(\lb x+1,\ell \rb)) \label{eq:f:1}
\end{eqnarray}
It remains to show that $\dist(p,\Delta(\lb 0,x \rb)) + \dist(p,\Delta(\lb x+1,\ell \rb)) \geq 2 + \dist(p,\Delta(\lb 0,\ell \rb))$.

\paragraph*{Case 2a.} $\lb 0,\ell \rb \in \Delta(p)$, (that is, $a \leq 0$ and $b>\ell$)\\
We take shortest paths from $p$ to $\Delta(\lb 0,x \rb)$ and $\Delta(\lb x+1,\ell \rb)$ as suggested by Proposition~\ref{prop:shortestpaths} to lower bound the relevant distances. Note that $\lb 0,\ell \rb \in \Delta(p)$ implies that all vertices in both triangles (and also in $\Delta(\lb 0,\ell\rb)$ are descendants of $p$, so the distance can be realized from $p$ to any of the three triangles is realized by an arbitrary monotone path from $p$ to the tip of the triangle (this path traverses the least amount of horizontal rows). In particular, we can use an arbitrary monotone path $P$ from $p$ to $\lb 0,\ell\rb$ together with the monotone path $P_x$ from $\lb 0,\ell \rb$ to $\lb 0,x \rb$ to realize $\dist(p,\Delta(\lb x+1,\ell \rb))$, and we can use $P$ with the monotone path $P_{x+1}$ from $\lb 0,\ell\rb$ to $\lb x+1,\ell\rb$ to realize $\dist(p,\Delta(\lb x+1,\ell \rb))$. Therefore, 
\begin{eqnarray}
\lefteqn{\dist(p,\Delta(\lb 0,x \rb)) + \dist(p,\Delta(\lb x+1,\ell \rb))} \hspace{4cm} \nonumber\\
 &=& 2w(P) + w(P_x) + w(P_x+1) \nonumber\\
&=& 2\dist(p,\Delta(\lb 0,\ell \rb)) + \dist(\lb 0,x \rb, \lb 0,\ell \rb) + \dist(\lb 0,\ell \rb,\lb x+1,\ell \rb) \nonumber\\
&\geq& 2\dist(p,\Delta(\lb 0,\ell \rb)) + \dist(LD_x, RD_{x+1}) \nonumber\\
&\geq& \dist(p,\Delta(\lb 0,\ell \rb)) + 2, \label{eq:f:2}
\end{eqnarray}

where the last inequality uses Proposition~\ref{prop:diagonaldistance}. Then (\ref{eq:f:1}) and (\ref{eq:f:2}) combined imply that $w(S) \geq 2\ell + \dist(p,\Delta(\lb 0,\ell \rb))$, as claimed.

\begin{figure}
\begin{center}
\includegraphics[scale=0.75]{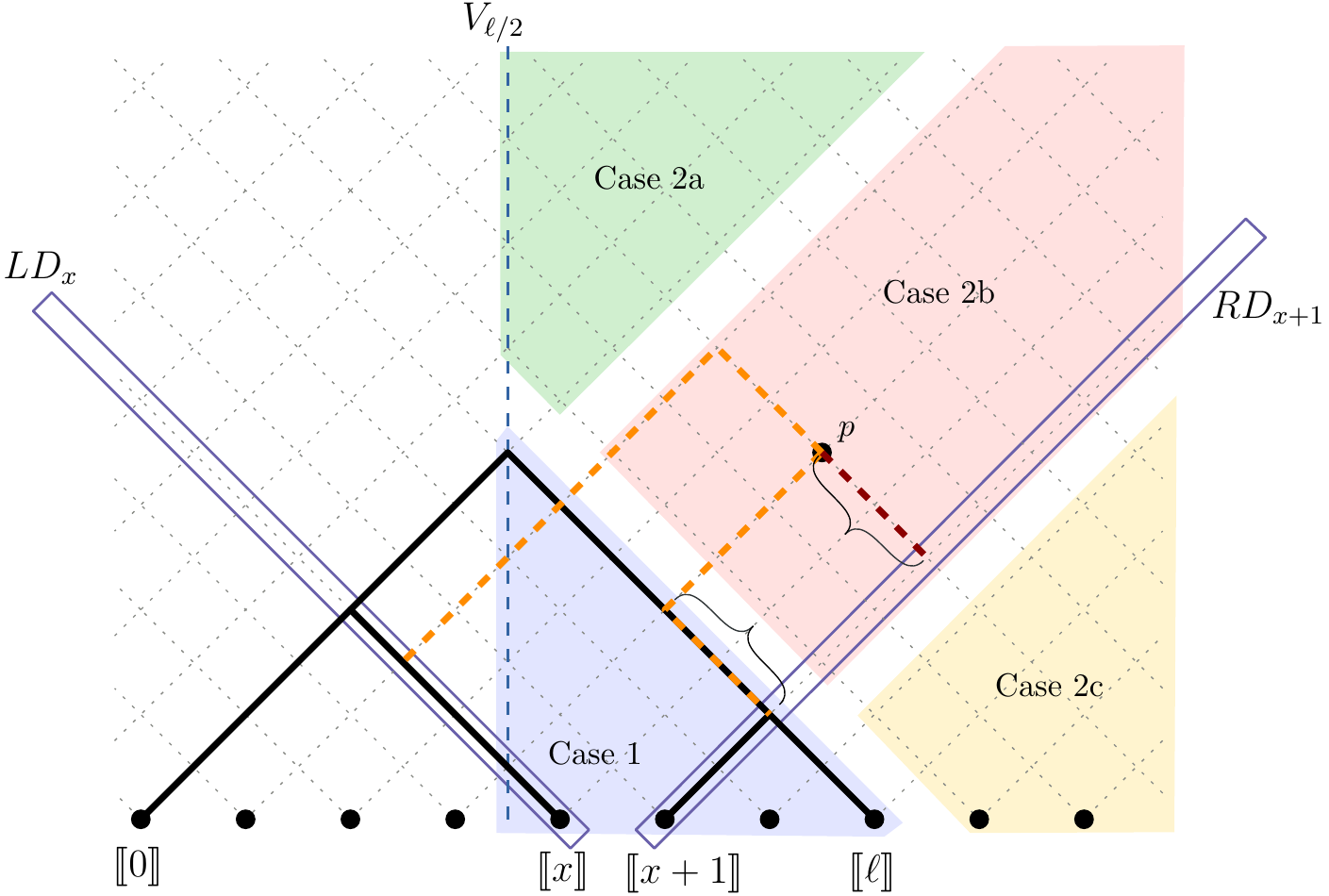}
\end{center}
\caption{The case distinction of Lemma~\ref{lem:triangle} with Case 2b illustrated.}\label{fig:triangle_lemma}
\end{figure}

\paragraph*{Case 2b.} $p$ and $\lb 0,\ell \rb$ are incomparable, and $\lb x+1,\ell \rb \in \Delta(p)$ (that is, $0< a \leq x+1$)\\
As before, it is sufficient to show that $\dist(p,\Delta(\lb 0,x \rb)) + \dist(p,\Delta(\lb x+1,\ell \rb)) \geq 2 + \dist(p,\Delta(\lb 0,\ell \rb))$.
Let $P_1$ be the unique monotone path from $p$ to $\lb a,\ell \rb$ and let $P_2$ be the unique monotone path from $\lb a,\ell \rb$ to $\lb x+1,\ell \rb$. By Proposition~\ref{prop:shortestpaths}, $P_1 \cup P_2$ realizes the distance from $p$ to $\lb x+1,\ell \rb$. Let $P_x$ be a shortest path from $p$ to $\Delta(\lb 0,x \rb)$. Next, we replace $P_2$ by $P_3$ which is defined as the unique monotone path from $p$ to $\lb x+1,b \rb$. Notice that $P_2$ and $P_3$ has the same number of edges, but the edges in $P_3$ are in higher rows and therefore $w(P_3)\leq w(P_2)$. So we have that 
$\dist(p,\Delta(\lb 0,x \rb)) + \dist(p,\Delta(\lb x+1,\ell \rb))= w(P_x) + w(P_1) + w(P_2)\geq w(P_x) + w(P_3) + w(P_1)$. Note that $P_x \cup P_3$ is a path from $LD_x$ to $RD_{x+1}$, so it has length at least $2$, and $P_1$ is a path from $p$ to $\Delta(\lb 0,\ell \rb)$, so its length is at least $\dist(p,\Delta(\lb x+1,\ell \rb))$.

\paragraph*{Case 2c.} $p$ and $\lb x+1,\ell \rb$ are incomparable  (that is, $a > x+1$)\\
We again need that $\dist(p,\Delta(\lb 0,x \rb)) + \dist(p,\Delta(\lb x+1,\ell \rb)) \geq 2 + \dist(p,\Delta(\lb x+1,\ell \rb))$.
The intervals in the area between $LD_x$ and $RD_{x+1}$ are all between $p$ and $\Delta(\lb 0,x \rb)$, so any shortest path from $p$ to $\Delta(\lb 0,x \rb)$ contains a subpath from $LD_x$ to $RD_{x+1}$, and therefore has length at least $2$. The shortest path from $p$ to $\Delta(\lb x+1,\ell \rb)$ is also a path from $p$ to $\Delta(\lb 0,\ell \rb)$ since $\Delta(\lb x+1,\ell \rb) \subset \Delta(\lb 0,\ell \rb)$; therefore, its length is at least $\dist(p,\Delta(\lb 0,\ell \rb))$.
\end{proof}

Lemma~\ref{lem:triangle} argues how a hypothetical Steiner tree must look. For actually constructing a Steiner Tree for an interval of terminals, it is easy to show by induction that there is a tree mimicking a binary tree that contains $2^k-i$ monotone paths traversing layer $i$ for each $i=0,\dots,k-1$ of small weight that gives the following observation. See Figure~\ref{fig:binarytree} for an example.
 
\begin{observation}\label{obs:existssteiner}
	Let $\ell = 2^k-1$ for some positive integer $k$. There is a Steiner tree for the terminal set $\{\lb 0,\ell \rb,\lb 0\rb,\dots,\lb \ell\rb\}$ of weight exactly $2\ell$.
\end{observation}

\begin{figure}
\begin{center}
\includegraphics[scale=0.5]{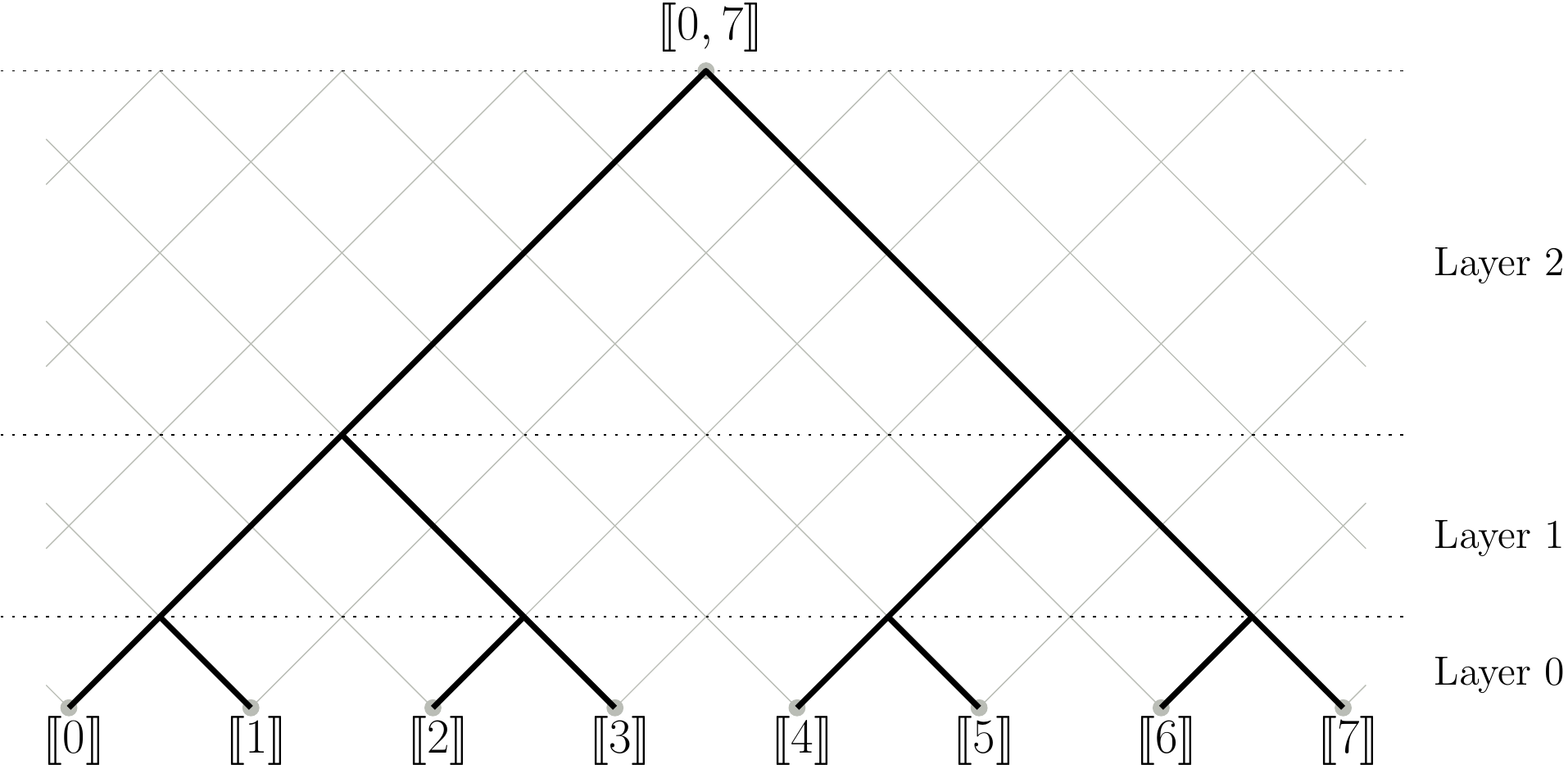}
\end{center}
\caption{Optimal Steiner tree for the terminal set $\{\lb 0,2^3-1 \rb,\lb 0\rb,\dots,\lb 2^3-1\rb\}$ with weight $14$.}\label{fig:binarytree}
\end{figure}

\subsubsection{Construction and properties of the flower gadget} \label{sec:rolled}

The flower gadget is a finite planar graph sharing many properties of $\Gamma$. We give two equivalent definitions. Let $t\geq 4$ be a  power of $2$.

The first definition is to restrict $\Gamma$ to the set $\{\lb a,b\rb \;|\; b-a \leq t/2 -1,\, a \geq 0,\, b \leq t \}$, and identify the vertex pairs $(0,b)$ and $(t,b')$ where $b'=b+t$ for all $b=1,\dots,t$. Let $\Gamma_t$ be the resulting weighted graph. The alternative and more intuitive definition requires the introduction of discrete intervals modulo $t$. Let $a,b\in \{0,1,\dots,t-1\}$. The discrete interval $\lb a,b\rb_t$ is defined as  $\{a,a+1,\dots,b\}$ if $a\leq b$ or as $\{a,\dots,t-1,0,\dots,b\}$ otherwise. Then $\Gamma_t$ is the Hasse diagram for the poset of discrete intervals modulo $t$ of size at most $t/2$, i.e., the Hasse diagram of the set 
\[\big\{\lb a,b\rb_t \,\big|\, a,b \in \{0,1,\dots,t-1\},  (a\leq b \wedge b-a\leq t/2-1) \vee (b<a \wedge t+b-a\leq t/2-1)\big\}.\]
The weighting is identical to $\Gamma$: the weight of an edge $pq$ where $p\subset q=\lb a,b\rb_t$ is $2^{-{\lfloor\log((b-a) \bmod t)\rfloor}}$. Note that $\Gamma_t$ is planar, since it can clearly be drawn on a cylinder, which is topologically equivalent to a punctured plane. See Figure~\ref{fig:flowers} for a planar embedding.

\begin{figure}
\begin{center}
\includegraphics[scale=0.6]{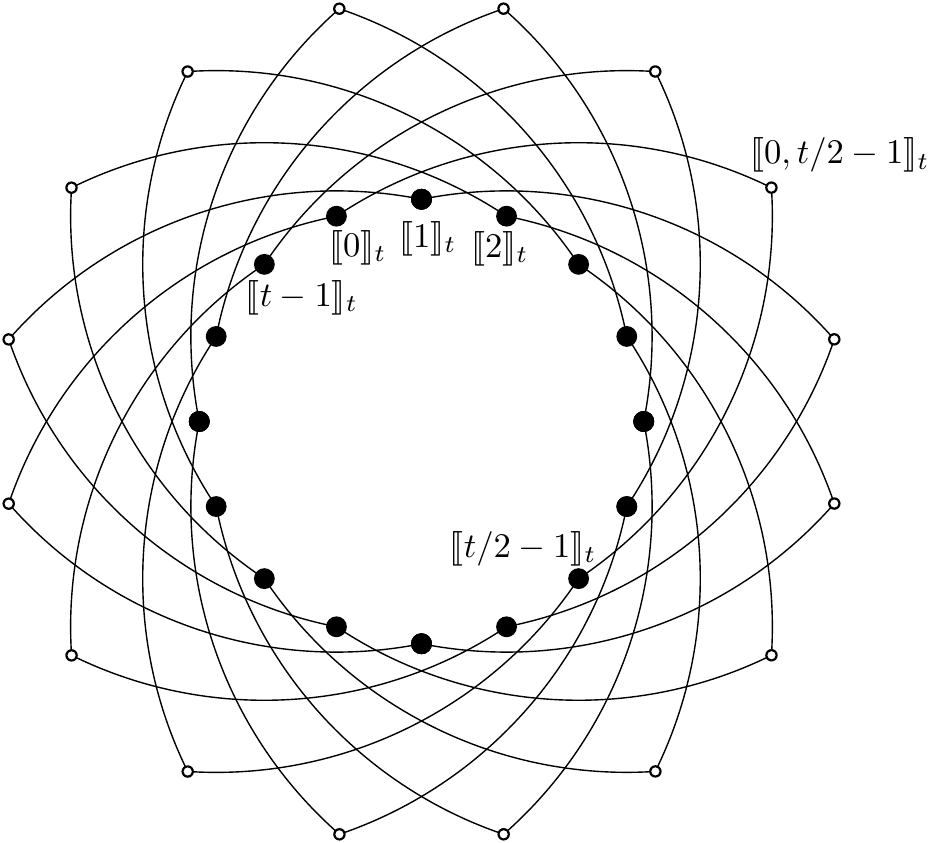}
\hspace*{1cm}
\includegraphics[scale=0.6]{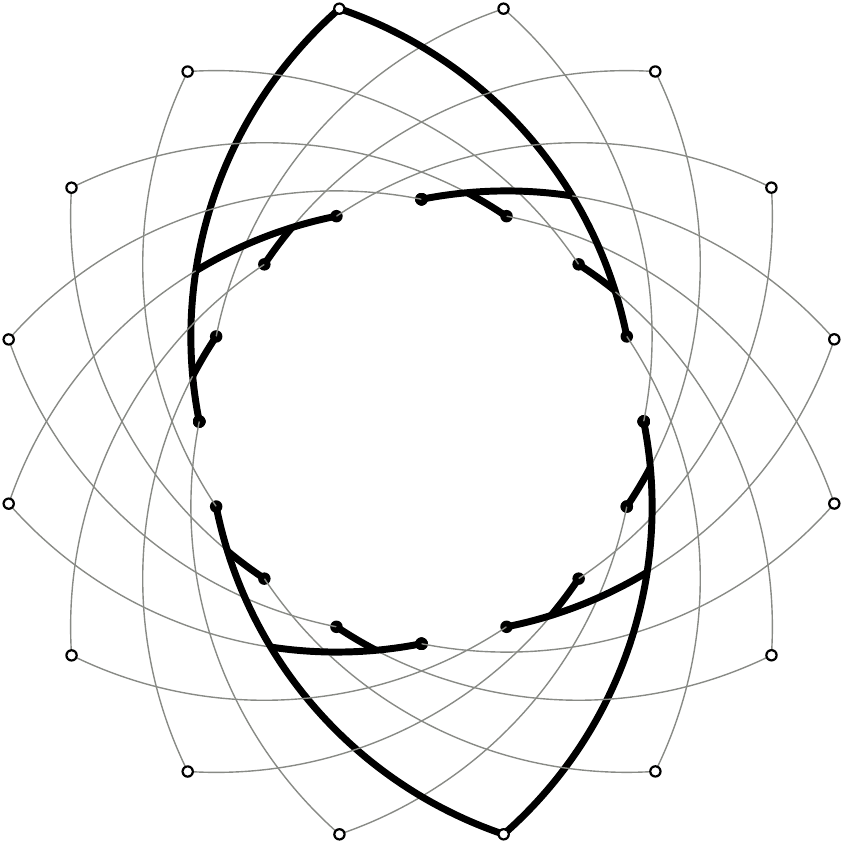}
\end{center}
\caption{Left: The flower gadget, with the terminals as black disks and portals as circles. Right: A canonical solution.}\label{fig:flowers}
\end{figure}

The terminals of the flower gadget are its discrete intervals of size $1$, and the portal vertices are the maximal discrete intervals in the poset. The portals will be used to connect a flower gadget to the rest of the lower bound construction. Note that the portals all reside on the outer face of the embedding (see Figure~\ref{fig:flowers}). We also observe that the terminals of the flower gadget can be covered by a single face, namely the carpel of the flower.

We can define the direction of an edge both in $\Gamma$ and $\Gamma_t$ the following way: $pq$ is a right edge if the right endpoint of $p$ and $q$ are equal, otherwise $pq$ is a left edge.
An edge $e=pq$ in $\Gamma_t$ is isomorphic to an edge $e'=p'q'$ in $\Gamma$ if $|p|=|p'|,|q|=|q'|$, and $pq$ and $p'q'$ are both right or both left edges. It is easy to see that given a subtree $S$ of $\Gamma_t$, there is an isomorphic tree $S'$ in $\Gamma$. For this purpose, we define an isomorphism $\phi:V(S)\rightarrow V(\Gamma)$. Pick an arbitrary vertex $p\in V(S)$, and let $\phi(p):=p'$ where $p'\in V(\Gamma)$ is an arbitrary discrete interval for which $|p|=|p'|$. Using a depth-first search traversal of $S$, the picture of each vertex $v\in V(S)$ is uniquely defined: upon stepping from $v$ to $w$ in $S$, the size of $|w|$ compared to $|v|$ and the direction of the edge $vw$ uniquely identifies $\phi(w)$ (given $\phi(v)$). Note that we do not run into conflicts ($\phi$ is injective), since any cycle in the image would imply the existence of a cycle in $S$, but $S$ is a tree.
For a tree $S$ in $\Gamma_t$, fix an isomorphism $\phi_S$, and let $S_\Gamma$ be the image of the tree. Then we say that the \emph{terminal sequence of $S$} is $\lb a \rb,\dots, \lb b \rb$ if this is the left-to right sequence of vertices in $V(S_\Gamma \cap H_0)$. Consequently, Lemma~\ref{lem:triangle} can also be applied in $\Gamma_t$ in the following sense. For a tree $S$ of $\Gamma'_t$ with terminal sequence $\lb a \rb,\dots, \lb b \rb$ that induces $p$, its weight is at least $2(b-a)+\dist_\Gamma(\Delta(\lb a,b\rb),\phi(p))$.


The key theorem for using the flower gadget is the following.

\begin{theorem} \label{thm:flower}
Let $S$ be a Steiner forest in the flower gadget (with terminal set $\lb 0 \rb_t, \dots \lb t-1 \rb_t$) where all trees of $S$ contain a portal vertex. Then $S$ has weight at least $2t-4$, and it can have weight exactly $2t-4$ only if it has at least two connected components, each component contains exactly one portal, and for any sequence of $t/2$ consecutive portals, at least one component has its portal there.
Moreover, if $S$ has exactly two components and weight exactly $2t-4$, then $S$ is \emph{canonical}, that is, it induces exactly two opposite portal vertices: $\lb a,\,a+t/2-1\rb_t$ and $\lb a+t/2,\,a-1\rb_t$ for some $a\in \{1,\dots,t/2\}$.
Finally, a canonical Steiner forest $S$ of weight $2t-4$ always exists.
\end{theorem}

\begin{proof}
If $S$ has only one component, then let $p$ be a portal vertex induced by $S$. By Lemma~\ref{lem:triangle}, the tree has weight at least $2t-2$, which is strictly larger than $2t-4$. Hence, $w(S) > 2t-4$ or $S$ has at least two connected components.

Now suppose that there are $k\geq 2$ components; we denote the $i$-th tree of $S$ by $S_i$, and let $p_i$ be a portal vertex from $S_i$. We claim that the set of terminals corresponding to each tree must form a contiguous interval along the terminal face. To see this, suppose for contradiction that $\lb a\rb_t,\lb b \rb_t \in V(S_i)$ and $x \in \lb a,b \rb_t \cap V(S_j)$ for some $i \neq j$. Then the shortest path in $S_i$ from $\lb a\rb_t$ to $\lb b \rb_t$ in the planar embedding  together with an arbitrary planar curve from $\lb a\rb_t$ to $\lb b \rb_t$ inside the terminal face forms a closed curve separating $p_j$ and $\lb x\rb_t$, or has $p_j$ on its boundary and $\lb x\rb_t$ inside. Therefore $S_j$ cannot be disjoint from $S_i$. The claim follows. 

Let $\phi_i=\phi_{S_i}$, and let $\lb a_i\rb \dots \lb b_i\rb\in V(\Gamma)$ be the terminal sequence of $S_i$.
If we apply Lemma~\ref{lem:triangle} for a tree $S_i$, we get 
\[w(S_i) \geq 2(b_i-a_i + \dist(\phi_i(p_i),\Delta(\lb a_i,b_i\rb)).\]
Let $\ell_i = |\lb a_i,b_i\rb|-1$.  
Observe that the terminals are always mapped into $H_0$ and the portal vertex $p_i$ is always mapped into $H_{t/2-1}$ by all $\phi_i$.

Consider the case when the triangle $\Delta(\lb a_i,b_i\rb)$ does not reach $H_{t/2-1}$, so $\ell_i<t/2-1$. Then the distance $\dist(\phi_i(p_i),\Delta(\lb a_i,b_i\rb))$ is at least as big as the distance from the portal set $H_{t/2-1}$ to $H_{\ell_i}$ (Note that $H_{\ell_i}$ passes through $\lb a_i,b_i\rb$.) The weight of the edges below $H_{t/2-1}$ is precisely $4/t$, and the number of edges required on a shortest path from $H_{t/2-1}$ to $H_\ell$ for some $\ell \leq t/2-1$ is $t/2-1-\ell$, so we have that
\begin{equation}\label{eq:horizontals}
\dist(\phi_i(p_i),\Delta(\lb a_i,b_i\rb)) \geq (t/2-1-\ell_i)\frac{4}{t}.
\end{equation}

If the triangle $\Delta(\lb a_i,b_i\rb)$  reaches $H_{t/2}$, that is, if $\ell_i>t/2-1$, then inequality~\eqref{eq:horizontals} still holds because the distance is nonnegative and the right hand side is nonpositive.
Therefore by applying Lemma~\ref{lem:triangle} to each component $S_i$ and then applying the inequality~\eqref{eq:horizontals} we get the following:
\begin{align*}
w(S) &= \sum_{i=1}^k w(S_i)\\
&\geq \sum_{i=1}^k \Big( 2(|\lb a_i,b_i\rb|-1) + \dist(\phi_i(p_i),\Delta(\lb a_i,b_i\rb)) \Big)\\
&\geq 2\sum_{i=1}^k \ell_i + \sum_{i=1}^k (t/2-1-\ell_i)\cdot\frac{4}{t}\\
&= 2(t-k)+ (kt/2-k-(t-k))\cdot\frac{4}{t}\\
&= 2t-4.
\end{align*}
This shows that any Steiner forest of $\Gamma_t$ where each tree contains a portal has weight at least $2t-4$. In case $w(S)=2t-4$, both inequalities in this chain must be equalities, and in particular, each component attains equality for Lemma~\ref{lem:triangle}.

Suppose that $w(S) = 2t-4$ and that a component $S_i$ contains at least two portal vertices, $p_i$ and $p'_i$. If we remove an edge from the tree path that goes from $p_i$ to $p'_i$, then all terminals remain connected to either $p_i$ or $p'_i$. Therefore, we get a Steiner forest that is strictly lighter than $2t-4$ where all components contain a portal; this is contradiction.

Suppose that $w(S)=2t-4$ and there are $t/2$ consecutive portals not induced by any component of $S$; without loss of generality, suppose that these are $\lb 0,t/2-1\rb_t, \lb 1,t/2\rb_t \dots \lb t/2-1, t-2\rb_t$. Let $S_i$ be the tree in $S$ that induces $\lb t/2-1\rb_t$. Then $\phi_i(\lb t/2-1\rb_t) \subset \lb a_i,b_i \rb$. In the second inequality above, equality is only attainable if $\ell_j\leq t/2-1$ for all $j$, since otherwise the distance can be lower bounded by $0$ instead of the negative value we are using. Consequently, $\ell_i\leq t/2-1$. Then in order for the equality to hold for $S_i$, it is necessary that $\phi_i(p_i)$ is at distance $(t/2-1-\ell_i)\frac{4}{t}$ from $\Delta(\lb a_i,b_i \rb)$, which is only possible if $\lb a_i,b_i \rb \subseteq \phi_i(p_i)$ by Lemma~\ref{lem:triangle}. Consequently, $\phi_i(\lb t/2-1\rb_t) \subset \phi_i(p_i)$. Therefore, $p_i \in \{\lb 0,t/2-1\rb_t, \lb 1,t/2\rb_t \dots \lb t/2-1, t-2\rb_t\}$.

Suppose that $w(S) = 2t-4$ and there are exactly two components in $S$. We want to show that $S$ is canonical. If $\ell_1 \neq \ell_2$, then at least one of them is strictly larger than $t/2-1$; suppose that $\ell_1>t/2-1$. Then in the above calculation we can lower bound $\dist(\phi_1(p_1),\Delta(\lb a_1,b_1\rb))$ with $0$ instead of $(t/2-1-\ell_1)\cdot\frac{4}{t}$, which yields a lower bound strictly larger than $2t-4$. Therefore, $\ell_1 = \ell_2=t/2-1$, and the triangles $\phi_1^{-1}(\Delta(\lb a_1,b_1\rb))$ and $\phi_2^{-1}(\Delta(\lb a_2,b_2\rb_t))$ are completely contained in the flower gadget, with their tip being two opposite portal vertices $\lb a_1,b_1\rb_t=\lb a_1,a_1+t/2-1\rb_t$ and $\lb a_2,b_2\rb_t = \lb a_1+t/2,a_1-1\rb_t$. These are the unique portal vertices in $S_1$ and $S_2$ respectively.

Finally, to see that a canonical Steiner forest exists, note that we can simply apply Observation~\ref{obs:existssteiner} to both relevant intervals and observe the complete binary trees will be rooted at two opposite portal vertices.
\end{proof}

\subsection{Verification Gadgets}
We first construct a verification gadget $\gVG$. We use exactly the same gadget as Marx et al. (see Figure~\ref{fig:verification}). \footnote{Figures~\ref{fig:verification} and~\ref{fig:pairverification} from~\cite{MarxPP17} were reproduced here with an author's permission.} The gadget $\gVG$ has $2N+1$ so-called \emph{portals}, which will be identified with or connected to portals of other gadgets. To be precise, the gadget has
\begin{itemize}
\item portals $y[1],\ldots, y[N], w, z[N], \ldots, z[1]$, which appear in this order along the outer face of $\gVG$;
\item vertices $v[i,j]$ for each $i,j \in [N]$;
\item edges from $y[i]$ to $v[1,i]$ of weight $iM_2$ and from $z[i]$ to $v[N,i]$ of weight $iM_3$ for each $i \in [N]$;
\item `horizontal' edges from $v[i,j]$ to $v[i+1,j]$ of weight $M_4$ for each $i \in [N-1]$ and $j \in [N]$;
\item `vertical' edges from $v[i,j]$ to $v[i,j+1]$ of weight $M_3$ for each $i \in [N]$ and $j \in [N] \setminus [i-1]$;
\item edges from $v[i,N]$ to $w$ of weight $M_5-iM_2$ for each $i \in [N]$.
\end{itemize}
We call the edge between $v[i,N]$ to $w$ the $i$-selector of $\gVG$. We actually require the so-called \emph{$S$-reduction} of $\gVG$, denoted $\gVGS$, for a set $S \subseteq [N]$, which is obtained from $\gVG$ by removing the edges from $v[i,N]$ to $w$ for each $i \not\in S$. The following lemma summarizes the properties we require of this gadget.

\begin{figure}
\centering
\includegraphics[scale=0.2]{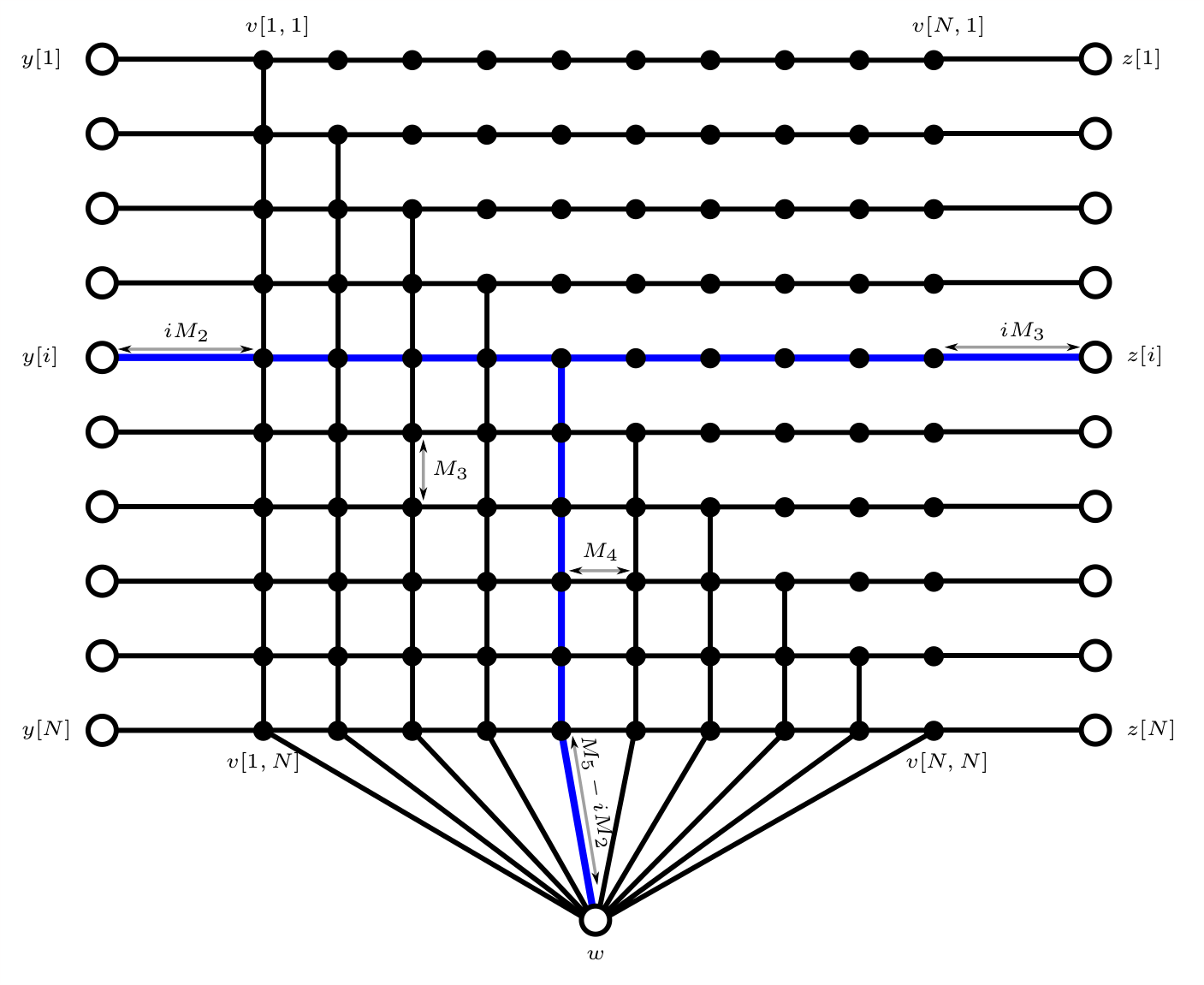}
\caption{The verification gadget $\gVG$ from \cite[Figure 9]{MarxPP17}. The open circles indicate the portals that are connected to other parts of the graph. The blue edges indicate the connected subgraph mentioned in Lemma~\ref{lem:g:veri}(\ref{lem:g:veri:i}).}\label{fig:verification}
\end{figure}

\vspace{-0.1cm}
\begin{lemma} \label{lem:g:veri}
Let $S \subseteq [N]$. Then
\begin{enumerate}[(i)]
\item\label{lem:g:veri:i} for any $i \in [S]$, there is a connected subgraph of $\gVGS$ of weight $M_5+ (N-1)M_4 + NM_3$ that contains $y[i]$, $z[i]$, $w$, and the $i$-selector;
\item\label{lem:g:veri:ii} any connected subgraph $H$ of $\gVGS$ that contains $y[i]$, $z[j]$, and $w$ for $i,j \in [N]$ has weight at least $M_5+ (N-1)M_4 + NM_3$; moreover, if $H$ has weight less than $M_5+ (N-1)M_4 + NM_3 + M_2$, then $i = j$ and $H$ contains the $i$-selector and no other selector edge;
\item\label{lem:g:veri:iii} there is a connected subgraph of $\gVGS$ of weight $(N-1) M_4 + i M_2 + i M_3$ that contains $y[i]$ and $z[i]$ for $i \in [N]$;
\item\label{lem:g:veri:iv} any connected subgraph of $\gVGS$ that contains $y[i]$ and $z[j]$ for $i,j \in [N]$ has weight at least $(N-1) M_4 + i M_2 + i M_3 + 2 \cdot \max\{0,j-i\} \cdot M_3$.
\end{enumerate}
\end{lemma}
\begin{proof}
The first item is proved in~\cite[Lemma~7.5]{MarxPP17}. The second item is proved essentially in~\cite[Lemma~7.6]{MarxPP17}, once we add the observation that every edge in $\gVGS$ is a multiple of $M_2$, and thus having weight less than $M_5+ (N-1)M_4 + NM_3 + M_2$ is equivalent to having weight at most $M_5+ (N-1)M_4 + NM_3$. The third item is immediate from the construction of $\gVG$: follow the edges from $y[i]$ to $v[1,i]$, up to $v[1,j]$, and right to $v[N,j]$ and $z[j]$. It remains to prove the fourth item.

Let $H$ be any connected subgraph of $\gVGS$ that contains $y[i]$ and $z[j]$ for $i,j \in [N]$. Suppose that $H$ has weight less than $(N-1) M_4 + i M_2 + i M_3 + 2\cdot\max\{0,j-i\} \cdot M_3$. Since all edges have positive weight, it suffices to elicit a contradiction under the assumption that $H$ is a path. First note that $H$ has to contain the edges from $y[i]$ to $v[1,i]$ of weight $i M_2$ and from $z[j]$ to $v[N,j]$ of weight $j M_3$. Hence, the remaining edges account for weight at most $(N-1) M_4 + (2 \cdot \max\{0,j-i\} - j) \cdot M_3$. Now consider the possibility that $w$ is contained in $H$. Observe that this means that at least two selector edges are in $H$. However,
$$2(M_5- NM_2) > M_5 > 10NM_4 > (N-1) M_4 + 2N M_3 > (N-1) M_4 + (2\cdot\max\{0,j-i\}-j) \cdot M_3.$$
Hence, $w$ is not contained in $H$. Then $H$ has to contain at least $N-1$ horizontal edges of weight $M_4$ and at least $|j-i|$ vertical edges of weight $M_3$. Therefore, using a case analysis depending on whether $j \leq i$ or $j > i$, $H$ has weight at least $(N-1) M_4 + i M_2 + i M_3 + 2\cdot\max\{0,j-i\} \cdot M_3$, a contradiction to our assumptions. Hence, $H$ has weight at least $(N-1) M_4 + i M_2 + i M_3 + 2\cdot\max\{0,j-i\} \cdot M_3$.
\end{proof}

\begin{figure}
\centering
\includegraphics[width=\textwidth]{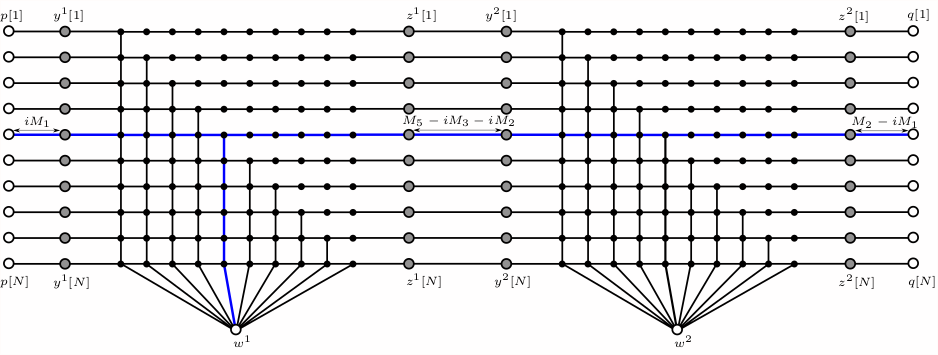}
\caption{The paired verification gadget from \cite[Figure 9]{MarxPP17}. In our gadget $\gLVG$, the edges incident to $w^i$ are omitted depending on $S_i$}\label{fig:pairverification}
\end{figure}

Marx et al. then pair two verification gadgets, see Figure~\ref{fig:pairverification}. We generalize their construction to combine $L$ verification gadgets. To be precise, our gadget $\gLVG$ has
\begin{itemize}
\item portal vertices $p[1],\ldots,p[N], w[1],\ldots,w[L],$ and $q[N],\ldots,q[1]$, which appear in this order along the outer face of $\gLVG$;
\item $L$ verification gadgets $\gVG$, denoted $\gVG^{1},\ldots,\gVG^{\ell}$. Let $y^{\ell}[1],\ldots, y^{\ell}[N], w^{\ell}, z^{\ell}[N], \ldots, z^{\ell}[1]$ denote the portals of $\gVG^{\ell}$ and identify $w[\ell]$ with $w^{\ell}$;
\item edges from $p[i]$ to $y^{1}[i]$ of weight $iM_1$ for $i \in [N]$;
\item edges from $q[i]$ to $z^{L}[i]$ of weight $M_2-iM_1$ for $i \in [N]$;
\item edges $e^\ell_i$ from $z^{\ell}[i]$ to $y^{\ell+1}[i]$ of weight $M_5 - iM_3 - iM_2$ for $i \in [N]$ and $\ell \in [L-1]$, called the connector edges.
\end{itemize}
For $L=2$, this is exactly the paired verification gadget of~\cite{MarxPP17}. We require the \emph{$\mathcal{S}$-reduction} of $\gLVG$, which for $\mathcal{S} = \{S_1,\ldots,S_L\}$ where $S_\ell \subseteq [N]$ for $\ell \in [L]$, contains the $S_\ell$-reduction $\gVGS[S_\ell]$ as the $\ell$-th verification gadget (instead of the plain vanilla $\gVG^{\ell}$) for $\ell \in [L]$.

The following lemma summarizes the properties we require of this gadget.

\begin{lemma} \label{lem:g:veril}
Let $\mathcal{S} = \{S_1,\ldots,S_L\}$, where $S_\ell \subseteq [N]$ for $\ell \in [L]$. Then
\begin{enumerate}[(i)]
\item\label{lem:g:veril:i} for any $\ell \in [L]$ and $i \in S_\ell$, there is a connected subgraph of $\gLVGS$ of weight $L M_5 + L(N-1)M_4 + NM_3 + M_2$ that contains $p[i]$, $q[i]$, $w[\ell]$ and the $i$-selector incident on $w[\ell]$;
\item\label{lem:g:veril:ii} any connected subgraph $H$ of $\gLVGS$ that contains $p[i]$, $q[j]$, and $w[\ell]$ for some $i,j \in [N]$ and $\ell \in [L]$ has weight at least $L M_5 + L(N-1)M_4 + N M_3 + M_2$; moreover, if $H$ has exactly this weight, then $i=j$ and $H$ contains exactly one selector edge, namely the $i$-selector incident on $w[\ell]$.
\end{enumerate}
\end{lemma}
\begin{proof}
The first item is essentially proved in~\cite[Lemma~7.7]{MarxPP17}, but only for $L=2$. The extension to arbitrary $L$ is straightforward and described here only for sake of completeness. For each $l \in [L] \setminus\{\ell\}$, it follows from Lemma~\ref{lem:g:veri}(\ref{lem:g:veri:iii}) that $\gVGS[S_{l}]$ contains a connected subgraph of weight $(N-1)M_4 + iM_2 + iM_3$ that contains $y^{l}[i]$ and $z^{l}[i]$. It follows from Lemma~\ref{lem:g:veri}(\ref{lem:g:veri:i}) that $\gVGS[S_{\ell}]$ contains a connected subgraph of weight $M_5 + (N-1)M_4 + NM_3$ that contains $y^{\ell}[i]$, $z^{\ell}[i]$, $w^{\ell} = w[\ell]$, and the $i$-selector incident on $w^{\ell} = w[\ell]$. To connect these subgraphs, add the edges between $p[i]$ and $y^{1}[i]$ of weight $iM_1$, between $z^{L}[i]$ and $q[i]$ of weight $M_2 - iM_1$, and for $l \in [L-1]$ between $z^{l}[i]$ and $y^{l+1}[i]$ of weight $M_5 - iM_3-iM_2$ each. Note that this indeed yields a connected subgraph of $\gLVGS$ that contains $p[i]$, $q[i]$, $w[\ell]$ and the $i$-selector incident on $w[\ell]$. The total weight is
\begin{eqnarray*}
\lefteqn{(L-1) ((N-1)M_4 + iM_2 + iM_3) + M_5 + (N-1)M_4 + NM_3 +} \hspace{5cm} \\
\lefteqn{+ iM_1 + M_2 - iM_1 + (L-1) (M_5 - iM_3-iM_2)} \hspace{4.5cm} \\
& = & LM_5 + L(N-1)M_4 + NM_3 + M_2,
\end{eqnarray*}
as claimed.

For the second item, let $H$ be any connected subgraph of $\gLVGS$ that contains $p[i]$, $q[j]$, and $w[\ell]$ for some $i,j \in [N]$ and $\ell \in [L]$. Suppose that $H$ has weight less than $L M_5 + L(N-1)M_4 + (N-1) M_3 + M_2$.

Note that $H$ contains at least $L-1$ connector edges of weight at least $M_5 - NM_3 - NM_2$ each. Moreover, the selector edge incident on $w[\ell]$ contributes at least $M_5 - NM_2$ to the total weight. Suppose that $H$ contains at least $c$ more connector or selector edges. Then $H$ has total weight at least
\begin{eqnarray*}
\lefteqn{(L+c) M_5 - (L+c)NM_3 - (L+c+1)NM_2} \hspace{5cm} \\
& = & (L+1) M_5 - (L+1)NM_3 - (L+2)NM_2 + \\
&& (c-1) \cdot (M_5 - NM_3 - NM_2) \\
& > & (L+1) M_5 - (L+1)NM_3 - (L+2)NM_2 \\
& > & (L+1) M_5- 2LNM_3-2LNM_2 \\
& > & (L+1)M_5 - 4M_4 \\
& > & LM_5 + 6LNM_4 \\
& > & LM_5 + L(N-1)M_4 + (N-1) M_3 + M_2,
\end{eqnarray*}
using that $M_i > 10NLM_{i-1}$. Hence, $H$ contains exactly $L-1$ connector edges, say $e^{l}_{i_l}$ for $i_l \in [N]$ and $l \in [L-1]$, and exactly one selector edge, namely one incident on $w[\ell]$. Let $i_0 = i$ and $i_L = j$.

Consider the $\ell$-th verification gadget. In this gadget, $H$ contains $y^{\ell}[i_\ell]$, $z^{\ell}[i_{\ell+1}]$, and $w[\ell] = w^{\ell}$. By Lemma~\ref{lem:g:veri}(\ref{lem:g:veri:ii}), $H$ has weight at least $M_5 + (N-1)M_4 + NM_3$ in this gadget. In each of the other gadgets, $H$ contains $y^{l}[i_l]$ and $z^{l}[i_{l+1}]$; hence, by Lemma~\ref{lem:g:veri}(\ref{lem:g:veri:iv}), $H$ has weight at least $(N-1)M_4 + i_lM_2 + i_{l}M_3 + 2\cdot\max\{0,i_{l+1}-i_l\} \cdot M_3$ in these gadgets. Moreover, $H$ has to contain the edges from $p[i]$ to $y^1[i]$ of weight $iM_1 = i_0M_1$ and from $q[j]$ to $z^L[j]$ of weight $M_2-jM_1 = M_2-i_LM_1$. Then the total weight of $H$ is at least:
\begin{eqnarray*}
\lefteqn{\left(\sum_{l \in (\{0\} \cup [L-1]) \setminus\{\ell-1\}} (N-1)M_4+i_lM_2 + i_{l}M_3 + 2\cdot\max\{0,i_{l+1}-i_l\} \cdot M_3 \right) +}\hspace{2cm}\\
\lefteqn{+ M_5 + (N-1)M_4 + (N-1)M_3 + \left(\sum_{l \in [L-1]} M_5 - i_lM_3 - i_lM_2 \right)+}\hspace{2cm}\\
\lefteqn{+ i_0M_1 + M_2 - i_LM_1}\hspace{2cm}\\
& = & LM_5 + L(N-1)M_4 + N M_3 + i_0M_1 + M_2 - i_LM_1 + \\
&& + \left(\sum_{l \in (\{0\} \cup [L-1]) \setminus\{\ell-1\}} i_lM_2 + i_{l}M_3 + 2\cdot\max\{0,i_{l+1}-i_l\} \cdot M_3\right) \\
&& + \left(\sum_{l \in [L-1]} - i_lM_3 - i_lM_2 \right)\\
& = & LM_5 + L(N-1)M_4 + N M_3 + i_0M_1 + M_2 - i_LM_1 + \\
&& + \left(\sum_{l=0}^{\ell-1} i_lM_2 + i_{l}M_3 + 2\cdot\max\{0,i_{l+1}-i_l\} \cdot M_3 - i_{l+1}M_3-i_{l+1}M_2\right)\\
&& + \left(\sum_{l=\ell}^{L-1} i_lM_2 + i_{l}M_3 + 2\cdot\max\{0,i_{l+1}-i_l\} \cdot M_3 - i_{l}M_3-i_{l}M_2\right)\\  
\end{eqnarray*}
Consider the first summation and let $l \in \{0,\ldots,\ell-1\}$. If $i_l > i_{l+1}$, then the $l$-th summand contributes at least $M_2 + M_3 > NM_1$ to the total. If $i_l < i_{l+1}$, then the $l$-th summand contributes at least $M_3-M_2 > NM_1$ to the total. Otherwise, the $l$-th summand contributes $0$ to the total. Consider the second summation and let $l \in \{\ell,\ldots,L_1\}$. Each summand can only contribute a non-negative number to the total, and if $i_l < i_{l+1}$, then it contributes at least $2M_3 > NM_1$. Since
$$LM_5 + L(N-1)M_4 + N M_3 + i_0M_1 + M_2 - i_LM_1 + NM_1 > LM_5 + L(N-1)M_4 + (N-1) M_3 + M_2,$$
it follows that $i_l = i_{l+1}$ for $l \in \{0,\ldots,\ell-1\}$ and that $i_l \geq i_{l+1}$ for $l \in \{\ell,\ldots,L-1\}$. In particular, the contribution of both summations to the total is $0$. Hence, the total weight of $H$ is at least $LM_5 + L(N-1)M_4 + N M_3 + i_0M_1 + M_2 - i_LM_1$. It follows that the part of $H$ in the $\ell$-th verification gadget contributes less than
$$M_5 + (N-1)M_4 + NM_3 + NM_1 < M_5 + (N-1)M_4 + NM_3 + M_2,$$
which implies that $i_\ell = i_{\ell+1}$ by Lemma~\ref{lem:g:veri}(\ref{lem:g:veri:ii}). Hence, $i_L \leq i_0$ and in fact, $i_L = i_0$ and thus, $i_0 = \cdots = i_L$. Therefore, the total weight of $H$ is at least $L M_5 + L(N-1)M_4 + N M_3 + M_2$, a contradiction. Hence, the total weight of $H$ is at least $L M_5 + L(N-1)M_4 + N M_3 + M_2$.

The above arguments imply that if $H$ has weight exactly $L M_5 + L(N-1)M_4 + N M_3 + M_2$, then $H$ contains exactly $L-1$ connector edges, say $e^{l}_{i_l}$ for $i_l \in [N]$ and $l \in [L-1]$, and exactly one selector edge, namely one incident on $w[\ell]$. Moreover, $i_0 = \cdots = i_L$ and thus $i=j$.
\end{proof}

\subsection{Construction}\label{subsec:constr}
In the next two subsections, we aim to prove the following theorem, which will quickly imply Theorem~\ref{thm:g:bound}.

\begin{theorem} \label{thm:g:weighted}
Let $\mathcal{M}$ be an instance of {\pGT}, with associated integers $n$ and $k$. Then in time polynomial in $n$ and $k$, one can construct an integer $K_{\mathcal{M}}$ and a planar graph $G_{\mathcal{M}}$ with positive edge weights and set $T_{\mathcal{M}}$ of terminals such that
\begin{itemize}
\item $G_{\mathcal{M}}$ has size $O(k^2 n^5)$;
\item $T_{\mathcal{M}}$ can be covered by $k(k-1)+1$ faces of $G_{\mathcal{M}}$;
\item each edge has weight $O(k^{14}n^{22})$;
\item $\mathcal{M}$ admits a solution if and only if $G_{\mathcal{M}}$ admits a Steiner tree of weight at most $K_{\mathcal{M}}$.
\end{itemize}
\end{theorem}

Consider an instance $\mathcal{M}$ of {\pGT} consisting of two integers $n$ and $k$, and $k^2$ sets $M_{a,b} \subseteq [n] \times [n]$ for $a,b \in [k]$. By increasing $n$ if necessary, we may assume that $n$ is a power of~$2$. Throughout, let $N = n^2$, $L=n$, $t = 2L = 2n$, and $M = 10k^2NL$; observe that $M > 10NL$ as required. Figure~\ref{fig:bigpic} is provided to get a better understanding of the construction.

\begin{figure}
\centering
\def\svgwidth{\textwidth}
\includegraphics[scale=0.8]{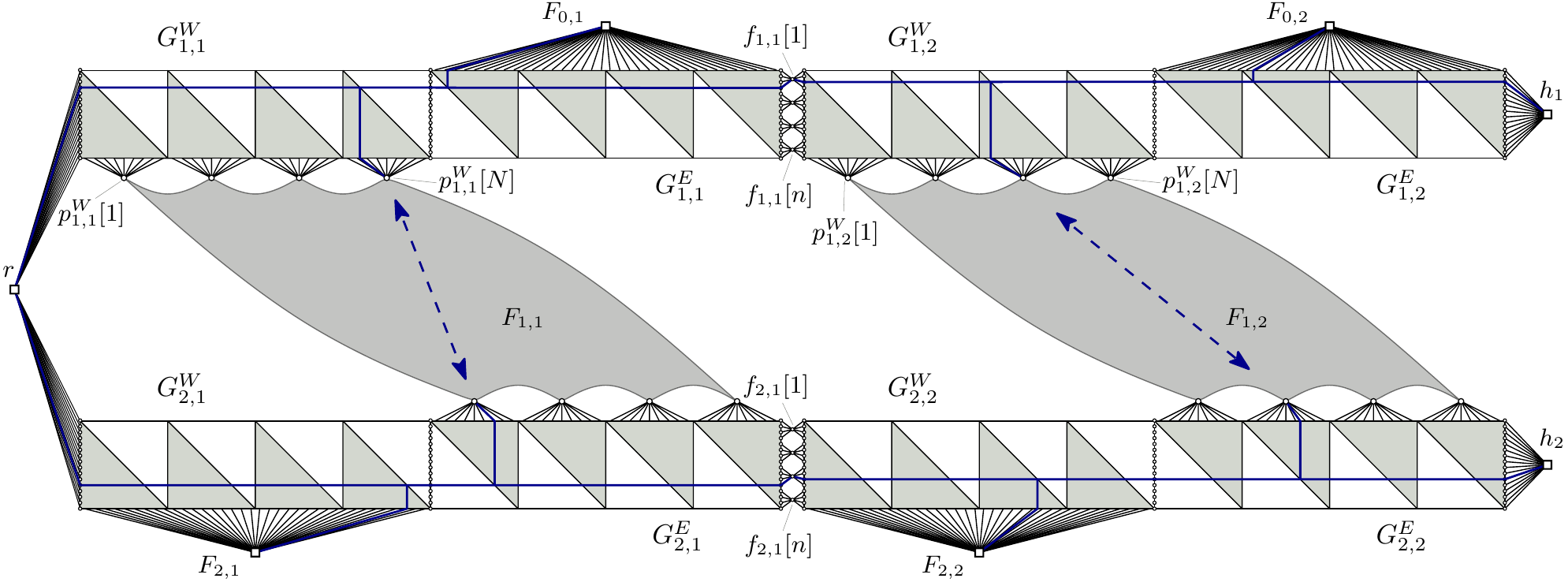}
\caption{The graph $G_{\mathcal{M}}$. In both rows, two copies of $\gLVG$ are fused as described in Subsection~\ref{subsec:constr}. The gray clouds that connect the copies of $\gLVG$ vertically indicate flower gadget, and the blue arrows indicates the matched entry points into the flower gadget.}\label{fig:bigpic}
\end{figure}

For each $a,b \in [k]$, we create two gadgets $\gwest = \gLVGS[\swest]$ and $\geast = \gLVGS[\seast]$ for well-chosen sets $\swest$ and $\seast$. The $w$-portals of $\gwest$ will face south, while the $w$-portals of $\geast$ will face north (i.e.~$\geast$ is $\gLVGS[\seast]$ rotated by 180 degrees). The $q$-portals of both gadgets will then be connected: we place an edge from $q^W[j]$ to $q^E[N-j+1]$ for each $j \in [N]$. The idea will be that selecting a `row' in $\gwest$ and $\geast$ corresponds to selecting a valid choice $(x_a,y_b) \in M_{a,b}$. We then add a flower gadget between $\swest$ and $\seast[a,b+1]$ to ensure that the same $y_b$ is chosen in each column of the {\pGT} instance, while a simpler construction ensures that the same $x_a$ is chosen in $\swest[a+1,b]$. We now describe the construction in more detail.

Let $a,b\in [k]$. We aim to construct $\swest$ and $\seast$, so that $\gwest$ and $\geast$ are well defined. For each $l \in [L]$, let $\sswest{l} = \{(i-1)n+l \mid (i,l) \in M_{a,b}\}$ and let $\sseast{l} = \{N - ((i-1)n+l) + 1 \mid (i,l) \in M_{a,b}\}$. Then $\swest = \{\sswest{1},\ldots,\sswest{L}\}$ and $\seast = \{\sseast{1},\ldots,\sseast{L}\}$. Now let $\gwest = \gLVGS[\swest]$ and $\geast = \gLVGS[\seast]$. We use $\pwest{1},\ldots,\pwest{N},\wwest{1},\ldots,\wwest{L},\qwest{N},\ldots,\qwest{1}$ to denote the portals of $\gwest$ and $\peast{1},\ldots,\peast{N},\weast{1},\ldots,\weast{L},\qeast{N},\ldots,\qeast{1}$ to denote the portals of $\geast$. Now we connect $\qwest{j}$ with $\qeast{N-j+1}$ for each $j \in [N]$ by a join edge $\ej$ of weight $M_6$. Denote the resulting gadget by $\Gab$.

We now fuse the gadgets $\Gab$ for fixed $a \in [k]$. Let $b \in [k-1]$ and $i \in [n]$. Create a new vertex $\fabi{i}$. For each $l \in [n]$, add an edge from $\peast{N-((i-1)n+l)+1}$ to $\fabi{i}$ and from $\fabi{i}$ to $\pwest[a,b+1]{(i-1)n + l}$, both of weight $M_6$. The idea of $\fabi{i}$ is that it allows us to switch the value selected in our solution of the {\pGT} instance between column $b$ and column $b+1$, while the value selected for row $a$ remains the same (namely $i$).

Let $b \in [k]$. For each $a \in [k-1]$, create a flower gadget $\fab$ of size $t$. Since $n$ is a power of~$2$, this is indeed possible. Multiply all weights in the gadget by $2^{\log t} M_7 = t M_7$, so that each weight is at least $M_7$ and at most $O(LM_7)$. Identify the vertex $\lb l,\,l+t/2-1\rb_t$ of the flower gadget with $\wwest{l}$ and the vertex $\lb l+t/2,\,l-1\rb_t$ of the flower gadget with $\weast[a+1,b]{l}$. If the Steiner tree in $\fab$ is canonical, then we can ensure that the value selected in our solution of the {\pGT} instance for column $b$ is the same, namely $l$, throughout. Finally, create a single terminal vertex $\fab[0,b]$ and identify it with $\weast{l}$ for all $l \in [L]$, and create a single terminal vertex $\fab[k,b]$ and identify it with $\wwest{l}$ for all $l \in [L]$. We call these the dummy terminals.

As a last step, create a terminal $r$ and $k$ terminals $h_1,\ldots,h_k$. Add an edge of weight $M_6$ from $r$ to all vertices $\pwest[a,1]{j}$ for $a \in [k]$ and $j \in [N]$. For each $a \in [k]$ and all $j \in N$, add an edge of weight $M_6$ from $\qeast[a,k]{j}$ to $h_a$. 
For notational convenience, we will sometimes write that $\fabi[a,0]{i} = r$ and $\fabi[a,k]{i} = h_a$ for $a \in [k]$ and $i \in [n]$.

Finally, let $K_{\mathcal{M}} := k(k-1) \cdot (2t-4) \cdot tM_7 + 3k^2 M_6 + 2k^2 \cdot (L M_5 + L(N-1)M_4 + N M_3 + M_2)$.

This completes the construction. Observe that the resulting graph $G_{\mathcal{M}}$ is planar. Moreover, $G_{\mathcal{M}}$ has exactly $k(k-1) + 1$ faces that jointly contain all terminals: $k(k-1)$ faces that form the carpels of the flower gadgets, plus the outer face of $G_{\mathcal{M}}$. Finally, observe that $G_{\mathcal{M}}$ has $O(k^2 N^2 L + k^2 L^2) = O(k^2 n^5)$ vertices.

\subsection{Correctness}

\begin{lemma}
If $\mathcal{M}$ admits a solution, then $G_{\mathcal{M}}$ admits a Steiner tree of weight at most
$$K_{\mathcal{M}} = k(k-1) \cdot (2t-4) \cdot tM_7 + 3k^2 M_6 + 2k^2 \cdot (L M_5 + L(N-1)M_4 + N M_3 + M_2).$$
\end{lemma}
\begin{proof}
Let $x_1,\ldots,x_k,y_1,\ldots,y_k$ be a solution. We construct a tree as follows. For each $a,b \in [k]$, it follows from Lemma~\ref{lem:g:veril}(\ref{lem:g:veril:i}) and the construction of $\swest$ that $\gwest$ has a connected subgraph of weight $L M_5 + L(N-1)M_4 + N M_3 + M_2$ that contains $\pwest{x_a \cdot n + y_b}$, $\qwest{x_a \cdot n + y_b}$, and $\wwest{y_b}$. Similarly, $\geast$ has a connected subgraph of weight $L M_5 + L(N-1)M_4 + N M_3 + M_2$ that contains $\peast{(n-x_a+1) \cdot n + (n-y_b+1)}$, $\qeast{(n-x_a+1) \cdot n + (n-y_b+1)}$, and $\weast{y_b}$. Since $\qwest{x_a \cdot n + y_b}$ and $\peast{(n-x_a+1) \cdot n + (n-y_b+1)}$ are connected by a join edge, we obtain a connected subgraph $\Hab$ that contains $\pwest{x_a \cdot n + y_b}$ and $\peast{(n-x_a+1) \cdot n + (n-y_b+1)}$ of total weight $M_6 + 2(L M_5 + L(N-1)M_4 + N M_3 + M_2)$. Hence, the total weight of the union of the connected subgraphs $\Hab$ over all $a,b \in [k]$ is $k^2 M_6 + 2k^2 \cdot (L M_5 + L(N-1)M_4 + N M_3 + M_2)$.

Let $b \in [k]$. For each $a \in [k-1]$, it follows from Theorem~\ref{thm:flower} that $\fab$ has a canonical Steiner forest $\HFab$ on connector vertices $\wwest{y_b} = \lb y_b,\,y_b+t/2-1\rb_t$ and $\weast[a+1,b]{y_b} = \lb y_b+t/2,\,y_b-1\rb_t$ of total weight $2t-3 \cdot tM_7$. Hence, the total weight of the Steiner forests $\HFab$ over all $a \in [k-1],b \in [k]$ is $k(k-1) \cdot (2t-4) \cdot tM_7$. Observe that $\HFab$ has two connected components: the first is attached to $\Hab$ through $\wwest{y_b} = \lb y_b,\,y_b+t/2-1\rb_t$; the second is attached to $\Hab[a+1,b]$ through $\weast[a+1,b]{y_b} = \lb y_b+t/2,\,y_b-1\rb_t$.

Finally, for each $a \in [k], b \in [k-1]$, we select $\fabi{x_a}$. Note that $\fabi{x_a}$ is adjacent to $\peast{(n-x_a+1)\cdot n+(n-y_b+1)}$, and thus to $\Hab$, through an edge of weight $M_6$. Similarly, $\fabi{x_a}$ is adjacent to $\pwest[a,b+1]{x_a\cdot n + y_b}$, and thus to $\Hab[a,b+1]$, through an edge of weight $M_6$. Moreover, for each $a \in [k]$, $r$ is adjacent to $\pwest[a,1]{x_a \cdot n + y_1}$, and thus to $\Hab[a,1]$, through an edge of weight $M_6$. Similarly, $h_a$ is adjacent to $\peast[a,k]{(n-x_a+1) \cdot n + (n-y_k+1)}$, and thus to $\Hab[a,k]$, through an edge of weight $M_6$. Let $H$ denote the resulting subgraph.

Observe that $H$ is connected by construction. Moreover, the weight of $H$ is
$$k(k-1) \cdot (2t-4) \cdot tM_7 + 3k^2 M_6 + 2k^2 \cdot (L M_5 + L(N-1)M_4 + N M_3 + M_2).$$
Furthermore, $H$ contains all terminals, including the dummy terminals. Hence, by taking a spanning tree of $H$, the lemma follows.
\end{proof}

\begin{lemma}
If $G_{\mathcal{M}}$ admits a Steiner tree of weight at most
$$K_{\mathcal{M}} = k(k-1) \cdot (2t-4) \cdot tM_7 + 3k^2 M_6 + 2k^2 \cdot (L M_5 + L(N-1)M_4 + N M_3 + M_2),$$
then $\mathcal{M}$ admits a solution.
\end{lemma}
\begin{proof}
Let $H$ be the assumed Steiner tree; without loss of generality, $H$ is inclusion-wise minimal. Let $H_F$ denote the restriction of $H$ to the flower gadgets, and $H_{\fab}$ the restriction of $H$ to $\fab$. In general, $H_F$ is a forest. We observe that
$$M_7 > 3k^2 M_6 + 2k^2 \cdot (L M_5 + L(N-1)M_4 + N M_3 + M_2)$$
by the choice of $M$ and $N$. Since every edge in the flower gadgets has weight at least $M_7$ and is a multiple of $M_7$, it follows that $H_F$ has weight at most $k(k-1) \cdot (2t-4) \cdot tM_7$. As any path from a terminal in a flower gadget to $r$ contains a portal of that flower gadget, the minimality of $H$ implies that all trees of $H_F$ contain a portal of the corresponding gadget. In particular, for each $a \in [k-1],b\in[k]$, $H_{\fab}$ contains a portal of $\fab$. Then, by Theorem~\ref{thm:flower}, it follows that $H_{\fab}$ has weight at least $(2t-4) \cdot tM_7$. Since there are $k(k-1)$ flower gadgets, this implies that $H_F$ has weight at least $k(k-1) \cdot (2t-4) \cdot tM_7$, and thus weight exactly $k(k-1) \cdot (2t-4) \cdot tM_7$. Hence, for each $a \in [k-1], b \in [k]$, $H_{\fab}$ has weight exactly $(2t-4) \cdot tM_7$. Then Theorem~\ref{thm:flower} implies that each tree of $H_{\fab}$ (and in $H_F$) contains exactly one portal, and for any sequence of $t/2$ consecutive portals, at least one tree of $H_{\fab}$ has its portal there. Moreover, $H$ has weight at most $3k^2 M_6 + 2k^2 \cdot (L M_5 + L(N-1)M_4 + N M_3 + M_2)$ outside the flower gadgets.

Since each tree in $H_F$ contains exactly one portal, the path $P_a$ in $H$ between $r$ and $h_a$ for $a \in [k]$ cannot cross a flower gadget and is fully contained in $\{r,h_a\} \cup \left(\bigcup_{b \in [k]} \Gab\right) \cup \left(\bigcup_{b \in [k-1], i \in [n]} \fabi{i} \right)$. In particular, the path contains at least $k-1$ fuse vertices and two edges of weight $M_6$ incident on each of them, at least $k$ join edges of weight $M_6$, an edge of weight $M_6$ incident on $r$, an edge of weight $M_6$ incident on $h_a$. Hence, the path has weight at least $3kM_6$. Moreover, since $P_a$ cannot cross the flower gadgets, the $k$ paths $P_1,\ldots,P_k$ from $r$ to $h_1,\ldots,h_k$ respectively are internally vertex disjoint. Hence, the total weight of the aforementioned edges across all of the $k$ paths is $3k^2 M_6$. Since $M_6 > 2k^2 \cdot (L M_5 + L(N-1)M_4 + N M_3 + M_2)$ by the choice of $M$, it follows that $H$ contains no further edges of weight $M_6$. In particular, $H$ has weight at most $2k^2 \cdot (L M_5 + L(N-1)M_4 + N M_3 + M_2)$ in total in the gadgets $\gwest$ and $\geast$ for $a,b\in [k]$.

The preceding implies that $H$ contains exactly $k(k-1)$ fuse vertices, one for each $a \in [k], b \in [k-1]$, denoted $\fabi{\mi}$ for suitable $\mi \in [n]$. 
For notational convenience, define $\mi[a,0] = \mi[a,1]$ and $\mi[a,k] = \mi[a,k-1]$ for each $a \in [k]$.
Moreover, each of these fuse vertices has degree exactly~$2$ in $H$. Similarly, $h_1,\ldots,h_k$ each have degree exactly~$1$ in $H$, and $r$ has degree exactly~$k$. Finally, $H$ contains exactly one join edge $\ejj{\mj}$ for each $a,b \in [k]$ for certain $\mj \in [N]$. From this, we conclude that for each $a,b \in [k]$, $H$ contains exactly one $p$-portal and exactly one $q$-portal of each of $\gwest$ and $\geast$, specifically $\pwest{\mpwest}, \qwest{\mqwest}, \peast{\mpeast}, \qeast{\mqeast}$ for suitable $\mpwest, \mqwest, \mpeast, \mqeast \in [N]$.

Let $b \in [k]$ and $a \in [k-1]$. Since for any sequence of $t/2$ consecutive portals of $\fab$, at least one tree of $H_{\fab}$ has its portal there, it follows that one of $\wwest{l_{a,b}} = \lb l_{a,b},\,l_{a,b}+t/2-1\rb_t$ and one of $\weast[a+1,b]{l'_{a,b}} = \lb l'_{a,b}+t/2,\,l'_{a,b}-1\rb_t$ is in $H_{\fab}$ for certain $l_{a,b} \in [L]$ and $l'_{a,b} \in [L]$. From this and the placement of the dummy terminals, we conclude that for each $a,b \in [k]$, $H$ contains at least one $w$-portal of each of $\gwest$ and $\geast$, specifically $\wwest{\mwwest}$ and $\weast{\mweast}$ for suitable $\mwwest, \mweast \in [L]$.

Now note that for each $a,b \in [k]$, $H$ contains exactly one $p$-portal, exactly one $q$-portal, and at least one $w$-portal of each of $\gwest$ and $\geast$. Moreover, $H$ restricted to $\Gab$ (denoted $H_{\Gab}$) must be connected in order for a path from $r$ to $h_a$ to exist in $H$. Since only one join edge of $\Gab$ is in $H$, as established previously, it follows that $H$ restricted to $\gwest$ (denoted $H_{\gwest}$) and to $\geast$ (denoted $H_{\geast}$) must each be connected. Then Lemma~\ref{lem:g:veril}(\ref{lem:g:veril:ii}) implies that $H_{\gwest}$ and $H_{\geast}$ each have weight at least $L M_5 + L(N-1)M_4 + N M_3 + M_2$. Since $H$ has weight at most $2k^2 \cdot (L M_5 + L(N-1)M_4 + N M_3 + M_2)$ in total in the gadgets $\gwest$ and $\geast$ for $a,b\in [k]$, it follows that $H_{\gwest}$ and $H_{\geast}$ each have weight exactly $L M_5 + L(N-1)M_4 + N M_3 + M_2$ for each $a,b \in [k]$.

Let $a,b \in [k]$. Since $H_{\gwest}$ has weight exactly $L M_5 + L(N-1)M_4 + N M_3 + M_2$, is connected, and contains $\pwest{\mpwest}$, $\qwest{\mqwest}$, and $\wwest{\mwwest}$, it follows from Lemma~\ref{lem:g:veril}(\ref{lem:g:veril:ii}) that $\mpwest = \mqwest$ and that $H_{\gwest}$ contains only one selector edge, namely the $\mpwest$-selector incident on $\wwest{\mwwest}$. Consequently, $\wwest{\mwwest}$ is the only portal among $\wwest{1},\ldots,\wwest{L}$ that is in $H_{\gwest}$. Similar statements hold mutatis mutandis with respect to $H_{\geast}$. Since $H_{\Gab}$ contains exactly one join edge, it follows that $\mpwest = \mqwest = N-\mpeast+1 = N-\mqeast+1$. The construction of $\swest$ and $\seast$ implies that $\mwwest = \mweast$. Also, note that $H$ must contain the edge between $\fabi[a,b-1]{\mi[a,b-1]}$ and $\pwest{\mpwest}$ as well as the edge between $\fabi{\mi}$ and $\peast{\mpeast}$. The fact that $\mpwest = N-\mpeast+1$ implies that $\mi[a,b-1] = \mi$ by the definition of the fuse vertices. Hence, for each $a \in [k]$, it follows that $\mi[a,0] = \cdots = \mi[a,k]$. Set $x_a = \mi[a,0]$ for each $a \in [k]$.

Let $a \in [k-1], b \in [k]$. By the preceding paragraph, $\wwest{\mwwest}$ is the only portal among $\wwest{1},\ldots,\wwest{L}$ that is an element of $H_{\gwest}$, and the portal $\weast[a+1,b]{\mweast[a+1,b]}$ is the only one among $\weast[a+1,b]{1},\ldots,\weast[a+1,b]{L}$ that is in $H_{\geast[a+1,b]}$. This implies that $H_{\fab}$ has exactly two components. We previously established that $H_{\fab}$ has weight exactly $(2t-4) \cdot tM_7$. Then Theorem~\ref{thm:flower} implies that $H_{\fab}$ is canonical, meaning that the two portals of $H_{\fab}$ in $\fab$ are opposite. By the construction of $G_{\mathcal{M}}$, this implies that $\mwwest = \mweast[a+1,b]$. Recall that $\mwwest = \mweast$ and $\mwwest[a+1,b] = \mweast[a+1,b]$ was established in the previous paragraph. Hence, for each $b \in [k]$, it follows that $\mwwest[1,b] = \cdots = \mwwest[k,b] = \mweast[1,b] = \cdots = \mweast[k,b]$. Set $y_b = \mwwest[1,b]$ for each $b \in [k]$.

We claim that $x_1,\ldots,x_k,y_1,\ldots,y_k$ is a solution to the {\pGT} instance. Let $a,b\in [k]$. By definition, $\mwwest = \mweast = y_b$ and $\mi[a,b-1] = \mi = x_a$. Note that $\mpwest \in \lb (x_a-1)n+1, x_an \rb$ by the construction of $G_{\mathcal{M}}$ and by the fact that $\pwest{\mpwest}$ is the only $p$-portal of $\gwest$ in $H$. Since the $\mpwest$-selector incident on $\wwest{\mwwest} = \wwest{y_b}$ is in $H$, $\mpwest = (x_a-1)n+y_b$. The construction of $\gLVGS$ implies that $\mpwest \in \sswest{y_b}$. Then the construction of $\swest$, and specifically of $\sswest{y_b}$ implies that $(x_a,y_b) \in M_{a,b}$. The claim follows, and thus so does the lemma.
\end{proof}
The construction and the above lemmas immediately imply Theorem~\ref{thm:g:weighted}. The proof of Theorem~\ref{thm:g:bound} then quickly follows.

\begin{proof}[Proof of Theorem~\ref{thm:g:bound}]
Let $G_{\mathcal{M}}$ be the edge-weighted planar graph resulting from Theorem~\ref{thm:g:weighted}, with terminal set $T_{\mathcal{M}}$. Subdivide an edge $e$ of $G_{\mathcal{M}}$ of weight $w > 1$ exactly $w-1$ times, such that $e$ is replaced by a path of $w$ unit-weight edges. Call the resulting graph $\mathcal{G}_{\mathcal{M}}$.

The bound on the size is immediate from the fact that $G_{\mathcal{M}}$ has $O(k^2n^5)$ edges of weight $O(k^{14}n^{22})$ each. 

Note that there is a bijection between the faces of $\mathcal{G}_{\mathcal{M}}$ and of $G_{\mathcal{M}}$. Moreover, $T_{\mathcal{M}}$ is still present in $\mathcal{G}_{\mathcal{M}}$. Hence, the terminals of $T_{\mathcal{M}}$ can be covered by $k(k-1)+1$ faces of $\mathcal{G}_{\mathcal{M}}$.

The final property follows immediately from the subdivision of the edges in correspondence to their weights and from the corresponding property in Theorem~\ref{thm:g:weighted}.
\end{proof}


\section{Concluding Remarks}\label{sec:conc}
In this paper we gave an $2^{O(k)}n^{O(\sqrt{k})}$ time algorithm for \pPST, if the terminals are covered by $k$ faces, and showed this is almost optimal assuming the Exponential Time Hypothesis. The crucial idea in the algorithm was to study seperators in a graph with artificially added edges that enforce how connected components are divided. The crucial idea in the lower bound is the flower gadget that is a graph with all terminals on one face where an optimal forest consisting of two trees can divide the terminal set arbitrarily in two parts.

Several exciting questions remain. First, an interesting question is whether our techniques could inspire further progress in any of the studies that invoked the original algorithm of Erickson~\cite{DBLP:journals/mor/EricksonMV87}. For example in the mentioned approximation and kernelization algorithms~\cite{DBLP:conf/soda/BorradaileKK07,DBLP:conf/focs/PilipczukPSL14} the authors reduce the general {\pPST} to the case where terminals lie on one face. A natural direction to explore is to reduce the number of faces with terminals to more than one, and subsequently use the insights from this paper to aim for improved algorithms. It would also be interesting to see whether our techniques have consequences in the more geometric setting outlined by Provan~\cite{DBLP:journals/siamcomp/Provan88, DBLP:journals/networks/Provan88}.

Second, a natural question is whether the $2^{O(k)}$ term in our running time can be removed. This would significantly generalize the $n^{O(|\sqrt{T}|)}W$-time algorithm of~\cite{MarxPP17}.
A natural approach would be to combine our technique with the technique of~\cite{MarxPP17}, but it seems highly unclear in which graph one should consider separators.

\section*{Acknowledgments} We thank D\'aniel Marx, Marcin Pilipczuk and Micha\l{} Pilipczuk for allowing us to use their figures.

\bibliographystyle{abbrv}
\bibliography{refs}

\end{document}